\newcommand{\backvec}[1]{\reflectbox{$\vec{\reflectbox{\!$#1$}}$}}
\newcommand{\ch}[1]{\mathcal{#1}}
\newcommand{\bs}[1]{\boldsymbol{#1}}
\newcommand{\sch}[1]{\boldsymbol{\mathcal{#1}}}
\newcommand{\etal}{\textit{et al}.~}
\renewcommand{\dag}[1]{{#1}^{\dagger}}
\newcommand{\change}[1]{\textcolor{black}{#1}}
\newcommand{\new}[1]{\textcolor{black}{#1}}
\newtheorem{theorem}{Theorem}
\newtheorem{proposition}{Proposition}
\begin{document}

\title{Purification and correction of quantum channels by commutation-derived\\ quantum filters}

\author{Sowmitra Das}
\email{sowmitra.das.sumit@gmail.com}
\affiliation{Blackett Laboratory, Imperial College London, London SW7 2AZ, United Kingdom}
\affiliation{School of Data and Sciences, BRAC University, Dhaka 1212, Bangladesh}

\author{Jinzhao Sun}
\email{jinzhao.sun.phys@gmail.com}
\affiliation{Blackett Laboratory, Imperial College London, London SW7 2AZ, United Kingdom}
\affiliation{School of Physical and Chemical Sciences, Queen Mary University of London, London, E1 4NS, United Kingdom}

\author{Michael Hanks}
\affiliation{Blackett Laboratory, Imperial College London, London SW7 2AZ, United Kingdom}

\author{B\'alint Koczor}
\affiliation{Mathematical Institute, University of Oxford, Oxford OX2 6GG, United Kingdom}

\author{M. S. Kim}
\affiliation{Blackett Laboratory, Imperial College London, London SW7 2AZ, United Kingdom} 

\begin{abstract}




Reducing the effect of errors is essential for reliable quantum computation. Quantum error mitigation (QEM) and quantum error correction (QEC)  are two leading frameworks for addressing this task, each with its respective challenges: sampling costs and inability to recover the state for QEM, and qubit and gate overheads for QEC. In this work, we combine ideas from these two frameworks and introduce an information-theoretic machinery called a quantum filter that can purify or correct quantum channels. We provide an explicit construction of the filter that can correct arbitrary types of noise in an $n$-qubit Clifford circuit using $2n$ ancillary qubits based on a commutation-derived error detection circuit. This filtering scheme can also partially purify noise in non-Clifford gates, such as T and CCZ gates. In contrast to QEC, this scheme works in an error-reduction sense because it does not require prior encoding of the input state into a QEC code and requires only a single instance of the target channel. Under the assumptions of clean ancillary qubits, this scheme overcomes the exponential sampling overhead in QEM because it can deterministically correct the error channel without discarding any result. The filter construction also provides a concrete protocol to implement a large class of quantum superchannels. We further propose an ancilla-efficient Pauli filter which proves to remove nearly all the low-weight erroneous Pauli components in a noisy Clifford circuit using only two ancillas. We prove that for local depolarising noise, this filter can achieve a quadratic reduction in the average infidelity of the channel. The Pauli filter can also be used to convert an unbiased error channel into a completely biased error channel and thus is compatible with biased-noise QEC codes which have high code capacity. \change{Beyond existing QEM works, it provides a systematic way to correct errors as the infidelity can be exponentially reduced with the number of ancillary qubits, using Clifford gates only.} \new{We numerically test the correction and purification capability of quantum filters under ancilla noise and identify the noise-rate regimes in which it outperforms other methods.}
Our results demonstrate the effectiveness of the quantum filter as an error-reduction technique in quantum information processing. 


\end{abstract}

\date{\today}

\maketitle

\section{Introduction}
\label{section:introduction}

A fundamental obstacle on the road to large-scale quantum computation is protecting quantum information from errors. The theoretical framework proposed for this purpose is quantum error correction (QEC). In QEC, the state of a single logical qubit is encoded into the larger Hilbert space of multiple physical qubits. The errors act on the larger space in such a way that appropriate measurements can determine the nature of the error and subsequent correction operations can revert the system back to its initial state \cite{lidar2013quantum, nielsen2010quantum}. 
However, the implementation of a fully error-corrected quantum computer imposes lofty requirements on the qubit number and the average fidelity of each gate operation. Quantum error mitigation (QEM) has been proposed as an intermediate alternative to QEC, which mitigates errors in expectation values by post-processing the results of noisy circuit-runs~\cite{temme_error_2017, endo_practical_2018,sun2021mitigating,huggins_virtual_2021,koczor_exponential_2021,bonet2018low, mcardle2019error,yoshioka_generalized_2022,strikis_learning-based_2021, czarnik2021error,cai_quantum_2022,endo_hybrid_2021}. As a consequence, QEM cannot recover the noiseless state, in contrast to QEC.
More importantly, investigations invoking the data-processing inequality have shown that exponentially many samples (in circuit width and depth) are required to mitigate errors under typical noise models \cite{takagi2022fundamental, takagi2023universal, tsubouchi2023universal, quek2024exponentially}, thus defeating the advantage of quantum computing in the first place. This indicates that purely relying on post-processing with no additional qubits would not be adequate to scale these methods up. We would need to use controlled amounts of quantum resources and intermediate forms of error-correction to efficiently achieve error suppression without an exponential sampling overhead, which is the major objective of this work.


\new{Since a primary drawback of the first-generation QEM protocols is their exponential sampling overhead, a number of recent works have focused on circumventing this limitation.
One strategy is to combine the ideas of QEM and QEC \cite{lostaglio2021error, piveteau2021error,suzuki2022quantum}, mostly focusing on gate purification using probabilistic error cancellation (PEC). 
However, since these methods apply QEM to a sub-routine of the QEC pipeline without changing the underlying structure of the QEM protocol, they still suffer from the exponential sampling overhead.
Another route to overcome the limitation is to use additional quantum resources, in the form of clean ancillas. The use of ancillas is motivated by the fact that in the transition from noisy intermediate-scale quantum (NISQ) to early fault-tolerant devices, inhomogeneities in fabrication, environment and control latency naturally give rise to asymmetries in noise at both the physical and  logical levels. Such hardware enables us to create and couple qubits with vastly different error-rates \cite{araki2025space, strikis2023quantum, bultrini2023battle, koukoulekidis2023framework, breuckmann2017hyperbolic}, some of which can be considered much cleaner than others. Also, different physical platforms have different advantages and disadvantages in scalability, noise levels and gate speeds. There have been discussions to combine different physical platforms for the best use of their respective advantages, for instance, a hybrid between a superconducting system for scalability and fast gate operations and an ion system for cleaner gate operations.
Meanwhile, in the early fault-tolerant regime, assumptions about noise homogeneity break down not only at the physical level, but also at the logical level. The majority of logical qubits have limited code distance and remain susceptible to such errors, while a small number of higher-distance logical qubits serve as relatively cleaner ancillas.} 

\new{A promising class of protocols that take advantage of distributed and inhomogeneous quantum architectures can be broadly classified as Ancilla-assisted Quantum Error Mitigation (AQEM). Compared to earlier versions of QEM, they make use of clean ancilla qubits to coherently control multiple noisy instances of a target state or channel, which, if the protocol is successful, results in a less noisy version of the target. Protocols pertaining to this model are virtual channel purification \cite{liu2024virtual}, virtual state purification \cite{koczor_exponential_2021, childs2023streaming,huggins_virtual_2021}, superposed QEM \cite{miguel2023superposed}, and black-box error suppression \cite{lee2023error}. Nonetheless, these proposals do not address the fundamental cost issue and suffer from their own drawbacks. Firstly, they operate in virtual or post-selection mode, requiring multiple runs to succeed or obtain enough samples for averaging. They offer no mechanism to \emph{correct} for the errors detected in individual runs. Secondly, all these protocols exclusively use the (multiple-)controlled-\textsc{swap} gate to perform the purification, which is a multi-qubit non-Clifford gate and thus hard to implement. 
Thirdly, they require coherent queries to multiple instances (at least two instances \cite{miguel2023superposed,lee2023error,liu2024virtual,yang2024quantum}) of the noisy circuit, with the amount of purification achieved depending on the number of queries, which is challenging for coherently storing and controlling.}




\new{In this work, we design an error-reduction protocol that overcomes the exponential sampling overhead of QEM by providing a way to correct a noisy channel using a single query to the channel. Our approach relies solely on simple Clifford operations and does not require explicitly encoding the input state in a QEC code. Under the assumption of perfect ancillary qubits, this exponentially improves the output fidelity of the channel with each added ancilla. In doing so, we overcome the challenges set up by the limitations of QEM and the intermediate forms between QEM and QEC mentioned above.}

\textbf{Overview of main results.}
We introduce an information-theoretic machinery called a {quantum filter}. {Formally, a quantum filter provides a general way to implement a class of quantum superchannels aimed at removing sub-components of a target channel.} As an application of this formalism, we propose a method to detect or correct errors using a single query to the circuit and simple controlled operations (specifically, controlled Pauli operations). We first introduce the idea of the {commutation-derived filter}, which can probe a channel by exploiting the commutation structure of the channel components, and {filter out} unwanted components. Then, we extend this construction to correct the noise component instead of just discarding it, so that we can deterministically recover the noiseless state. To do so, we first formulate the correction scheme for a stochastic Pauli channel, and extend it to purify arbitrary quantum channels given the prior information about the commutation structure. 

\change{This formalism is then generalised by drawing inspiration from the linear-combination-of-unitaries (LCU) circuits~\cite{childs2012hamiltonian,sun2024high}. As an application of the generalization, we construct a filter that can completely purify Clifford channels. We also show that the filter can partially purify non-Clifford gates, such as the T-gate and the CCZ-gate, towards $Z$-biased noise. In general, even in scenarios where we cannot correct the entire channel, we show that we can bias the errors towards a certain axis (either $X$, $Y$ or $Z$) using a Pauli filter. \new{Therefore, this protocol can restore the bias in a circuit composed with biased-noise in mind, e.g, biased-noise QEC codes which usually have high code capacity \cite{tuckett2018ultrahigh,tuckett2019tailoring,xu2023tailored}.} In addition, if the noise is fully biased, our scheme could completely remove the noise in general quantum circuits. The correction capability of the filter allows us to exponentially increase the output fidelity with the number of ancillas with at most $2n$ clean qubits. 
Thus, the filter overcomes the exponential sampling overhead of QEM, and does this using a single query to the noisy circuit with only Clifford resources - features that no contemporary QEM work~\cite{lee2023error,yang2024quantum, liu2024virtual, miguel2023superposed} can achieve simultaneously.  Although all these recent QEM schemes use some number of extra qubits, the question of how to best use these extra qubits in general and how the output accuracy scales with them has largely remained unexplored. Our work thus addresses this gap by providing a scalable and tunable strategy to reduce errors based on the number of extra qubits that are available.}


For near-term applications, it is desirable to introduce fewer resources than those required to correct the entire channel. We provide the construction of an ancilla-efficient Pauli filter, which operates in the error-detection mode using only two ancillas and can remove all the weight-1 Pauli errors in a noisy Clifford channel. 
{To remove $k$-local noise, the number of ancillas is shown to be lower bounded by $k$.}
We show that this ancilla-efficient Pauli filter can additionally remove a large portion of the other low-weight Pauli components of the error channel. Specifically, the average weight of all Paulis that can be removed is $n/2$ asymptotically (Theorem 2), which gives a theoretical guarantee for the filtration of low-weight Pauli errors.
We prove that for a local depolarising noise model, for example, the ancilla-efficient  Pauli filter can achieve a quadratic reduction in the infidelity of the channel (Theorem 1). This exceeds the result in \cite{lee2023error} as it requires only a single query to the original circuit, albeit for the restricted class of circuits and noise. \new{Comparison of the qubit/gate overheads of our method can be found in \autoref{table:resource-comparison}, which shows improvements compared to state-of-the-art AQEM protocols.}



\new{We validate the theoretical analysis of the correction filter and the ancilla-efficient filter with numerical results, where ancillary qubit noise is considered as well. We also elucidate similarities and key differences with small-scale QEC methods, including  flag QEC codes~\cite{chao2018quantum,chao2020flag,chamberland2018flag}
and the Iceberg code~\cite{self2024protecting}. As an example, we compare our method with flag QEC, and find that our method shows advantages in moderate to high gate-noise regimes. We conclude with discussions about the cost of implementing our method practically, analysis about the assumptions made in our protocol, and an outline of possible hardware architectures in which our protocol may be implemented.}

Our results demonstrate the versatility of the quantum filter as an error-reduction method: one can use it either in error detection or correction mode, tune the amount of purification based on the number of ancillary qubits available, select which gates to purify and which noise components to remove. Thus, flexibility in configuring a quantum filter's capabilities is shown depending on the amount of resources at hand.

\section{Quantum Channel Filtration}
\label{section:quantum-channel-filtration}

 

Here we introduce the concept of the quantum filter, restricting our examples to simple storage/communication channels. The main idea of quantum channel filtration is that we can probe an erroneous quantum channel by using an error-free operation that obeys a certain commutation relation with respect to the ideal channel. The probing operation, if chosen properly,  separates the error in the original channel into sub-channels.
Depending on the measurement outcome, the computation is then either aborted or continued.
This is reminiscent of filtering communication channels by their frequency components in classical communication theory, which motivates us to name this scheme a {quantum filter}
({this is different from a similarly named scheme introduced in \cite{lee2023error}). We show the basic principles of this construction in \autoref{sec:commutation_filter}}.

As a single probe may not be sufficient to filter out all the errors, we show how to
successively concatenate multiple probes to achieve higher purification of the input channel in
\autoref{section:successive-filtration}. Finally, we show in \autoref{sec:quantum-channel-correction} that
instead of discarding the erroneous channels, it is possible to correct the errors based on the outcomes
of the probes.  Given the present formalism depends only on the quantum channel to be purified---and has no
dependence on the input state---we refer to this framework as {quantum channel filtration} and {quantum channel correction}. 

\subsection{The commutation-derived quantum filter}
\label{sec:commutation_filter}

{First, we develop the machinery of commutation structures that we are going to exploit to detect and filter out errors in a quantum channel.}\\

\paragraph{Commutation-derived error detection circuit}

Suppose our aim is to ideally implement a unitary $U$ but in a way that we can detect if certain errors have occurred during execution (but without prying open $U$, or {measuring} the output state).
To do this, suppose we have access to a unitary $S$ which commutes with $U$, i.e., $[S, U] = 0$.
We then use the following circuit to probe $U$ using $S$ as
\begin{center}
\begin{tikzpicture}
\begin{yquant}
qubit {$\ket{0}$} q;
qubit {$\ket{\psi}$} r;
h q;
box {$S$} r|q;
[shape=yquant-rectangle, rounded corners=.3em, fill=red!20]box {$U$} r; 
box {$S^{\dagger}$} r|q ; 
h q;
[label=$\ket{{\tt out}}$]barrier (q, r);
measure q;
discard q; 
\end{yquant}
\end{tikzpicture}
\end{center}
This type of circuit is used for error-detection in coherent parity check codes~\cite{roffe2018protecting,van2023single}. It is straightforward to see that the output state $\ket{\tt out}$ of the circuit (before the measurement) is:
\begin{equation}\label{eqn:vanilla_filter_output}
    \ket{\tt out} = \frac{1}{2}\ket{0}\otimes (U + S^{\dagger}U S)\ket{\psi} + \frac{1}{2}\ket{1}\otimes (U - S^{\dagger}U S)\ket{\psi}
\end{equation}
If $U$ commutes with $S$, the output state will be
\begin{equation}
    \ket{\tt out} = \ket{0}\otimes U\ket{\psi}.
\end{equation}
Hence,  we always get the outcome $\ket{0}$ when measuring the ancilla. If an outcome of $\ket{1}$ is incidentally obtained, it is certain that an error has occurred in $U$ and we can discard the circuit-run to avoid the error.
\\

Upon taking a closer look at  \autoref{eqn:vanilla_filter_output}, we can see that, based on the measurement outcome of the ancilla,
the qubits in the state $\ket{\psi}$ collapse to the following states as
\begin{align*}
    \ket{0}: (U + \dag{S}US)\ket{\psi}, \qquad  
    \ket{1}: (U - \dag{S}US)\ket{\psi}
\end{align*}

Now, if we assume that the unitary $U$ has the property that, it either commutes or anti-commutes with another unitary $P$, 
$ 
    PU = \pm UP
$
then, if we plug in $P$ in place of $S$, the output of the circuit is
\begin{equation}
    \ket{\tt out} = \begin{cases}
        \ket{0}\otimes U\ket{\psi} & \text{if } U\; \text{commutes with}\; P\\
        \ket{1}\otimes U\ket{\psi} & \text{if } U\; \text{anti-commutes with}\; P
    \end{cases}
\end{equation}
Therefore, the measurement result on the ancilla reveals the commutation structure of $U$
with respect to $P$, which can be used to detect errors in the channel. \\ 

\paragraph{Filtering Quantum Channels} Now, we generalize the {above error detection circuit further to a quantum filter}. Consider the stochastic channel,
$ \mathcal{E} = \sum_i p_i  \mathcal{U}_i$
where $p_i$ are probabilities and $\mathcal{U}_i$ are unitary channels
that act by conjugation as $\mathcal{U}_i(\rho) = U_i\rho U_i^{\dagger}$. Applying our error detection circuit 
in to this channel yields
\begin{center}
\begin{tikzpicture}
\begin{yquant}
qubit {$\ket{0}$} q;
qubit {} r;
h q;
box {$P$} r|q;
[shape=yquant-rectangle, rounded corners=.3em, fill=red!20]box {$\mathcal{E}$} r; 
box {$P^{\dagger}$} r|q ; 
h q;
measure q;
discard q; 
\end{yquant}
\end{tikzpicture}
\end{center}

Suppose that there exists a $P$, such that it either commutes or anti-commutes with the unitaries $U_i$
as
\begin{equation}
    PU_i = \pm U_iP \qquad {\rm for\; all\;} i.
\end{equation}
This condition divides the components of $\mathcal{E}$ into 2 disjoint subsets.
Let us denote these components as $\{U_j^+\}$ for those commuting with $P$ and $\{U_k^-\}$ for those anti-commuting with $P$  with corresponding probabilities $\{p_j^+\}$ and $\{p_k^-\}$. Then, conditioned on the measurement result, the channel components $U_j^+$ and $U_k^-$ are transformed to
\begin{align*}
    \ket{0} &: U_j^+ \mapsto U_j^+, \qquad U_k^- \mapsto 0\\
    \ket{1} &: U_j^+ \mapsto 0, \quad\qquad U_k^- \mapsto U_k^-
\end{align*}
Thus, the entire channel $\mathcal{E}$ is transformed to 
\begin{align}
    \ket{0} : \mathcal{E} \mapsto  \mathcal{E^+} = \frac{1}{p^+}\sum_j p_j^+ \cdot \mathcal{U}_j^+  \nonumber\\
    \ket{1} : \mathcal{E} \mapsto  \mathcal{E^-} = \frac{1}{p^-}\sum_k p_k^- \cdot \mathcal{U}_k^- \nonumber
\end{align}
where $p^+ = \sum_j p_j^+$, and $p^- = \sum_k p_k^-$, and the probability of the two outcomes is denoted as $ 
    p(\ket{0}) = p^+$ and $p(\ket{1}) = p^-
$.
As such, post-selecting on the outcome $\ket{0}$  ($\ket{1}$) of the ancilla qubit
removes all components of the channel $\mathcal{E}$ that anti-commute (commute) with $P$.
Given this circuit selects components of a channel based on their commutation structure with respect to a unitary $P$,
we will refer to it as a {commutation filter}. More generally, we call any circuit which can filter out certain components of a noise channel a quantum filter. 
Furthermore, as the present quantum filter maps one quantum channel ($\mathcal{E}$) to another ($ \mathcal{E^+}$ or $ \mathcal{E^-}$), it is formally a quantum superchannel \cite{chiribella2008transforming}. 
{A description of quantum filters using the language superchannels is discussed in \autoref{section:supermap}.}\\

\paragraph{Generalization} The commutation filter can be generalized to arbitrary channels $\mathcal{E}$ and more general operators $P$. We first provide the following proposition. 

\begin{proposition}
\label{theorem:commutation}
    For an arbitrary operator $A$ and unitary operator $F$ such that, $F^2$ commutes with $A$,
    $ 
        [F^2, A] = 0
    $
    we can write $A$ as the sum of two components, 
    $ 
        A = A_+ + A_-
    $
    such that, 
    $ 
        FA_{\pm} = \pm A_{\pm}F$
    i.e, one of them commutes with $F$ and the other anti-commutes with $F$. 
\end{proposition}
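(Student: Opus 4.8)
The plan is to recognize that the hypothesis $[F^2, A] = 0$ is precisely the condition making conjugation by $F$ an involution when applied to $A$, and then to split $A$ into the $\pm 1$ eigenspaces of that involution via the standard projector formula. Concretely, I would introduce the conjugation superoperator $\mathcal{F}(X) = F X F^{\dagger}$ (using $F^{-1} = F^{\dagger}$ since $F$ is unitary) and define
\begin{equation}
    A_{\pm} = \tfrac{1}{2}\left(A \pm F A F^{\dagger}\right),
\end{equation}
so that $A_+ + A_- = A$ holds by inspection, giving the additive decomposition immediately.

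The main work is then to verify the (anti-)commutation relations, which I would carry out by direct computation. The crucial step is observing that $\mathcal{F}^2(A) = F^2 A (F^{\dagger})^2 = A$, and this is exactly where the hypothesis enters: $[F^2,A]=0$ gives $F^2 A = A F^2$, while unitarity gives $(F^{\dagger})^2 = (F^2)^{-1}$, so the two factors cancel. Using this, one finds $F A_{\pm} = \tfrac{1}{2}\left(FA \pm F^2 A F^{\dagger}\right)$, and substituting $F^2 A F^{\dagger} = A F^2 F^{\dagger} = A F$ yields $F A_{\pm} = \tfrac{1}{2}\left(FA \pm AF\right)$. Comparing with $A_{\pm} F = \tfrac{1}{2}\left(AF \pm FA\right)$, we conclude $F A_{\pm} = \pm A_{\pm} F$, as required.

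I expect no genuine obstacle beyond careful bookkeeping: the conceptual content lies entirely in identifying $\mathcal{F}$ as an involution on $A$, after which the construction mirrors the familiar splitting of an operator into its symmetric and antisymmetric parts. It is worth remarking that the decomposition is in fact unique, which sharpens the statement: for any splitting $A = B_+ + B_-$ with $F B_{\pm} = \pm B_{\pm} F$, conjugation gives $F A F^{\dagger} = B_+ - B_-$, which together with $A = B_+ + B_-$ forces $B_{\pm} = A_{\pm}$. This uniqueness is the natural analogue, at the operator level, of the disjoint commute/anti-commute partition used earlier in the stochastic-channel setting.
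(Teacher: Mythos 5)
Your proposal is correct and uses exactly the same construction as the paper: defining $A_{\pm} = \tfrac{1}{2}\left(A \pm F A F^{\dagger}\right)$ and verifying the relations $FA_{\pm} = \pm A_{\pm}F$ by direct computation using $[F^2, A] = 0$. Your added uniqueness remark (that any decomposition into commuting and anti-commuting parts must coincide with this one) goes slightly beyond the paper's proof and is a valid, worthwhile observation.
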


\begin{proof}
    We provide a proof by construction. Consider the operators, 
    \begin{equation}
        A_{\pm} := (A\pm FAF^{\dagger})/2
    \end{equation}
    {It is easy to check that $A_+ + A_- = A$. Plugging these expressions into $FA_{\pm}$, and using $F^2A = AF^2$, one can verify the equalities $FA_{\pm} = \pm A_{\pm}F$, which concludes our proof.}
\end{proof}
 
Let us assume an arbitrary channel $\ch{E}$ in terms of its Kraus operators $E_i$
as $ \ch{E}(\rho) = \sum_iE_i\rho E_i^{\dagger}$, such that an operator $F$ exists
whose square $F^2$ commutes with all Kraus operators $E_i$ as 
\begin{equation}\label{eqn:kraus_commutation}
    [F^2, E_i] = 0 \qquad \forall \, i.
\end{equation}
Due to the above proposition, it then follows that
all Kraus operators $E_i$ indeed decompose as a sum of commuting and anti-commuting components as
$ 
    E_i = E_{i+} + E_{i-}.
$
Plugging the channel $\ch{E}$ into a commutation filter via $P = F$, 
the filter selects either the commuting or anti-commuting components
depending on the outcome of the ancilla measurement as
$
    \ket{0}: \ch{E} \mapsto \ch{E}^+ = \frac{1}{e^+}\sum_i E_{i+}(\cdot)E_{i+}^{\dagger}$, $
    \ket{1}: \ch{E} \mapsto \ch{E}^- = \frac{1}{e^-}\sum_i E_{i-}(\cdot)E_{i-}^{\dagger}
$
where $e^{\pm}$ are normalization factors, 
$
    e^{\pm} = \sum_i \Tr[E_{i\pm}(\cdot)E_{i\pm}^{\dagger}].
 $
A subset of all admissible operators $F$ are involution operators
whose square is by definition the identity operator as $F^2 = I$.
Well-known examples of such operators are Pauli strings and \textsc{swap} operators, which has been discussed in \cite{liu2024virtual} in a similar context. For involution operators, \autoref{eqn:kraus_commutation} holds trivially.

\subsection{Channel purification by successive filtration}
\label{section:successive-filtration}

{We have shown that given an appropriately chosen probing operator $P$, the commutation filter can remove certain erroneous components of an input channel. Usually, a single filtration step will not be able to remove all the errors. Here, we show how to concatenate successive filtration steps to potentially remove all the error components in the channel.}

We return to our first example whereby we aim to implement a unitary channel 
$\ch{U}_0(\rho) = U_0\rho U_0^{\dagger} $
and assume a noise model such that the actual channel is $\ch{U_E}(\rho) = \sum_{i=0}^m p_i \cdot \ch{U}_i$.
Here the index value $i \geq 1$ indicates unitary $U_i$ as the noise-component.
Now, if we can find an operator $F$ which commutes with $U_0$ and anti-commutes with all other $U_i$'s, we can filter out $U_0$ by plugging $\ch{E}$ into a commutation filter with $P=F$. The probability of successful filtration will be $p_s = p_0$.

In general, a single $F$ which commutes with $U_0$ and anti-commutes with all other $U_i$ might not exist, or might be difficult to construct. 
{Nonetheless, this requirement could be relaxed. For example, suppose that we can construct a set of multiple operators $\{F_i\}$ such that
only $U_0$ commutes with all of them, and all the other $U_i$'s anti-commute with at least one $F_i$.}
Then, we can purify the channel $\ch{U_E}(\rho)$ by successively filtering it using $F_{i \geq 1}$.
If all the rounds are successful, the component $U_0$ will survive all the filters, and all other components
will be removed by at least one filter. The probability that all the rounds are successful is still $p_0$. 

\paragraph{Noiseless Channel from a Noisy Pauli Channel}
\label{exm:single-qubit-example}
 
Consider the (single-qubit) stochastic Pauli-channel
which would ideally ($p_I = 1$) implement the identity operation  as
$ 
    \ch{E}(\rho) = p_I\cdot\rho + p_X\cdot X\rho X + p_Y\cdot Y\rho Y + p_Z\cdot Z\rho Z,
$
where the channel components $I, X, Y$ and  $Z$ are Pauli operators and $p_I + p_X + p_Y + p_Z = 1$.
Consider the commutation filters $\sch{F}_Z, \sch{F}_X$ using the Pauli operators $P \in \{Z, X\}$ implemented one after the other 
as in \autoref{fig:successive-filtration}.
If we filter $\ch{E}$ first using $Z$, and then using $X$ (post-selecting on the measurement 0 each time), the components that survive both steps are
\begin{equation}
    \{I, X, Y, Z\} \xrightarrow[\ket{0}]{\quad\sch{F}_Z\quad} \{I, Z\} \xrightarrow[\ket{0}]{\quad\sch{F}_X\quad} \{I\}.
\end{equation}

The probability of success of the first round is  $ p_{s1} = p_I {+} p_Z$
while the probability of success of the second round, given that the first round was successful, is 
$p_{s2|s1} = {p_I}/(p_I + p_Z)$.
Indeed, the probability that the entire filtration protocol is successful is given by 
\begin{equation}
    p_s = p_{s1} \cdot p_{s1|s2} = p_I.
\end{equation}


\begin{figure}
    \begin{center}
    \begin{tikzpicture}[scale=0.9]
    \begin{yquant}
    qubit {$\ket{0}$} q1;
    qubit {} q2;
    discard q2; 
    qubit {} r;
    h q1;
    box {$X$} r|q1;
    
    init {$\ket{0}$} q2; 
    h q2; 
    box {$Z$} r|q2; 
    [shape=yquant-rectangle, rounded corners=.3em, fill=red!20]box {$\mathcal{E}$} r; 
    
    box {$Z$} r|q2 ; 
    h q2;
    measure q2;
    init {$\bra{0}$} q2; 
    discard q2; 
    
    box {$X$} r|q1;
    h q1; 
    measure q1;
    output {$\bra{0}$} q1;
    \end{yquant}
    \end{tikzpicture}
    \end{center}
    \caption{Successive filtration scheme for purifying a single-qubit error channel.}
    \label{fig:successive-filtration}
\end{figure}
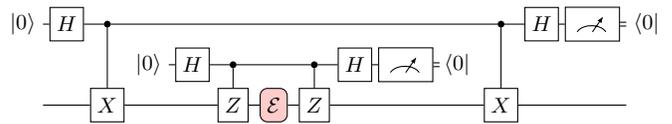

As can be seen, we can probabilistically purify a single-qubit noisy Pauli channel using a $Z$-Filter and an $X$-Filter in sequence. This can be easily generalized to an arbitrary number of qubits. To purify an $n$-qubit Pauli channel using Successive Filtration, we need $2n$ ancillary qubits, $n$ $Z$-Filters and $n$ $X$-Filters. 

\subsection{Quantum channel correction}\label{sec:quantum-channel-correction}

{In the previous section, we have shown that a channel can be purified by successively discarding erroneous components at each filtration step. The part of the channel that survives all the steps has all the corresponding errors removed. Whether a certain step will be successful depends on the probabilistic outcome of the probe. Here we show that, by slight modification of the above scheme, instead of probabilistically purifying a quantum channel, we can deterministically {correct} the channel. This stems from the observation that if a round is unsuccessful, we can still use the information
	from the measurement outcomes to correct for the corresponding error components. }

Suppose that in example \ref{exm:single-qubit-example}, the first round with the $Z$-filter is unsuccessful, i.e, the outcome of the first ancilla is $\ket{1}$. Then, the components of the channel which anti-commute with $Z$ survive resulting in the mapping
\begin{equation}
    \{I, X, Y, Z\} \xrightarrow[\ket{1}]{\quad \sch{F}_Z\quad} \{X, Y\}
\end{equation}
In this case, the channel $\ch{E}$ above is transformed to 
\begin{equation}
    \ch{E}_1= \frac{1}{p_X + p_Y}\Big(p_X X(\cdot)X + p_Y Y(\cdot)Y\Big),
\end{equation}
and thus applying the Pauli operator $X$ converts all components of this channel to ones that commute with $Z$
as
$ 
    X\cdot\{X, Y\} \rightarrow \{I, Z\},
$
thereby {deterministically} converting all channel components to $\{I, Z\}$. Now, if we repeat this procedure for the $X$-filter, and apply the $Z$ gate conditioned on the ancilla of this filter returning 1, we can deterministically convert both components to $I$. As a result, at the end of both rounds, the channel will be {corrected} to the identity channel. 
The circuit that implements this scheme of {quantum channel correction} is shown in \autoref{fig:channel-correction}.

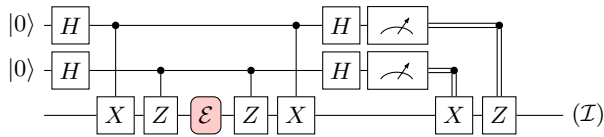
\begin{figure}[tb]
    \begin{center}
    \begin{tikzpicture}
    \begin{yquant}
    qubit {$\ket{0}$} q1;
    qubit {$\ket{0}$} q2;
    qubit {} r;
    h q1, q2;
    box {$X$} r|q1; 
    box {$Z$} r|q2; 
    [shape=yquant-rectangle, rounded corners=.3em, fill=red!20]box {$\mathcal{E}$} r; 
    
    box {$Z$} r|q2 ;  
    box {$X$} r|q1;
    h q1, q2;
    measure q1;
    measure q2;
    
    box {$X$} r|q2;
    box {$Z$} r|q1;
    discard q1; 
    discard q2;
    
    hspace {0.5cm} r;
    output {$(\ch{I})$} r;
    \end{yquant}
    \end{tikzpicture}
    \end{center}
    \caption{An example of a Quantum Filter for channel correction.}
    \label{fig:channel-correction}
\end{figure}

Although we have outlined the scheme for Pauli errors on a single-qubit, this is sufficient to correct any arbitrary errors on a qubit by a standard result in the theory of quantum error correction \cite{shor1995scheme}. We provide proof of this result for the concatenated Pauli correction filter of \autoref{fig:channel-correction} in Appendix \ref{app:proof-channel-correction}. 
A consequence of this result is that, although many QEM schemes are limited to correcting stochastic errors \cite{liu2024virtual, lee2023error} or known types of noise (which is required by PEC \cite{temme_error_2017}), our quantum channel correction scheme works for stochastic Pauli errors {and more general error models that may include coherent errors}. 
As our result applies to arbitrary single-qubit error channel correction, it can be easily seen that any $n$-qubit error channel can be corrected by running this scheme in parallel on each qubit using $2n$ ancillary qubits. 

\section{Construction of a quantum filter}

In this section, we extend the description beyond simple communication channels, providing an explicit construction to correct for noise in arbitrary Clifford circuits. To do so, we first develop a more general construction of the filter circuit. We formulate the description of a quantum filter in the language of quantum superchannels.

Briefly, a quantum superchannel $\sch{F}$ (written in bold calligraphic) is a map that takes quantum channels to quantum channels. A superchannel $\sch{F}$ is described by a set of {super-Kraus} elements $\{\bs{F}_j\}_j$, such that, if $\sch{F}$ acts on a quantum channel $\ch{E}$ with Kraus operators $\{E_k\}_k$, the Kraus operators of the transformed channel $\sch{F}[\ch{E}]$ are 
\begin{equation}
    \sch{F}[\ch{E}] : \{E_k\}_k \longrightarrow \{\bs{F}_j[E_k]\}_{j, k}
\end{equation}
Each of the $\bs{F}_j$'s are linear maps on the space of operators $E_k$. If $\ch{E}$ acts on a state $\rho$ as 
$ 
    \ch{E}(\rho) = \sum_k E_k\rho E_k^{\dagger},
$
the transformed channel $\sch{F}[\ch{E}]$ acts on $\rho$ according to
\begin{equation}\label{eq:superchannel}
    \sch{F}[\ch{E}](\rho) = \sum_j \sum_k \bs{F}_j[E_k] \rho \bs{F}_j[E_k]^{\dagger}.
\end{equation}
If the set $\{\bs{F}_j\}$ contains a single element, we call $\sch{F}$ a {unitary} superchannel. Readers interested in this formalism may consult \cite{chiribella2008transforming, chiribella2008quantum, daly2022axiomatic} for further details. 

\new{In the sections below, we cast the operation of our quantum filter in this superchannel formalism. This is a natural setting for our scheme, as the quantum filter directly targets a noise channel, rather than the input state per se. While \autoref{eq:superchannel} denotes the most general way in which a superchannel can be expressed, the filter construction represents a special class of superchannels that can be used to remove error components from a target-channel $\ch{E}$, by manipulating their commutation structure.}

\subsection{Quantum filters as superchannels}
\label{section:supermap}
First, we focus on the circuit of the basic quantum filter shown in \autoref{fig:supermap}.




\begin{figure}[!h]
    \centering
    \includegraphics[width=0.75\columnwidth]{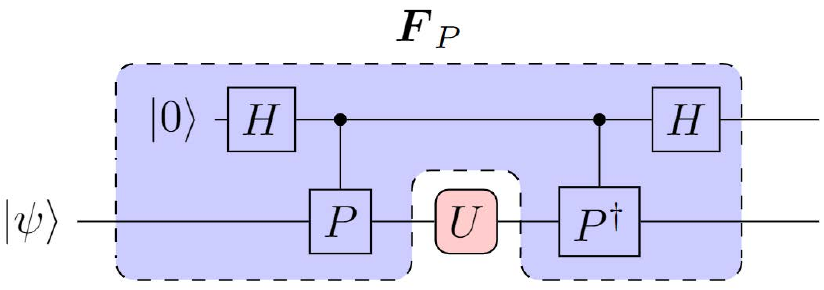}
    \caption{Quantum filter outlining the filter supermap. The output is given by \autoref{eqn:vanilla_filter_output}.
   { If  the filtering
qubits are discarded, this will result in a non-unitary quantum filter. We denote the corresponding superchannel by $\widetilde{\sch{F}}$.}
    }
    \label{fig:supermap}
\end{figure}
 
Consider the part of the circuit outlined in blue, which we denote by $\bs{F}_{P}$. This can be thought of as a circuit element that takes the unitary $U$ as input, and produces an isometry that takes the system qubits to the combined system-ancilla state. This map is described as
\begin{align}
    &\boldsymbol{F}_{P}[U]: \nonumber\\
    &U \mapsto \frac{1}{2}\ket{0}\otimes (U + \dag{P}UP) + \frac{1}{2}\ket{1}\otimes (U - \dag{P}UP)\label{eqn:filter-supermap}
\end{align}
In this form, instead of thinking about how the input state is transformed, we focus on how the filter transforms the unitary $U$ (or the channel $\ch{U} = U(\cdot)U^{\dagger}$). We call the associated map $\sch{F}_P$, that transforms the channel $\ch{U}$ to $\sch{F}_p[\ch{U}]$, the {Filter Superchannel} \cite{chiribella2008transforming}.
\begin{equation}\label{eqn:unitary-filter}
    \sch{F}_p[\ch{U}](\rho) = \bs{F}_p[U]\rho\bs{F}_p[U]^{\dagger}
\end{equation}
Since $\sch{F}_{P}$ is defined by a single map $\bs{F}_P$, it is a unitary supermap \cite{daly2022axiomatic}. In this language, $\bs{F}_P$ is a super-Kraus element of the superchannel $\sch{F}_P$.

\subsubsection{Symmetric (Unitary) Pauli Filters}
We illustrate the above formulation by analyzing the (single-qubit) Pauli $Z$ and Pauli $X$ filters, i.e, $P~\in~\{Z, X\}$. We denote these filters simply by $\sch{F}_Z, \sch{F}_X$ respectively. For convenience, we define $(I, Z, X, Y) = (P_{00}, P_{01}, P_{10}, P_{11})$. 
Using \autoref{eqn:filter-supermap}, we can show that, 
\begin{align}
    \bs{F}_{Z}[P_u] = \ket{u_0}\otimes P_u \nonumber\\
    \bs{F}_X[P_u] = \ket{u_1}\otimes P_u \nonumber
\end{align}
where $u=u_0u_1\in \{00, 01, 10, 11\}$. 
Successive filtration by these two filters (\autoref{fig:successive-filtration}) is equivalent to concatenating $\sch{F}_Z$ and $\sch{F}_X$. The resulting super-Kraus element $\bs{F}_{Z;X}$ is defined (on the Paulis) as 
\begin{equation*}
    \bs{F}_{Z;X}[P_u] = \bs{F}_Z[\bs{F}_X[P_u]] =  \ket{u_0}\otimes \ket{u_1}\otimes P_u = \ket{u} \otimes P_u
\end{equation*}
Now, if the Pauli Transfer Matrix $\chi_{u, v}$ of a (single-qubit) quantum channel $\ch{E}$ is given, such that
\begin{equation*}
    \ch{E}(\rho) = \sum_{u, v} \chi_{u, v}P_u\rho P_v, 
\end{equation*}
by applying the filter $\sch{F}_{Z;X}$ on this channel, we have, 
\begin{align}\nonumber
    \sch{F}_{Z;X}[\ch{E}](\rho) &= \sum_{u, v}\chi_{u, v}\bs{F}_{Z;X}[P_u]\rho \bs{F}_{Z;X}[P_v]^{\dagger}\\
    &= \sum_{u, v}\chi_{u, v} \ketbra{u}{v} \otimes P_u\rho P_v \label{eqn:tomography}
\end{align}
Hence, for any stochastic Pauli channel of the form 
$
    \ch{E} = p_I \ch{I} + p_Z\ch{Z} + p_X\ch{X} + p_Y\ch{Y} 
$, where $\ch{X}(\rho) = X\rho X$, etc. and for which the matrix $\chi_{u, v}$ is diagonal, we have, 
\begin{align}
    \sch{F}_{Z;X}[\ch{E}] =\, & p_I \dyad{00}\otimes \ch{I} + p_Z\dyad{01}\otimes\ch{Z} \nonumber\\
    + & p_Y\dyad{11}\otimes \ch{Y} + p_X\dyad{10}\otimes \ch{X} \nonumber
\end{align}
Thus, the ancilla now carries information about the channel components, which, as we have seen, can be used to filter them out or perform further post-processing.
Interestingly, \autoref{eqn:tomography} suggests that the coefficients $\chi_{u, v}$ might be found out by performing state-tomography on the ancillary qubits. This would be equivalent to performing process tomography of the channel $\ch{E}$. 
The quantum filter formalism can be directly applied to process tomography, and we leave detailed discussions to dedicated readers.


\subsubsection{Non-Unitary Filters}

In the above treatment, we considered only unitary operators as part of the filter circuitry. Here, we extend this treatment further to include non-unitary operations such as measurements and conditional operations as well. Suppose we choose to discard the filtering qubits from the circuit in \autoref{fig:supermap}, 
and denote the resulting superchannel as $\widetilde{\sch{F}}$. The action of $\widetilde{\sch{F}}$ can be obtained by tracing out the ancilla in \autoref{eqn:unitary-filter},  
\begin{align}
    \widetilde{\sch{F}}[\ch{U}](\rho) &= \bra{0}\bs{F}[U]\rho \bs{F}[U]^{\dagger}\ket{0} + \bra{1}\bs{F}[U]\rho \bs{F}[U]^{\dagger}\ket{1}\nonumber\\
    &= \bs{F_1}[U]\rho \bs{F_1}[U]^{\dagger} + \bs{F_2}[U]\rho \bs{F_2}[U]^{\dagger}\label{eqn:pq-superkraus}
\end{align}
where in the first line, the inner product is over the ancillary qubits, and in the second line, we define 
$ 
    \bs{F_1}[U] = \frac{1}{2}IUI + \frac{1}{2}P^{\dagger}UP$, $
    \bs{F_2}[U] = \frac{1}{2}IUI - \frac{1}{2}P^{\dagger}UP
$. The maps $\bs{F_1}, \bs{F_2}$ are super-Kraus elements of the filter $\widetilde{\sch{F}}$.

Thus, we have translated the Quantum Filter formalism in terms of the theory of quantum superchannels, which gives a natural language for our scheme. In Appendix \ref{app:channel-correction-superchannel}, we give an analysis of the channel-correction filter using this formalism, which may be used to provide an alternate proof of the channel correction property mentioned at the end of Section \ref{sec:quantum-channel-correction}. 

\subsection{Beyond symmetry: general construction of filters}
\label{section:general-quantum-filter}
Now, we extend the construction of a symmetric commutation filter to more general quantum filters. First, we observe that our original  Error-Detector (conditioned on the output of the ancilla being 0) maps $U$ to 
$$
  U \mapsto \frac{1}{2}U + \frac{1}{2}\dag{P}UP.
$$
This can be seen as a convex combination of $IUI$ and $\dag{P}UP$ (with equal probabilities). The equal coefficients come from the use of the Hadamard gate, and, $I$, $P$ and $\dag{P}$ come from the Controlled-$P$ and Controlled-$\dag{P}$ gates. 
In general, we might think of mapping $U$ to a general convex combination of the form
\begin{equation}\label{eq:filter-superchannel}
    U \mapsto \sum_i p_i\, Q_iUP_i
\end{equation}
for unitaries $P_i, Q_i$ and probabilities $p_i$. 
This transformation can be achieved by using the circuit in \autoref{fig:general-filter}, whose construction is inspired from the LCU circuits proposed by Childs and Wiebe \cite{childs2012hamiltonian}.   
\begin{figure}[!h]
    \centering
    \includegraphics[width=1\columnwidth]{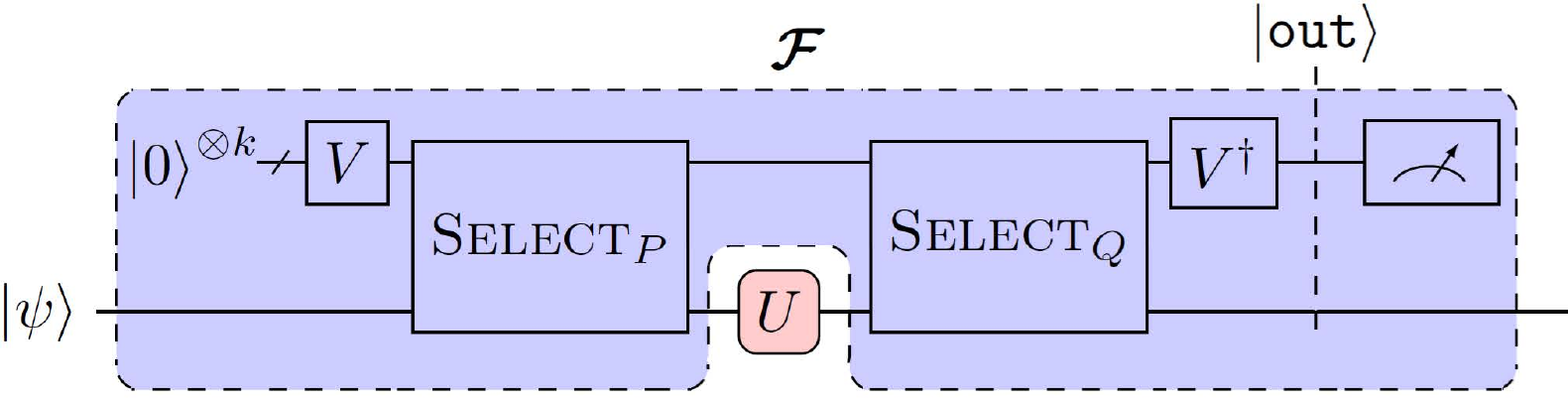}
    \caption{A schematic of a quantum filter specified by the prepare unitary V and the controlled unitaries $\textsc{Select}_P$ and $\textsc{Select}_Q$.
    }
    \label{fig:general-filter}
\end{figure}

This quantum filter $\sch{F}$  is specified by the {preparation} unitary $V$ and the controlled unitaries $\textsc{Select}_P$ and $\textsc{Select}_Q$. The ancilla can be in general a multi-qubit register, which we call the \emph{Filtering Qubits}. The probabilities $p_i$ are encoded in the unitary $V$, which is defined as 
$ 
    V\ket{0} = \sum_i \sqrt{p_i}\ket{i}
$
and, the unitary operators $P_i$ and $Q_i$ are encoded in the controlled unitaries $\textsc{Select}_P$ and $\textsc{Select}_Q$, which are defined as  
\begin{align}
    \textsc{Select}_P = \sum_i \op{i} \otimes P_i \nonumber,\quad
    \textsc{Select}_Q = \sum_i \op{i} \otimes Q_i \nonumber
\end{align}
where $\ket{i}$'s are the (computational) basis states of the filtering qubits. 
Using $V\ket{j} = \sum_i v_{ij}\ket{i}$, one can easily check that, the output of the circuit (before the measurement) is 
\begin{align}
    \ket{\tt out} &= \left(\sum_i\sum_j v_{i0}v_{ij}^* \ket{j} \otimes Q_iUP_i\right)\ket{\psi} \nonumber\\
    &= \ket{0} \otimes \left(\sum_i p_i\, Q_iUP_i\right)\ket{\psi} + {\rm other\; terms} \nonumber.
\end{align}
Therefore, if the measurement outcome on the ancilla is~0, we can achieve the transformation of  \autoref{eq:filter-superchannel}. In general, for any Quantum Channel $\mathcal{E}$ with Kraus Elements $E_k$, this filter (probabilistically) transforms $\mathcal{E}$ into: 
\begin{equation}
    \sch{F}[\mathcal{E}]: E_k \xrightarrow{\; \ket{0} \;} \sum_i p_i \, Q_i E_k P_i.
\end{equation}
In the special case where all the $P_i$'s and $Q_i$'s are Paulis, we call $\sch{F}$ a \emph{Pauli filter}. If $\textsc{Select}_P = \textsc{Select}_Q$, we call this a \emph{symmetric} filter. 
Although this implementation of a superchannel is probabilistic, the success probability could be  improved by using amplitude amplification methods \cite{grover1996fast, brassard2002quantum}.

If we post-select on the outcome $\ket{j}$ on the ancilla, the complete super-Kraus representation of the general quantum filter $\sch{F}$ in \autoref{fig:general-filter} is given by 

\begin{equation}
    \sch{F}[\mathcal{E}]: \bs{F_j}[E_k] \xrightarrow{\; \ket{j} \;} \sum_i \alpha_{ij} \, Q_i E_k P_i
\end{equation}
where the coefficients are $\alpha_{ij} = v_{i0}v_{ij}^*$. Hence, this scheme gives us a way to implement a number of super-channels in one go. 

\subsection{Application: Clifford circuit purification using Pauli Filters}\label{sec:clifford-purification}

{In this section, we focus on a specific application of the (general) quantum filter constructed in the previous section in purifying Clifford Circuits. The filtering scheme we provide uses only Hadamards and controlled-Pauli operations, thus enabling us to purify (noisy) Clifford circuits using purely Clifford gates. }

Consider the following scenario: we are running a circuit compiled in the Clifford+T gate-set. The circuit is arranged in layers of Clifford gates and T gates in \autoref{fig:clifford+t_circuit}. We consider a local depolarising noise $\mathcal{D}_p$ of strength $p$ on each qubit: 
\begin{equation}\label{eq:depolarising-main}
    \mathcal{D}_p(\rho) = (1-p)\cdot \rho + p\cdot \frac{1}{3}(X\rho X + Y\rho Y + Z\rho Z).
\end{equation}
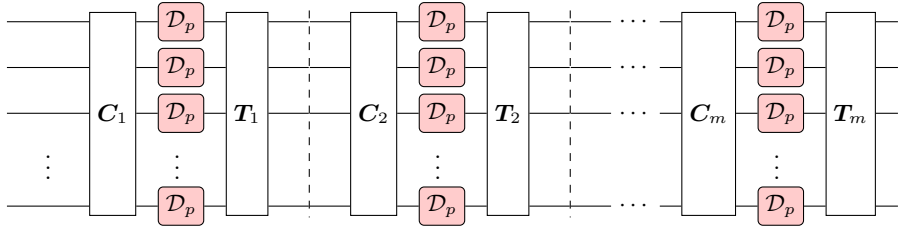
\begin{figure*}
    \begin{center}
\yquantdefinebox{dots}[inner sep=0pt]{$\vdots$}
    \begin{tikzpicture}[scale=1]
        \begin{yquant}
        \yquantset{operator/separation=3mm}
            qubit {} x[3]; 
            nobit k; 
            qubit {} y; 
            dots k; 
            box {$\bs{C}_1$} (x, k, y); 
            [shape=yquant-rectangle, rounded corners=.2em, fill=red!20]box {$\mathcal{D}_p$} x, y;
            dots k; 
            box {$\bs{T}_1$} (x, k, y); 
            barrier (x, k, y); 
            box {$\bs{C}_2$} (x, k, y);
            [shape=yquant-rectangle, rounded corners=.2em, fill=red!20]box {$\mathcal{D}_p$} x, y;
            dots k;
            box {$\bs{T}_2$} (x, k, y); 
            barrier (x, k, y); 
            [draw=white]box {$\cdots$} x, y; 
            box {$\bs{C}_m$} (x, k, y);
            [shape=yquant-rectangle, rounded corners=.2em, fill=red!20]box {$\mathcal{D}_p$} x, y;
            dots k; 
            box {$\bs{T}_m$} (x, k, y); 
        \end{yquant}
    \end{tikzpicture}
\end{center}
    \caption{Schematic of a Circuit Compiled in the Clifford+$T$ gate-set. $\bs{C}_i$ and $\bs{T}_i$ are circuit layers composed purely of Clifford and T gates respectively. $\ch{D}_p$ is a local depolarising channel with error-rate $p$.}
    \label{fig:clifford+t_circuit}
\end{figure*}

The circuit in \autoref{fig:clifford+t_circuit} has $m$ layers. Each layer $L_i$ is composed of a Clifford
unitary $\bs{C}_i$ and, a unitary $\bs{T}_i$ composed only of T Gates.
Each Clifford layer is followed by a layer of local depolarising noise. We assume that the number of T Gates in each $\bs{T}_i$ is small (compared to the number of gates in the $\bs{C}_i$'s), and thus the noise in them is negligible. 

If the number of qubits in the circuit is $n$, the combined channel after each Clifford layer is : 
\begin{align}
    \mathcal{D}_p^{\otimes n}\circ \mathcal{C}_i(\rho) =& (1-p)^n\cdot \bs{C}_i\rho \bs{C}_i^{\dagger} 
    + \sum_{w=1}^n (p/3)^w(1-p)^{n-w} \times
    \nonumber\\ 
    &\sum_k N_k[w]\bs{C}_i\rho \bs{C}_i^{\dagger}N_k[w]^{\dagger}
\end{align}
where $N_k[w]$ are Pauli operators with weight $w$~\footnote{A Pauli of weight $w$ acts non-trivially on $w$ qubits. For example, the Pauli $I_1X_2Y_3I_4Z_5$ has weight 3 and acts non-trivially on the qubits 2, 3, and 5.}, and
$
    \mathcal{C}_i(\rho) = \bs{C}_i\rho\bs{C}_i^{\dagger}
 $
is the (pure) Clifford channel. So, the components of the combined channel consists of the noiseless Clifford term $\bs{C}_i$ and noise-terms $N_k[w]\bs{C}_i$. In the following, we focus on a single such layer of Clifford circuits
under Pauli noise.

Consider one such component of the channel plugged into a Pauli filter
\begin{center}
\begin{tikzpicture}
\begin{yquant}
qubit {$\ket{0}$} q;
qubit {} r;
h q;
box {$P$} r|q;
hspace {0.1cm} r;
box {$\bs{C}$} r; 
[shape=yquant-rectangle, rounded corners=.2em, fill=red!20]box {$N$} r; 
hspace {0.1cm} r;
box {$Q$} r|q ; 
h q;
measure q;
discard q; 
\end{yquant}
\end{tikzpicture}
\end{center}
If $P$ is a Pauli operator, this circuit is equivalent to 

\begin{center}
\begin{tikzpicture}
\begin{yquant}
qubit {$\ket{0}$} q;
qubit {} r;
h q;
hspace {0.7cm} r;
box {$\bs{C}$} r;
hspace {0.1cm} r;
box {$\vec{P}$} r|q;
[shape=yquant-rectangle, rounded corners=.2em, fill=red!20]box {$N$} r; 
box {$Q$} r|q ; 
h q;
measure q;
discard q; 
\end{yquant}
\end{tikzpicture}
\end{center}
where we have defined 
$
    \overrightarrow{P} := \bs{C}P\bs{C}^{\dagger}
 $, as the forward propagation (in time) of the Pauli $P$ through the the Clifford $\bs{C}$. Since the Clifford group is the normalizer of the Pauli group, $\overrightarrow{P}$ is also a Pauli operator. 

Now, if we select $P=\bs{C}^{\dagger}Q\bs{C}=\overleftarrow{Q}$ in the original filter circuit (which is the backward propagation of $Q$ through $\bs{C}$), then $\overrightarrow{P} = Q$ and the resulting circuit becomes equivalent to 
\begin{center}
\begin{tikzpicture}
\begin{yquant}
qubit {$\ket{0}$} q;
qubit {} r;
h q;
hspace {0.7cm} r;
box {$\bs{C}$} r;
hspace {0.1cm} r;
box {$Q$} r|q;
[shape=yquant-rectangle, rounded corners=.2em, fill=red!20]box {$N$} r; 
box {$Q$} r|q ; 
h q;
measure q;
discard q; 
\end{yquant}
\end{tikzpicture}
\end{center}
As a result, the noise-channel $N$ sees a symmetric Pauli filter $\sch{F}_Q$. In \autoref{sec:quantum-channel-correction}, we have already shown that an arbitrary noise channel can be {deterministically} purified with $2n$ ancillas by using a symmetric Pauli filter of the following configuration 
\begin{align}
    V &= H^{\otimes 2n} \nonumber \\
    \textsc{Select}_Q &= \sum_i \op{i}\otimes Q_i \label{eqn:select-q}
\end{align}
where the sum on $i$ runs over all the $n$-qubit Pauli strings. Thus, the noisy Clifford $\bs{C}$ will also be completely purified if we choose, 
\begin{equation}
    \textsc{Select}_P = \textsc{Select}_{\overleftarrow{Q}} = \sum_i \dyad{i} \otimes \bs{C}^{\dagger}Q_i\bs{C}\label{eqn:select-p}
\end{equation}

\change{The capabilities of a filter can be improved by adding more clean ancillas. To be specific, if we are given access to $k$ ancillary qubits, we can completely clean $k/2$ system qubits using this filter configuration. Hence, the output fidelity scales with the number of ancilla as 
\begin{align}
\label{eq:fidelity_decrease}
    F_o = (1-p)^{n-k/2} &= F_i\cdot\left(1/\sqrt{1-p}\right)^{k}\nonumber\\ &\sim F_i\exp((k/2n-1)\epsilon)
\end{align}
which is an exponential increase with respect to $k$, and $\epsilon = 1-p$ is the input infidelity.} 

\change{The conventional QEM works cannot reduce the error in expectation values by gradually adding more qubits because the elementary block is a query to the whole channel.
A few works may achieve error reduction with an increasing number of qubits but with different resource costs than ours. For example, in the multi-round purification scheme of \cite{childs2023streaming} the fidelity increases exponentially with each round. However, the number of ancilla qubits and copies of the input state required also increase exponentially, thus making the fidelity increase linear in the number of ancillas. In \cite{lee2023error}, the decrease in infidelity is exponential in the number of ancillas. However, this suffers from the requirement of exponentially many calls to the original circuit and the requirement of non-Clifford gates. We achieve the above scaling using only a single query and Clifford gates for noisy Clifford circuits. In addition, the other schemes work in a post-selected manner, and therefore, the expected scaling is likely to be much worse. In contrast, we achieve our scaling deterministically with the maximum number of ancillas saturating $2n$.}

\change{Similar to the contemporary works, our result holds under the assumption of noiseless ancillas. An analytical treatment of the effect of ancilla-noise can be found below with more details in Appendix \ref{appendix:error-analysis}, and a corresponding numerical exploration in \autoref{sec:numerics}.  Below, we present the result for fidelity when considering the ancilla noise.
For any noisy 2-qubit gate, we model the noise by assuming that any application of the gate is followed by a local depolarising noise channel defined in \autoref{eq:depolarising-main} on the control and target qubits.
If $\ch{U}_{i;j}$ is an ideal 2-qubit gate (with control on the $i$-th qubit and target on the $j$-th qubit), we model the noisy version of this gate $\Tilde{\ch{U}}_{i;j}$ as 
$$
    \Tilde{\ch{U}}_{i;j} = \left(\ch{D}_{p_c}^{(i)} \otimes \ch{D}_{p_t}^{(j)}\right) \circ \ch{U}_{i;j}
$$
where $p_c$ and $p_t$ are error rates on the control and target qubits, respectively. For this noise model on the 2-qubit operations, if we want to purify a local depolarising channel on a single qubit using the channel correction filter in \autoref{fig:channel-correction}, the fidelity of the corrected channel is lower bounded by 
\begin{equation}
\label{eq:fidelity_ancilla}
    \Tilde{F} \geq (1-p_c)(1-p_t)(1-2p_c/3)^2(1-2p_t/3)^2 (1-2p_c/3).
\end{equation}
A derivation of this bound is given in Appendix \ref{appendix:error-analysis}. This result indicates that the output fidelity is more sensitive to noise on the ancilla. The above lower bound gives a pseudo-threshold on the error rate of the noise channel, above which using the filter is beneficial. }

\subsection{Implementation of \texorpdfstring{{\sc Select}$_P$}{Select-P} and \texorpdfstring{{\sc Select}$_Q$}{Select-Q}}

A problem of practical matters needs to be addressed regarding implementing the unitaries $\textsc{Select}_P$ and $\textsc{Select}_{Q^{}}$ in \autoref{eqn:select-q}, \autoref{eqn:select-p}. If the $2n$ filtering qubits are labelled as $f_{1, 1}, f_{1, 2}, f_{2, 1}, f_{2, 2}, \ldots f_{n, 1}, f_{n, 2}$, and the system qubits as $s_1, s_2, \ldots s_n$, then, one choice of $\textsc{Select}_Q$ is 
    \begin{equation}
        \textsc{Select}_Q = \bigotimes_{i=1}^n \textsf{C}_{f_{i, 1}}Z_{s_i} \cdot \textsf{C}_{f_{i, 2}}X_{s_i} 
    \end{equation}
    where $\textsf{C}_{f_{i, 1}}Z_{s_i}$ is a controlled-$Z$ gate with control-qubit $f_{i, 1}$ and target-qubit $s_i$, etc. This requires $n$ $CZ$ and $n$ $CX$ gates.

    Then, to correct a Clifford $\boldsymbol{C}$, the corresponding $\textsc{Select}_{P}$ will be 
    \begin{align*}
        \textsc{Select}_{P} &= \bs{C}^{\dagger}(\textsc{Select}_Q)\bs{C}\\
        &=\bigotimes_{i=1}^n \bs{C}^{\dagger}(\textsf{C}_{f_{i, 1}}Z_{s_i})\bs{C} \cdot \bs{C}^{\dagger}(\textsf{C}_{f_{i, 2}}X_{s_i})\bs{C} \\
        &= \bigotimes_{i=1}^n \textsf{C}_{f_{i, 1}}\overleftarrow{Z}_{s_i} \cdot \textsf{C}_{f_{i, 2}}\overleftarrow{X}_{s_i}
    \end{align*}
    where 
    $ 
        \overleftarrow{Z}_{s_i} = \bs{C}^{\dagger}Z_{s_i}\bs{C} $,  $\overleftarrow{X}_{s_i} = \bs{C}^{\dagger}X_{s_i}\bs{C} $
    are the backward propagation of the single-qubit Pauli-$X$ and Pauli-$Z$'s through the Clifford $\bs{C}$. 
    
    In the worst case, each $\overleftarrow{Z}_{s_i}, \overleftarrow{X}_{s_i}$ can be full-weight Paulis even though $Z_{s_i}, X_{s_i}$ are single-weight Paulis. So, each $\textsf{C}_{f_{i, 1}}\overleftarrow{Z}_{s_i}$ may require $n$ controlled-$X$ gates to implement. As a result, $2n^2$ $CX$-gates are needed to implement the entire $\textsc{Select}_{P}$ unitary.
    
  Note that both the unitaries $\textsc{Select}_{P}$ and $\textsc{Select}_{Q}$ are composed entirely of Clifford gates. Combined with the fact that the PREPARE unitaries of the ancilla for a Pauli filter are $H^{\otimes 2n}$, we can see that the entire Pauli filter is a Clifford circuit. Therefore,  {we can correct Clifford circuits using purely Clifford circuits}.


\subsection{Partial purification of non-Clifford gates}
\label{sec:partial-non-clifford}
In the above section, we have outlined how we can purify Clifford circuits using Clifford gates, particularly controlled-Paulis. This scheme can be adapted to the purification of non-Clifford gates such as T Gates, {and more generally, $R_z(\theta) = e^{-i \theta Z}$ gates which have applications in quantum dynamics simulation by Trotterisation or eigenstate preparation \cite{georgescu2014quantum}.} In this case,   non-Clifford gates are needed to completely purify circuits composed of T Gates. In the regime of fault tolerant quantum computing, this is an expensive resource where Clifford gates can be considered almost free. Here, we outline a method to {partially} purify T Gates using less expensive Clifford gates. 

Consider the following circuit 

\begin{center}
\begin{tikzpicture}
\begin{yquant}
qubit {$\ket{0}$} q;
qubit {} r;
h q;
box {$Z$} r|q; 
box {T} r; 
[shape=yquant-rectangle, rounded corners=.3em, fill=red!20]box {$\ch{D}$} r; 

box {$Z$} r|q ;  
h q;
measure q;
box {$X$} r|q;
discard q;

hspace {0.5cm} r;
\end{yquant}
\end{tikzpicture}
\end{center}
where we model a noisy T-gate as an ideal T-gate followed by a local depolarising error model $\mathcal{D}$
\begin{equation}\label{eqn:general_dep_error}
    \mathcal{D} = (1-p) \ch{I} + p_X \ch{X} + p_Y\ch{Y} +  p_Z\ch{Z}
\end{equation}
with $p_X + p_Y + p_Z = p$, and $\ch{X}(\rho) = X\rho X$.
{This error model is motivated by probabilistic	compilation techniques~\cite{Kliuchnikov2023shorterquantum,koczor2024sparse,PhysRevLett.132.130602}.}
Since T commutes with $Z$, this circuit is equivalent to
\begin{center}
\begin{tikzpicture}
\begin{yquantgroup}
    \registers{
    qubit {} q; 
    qubit {} r; 
    }
    \circuit{
    discard q; 
    box {T} r;
    hspace {0mm} q, r;
    
    init {$\ket{0}$} q;
    h q; 
    box {$Z$} r|q; 
    [shape=yquant-rectangle, rounded corners=.3em, fill=red!20]box {$\ch{D}$} r; 
    
    box {$Z$} r|q ;  
    h q;
    measure q;
    box {$X$} r|q;
    discard q;
    }
    \equals[$\equiv$]
    \circuit{
    discard q; 
    box {T} r; 
    [shape=yquant-rectangle, rounded corners=.3em, fill=blue!20]box {$\Tilde{\ch{D}}$} r; 
    }
\end{yquantgroup}



\end{tikzpicture}
\end{center} 

{Hence, the noise-channel $\ch{D}$ sees the symmetric Pauli filter $\sch{F}_Z$. As shown before, $\sch{F}_Z$ converts $\ch{D}$ into the channel $\Tilde{\ch{D}}$, where 
\begin{equation}\label{eqn:biased_dep_channel}
    \Tilde{\ch{D}} = \sch{F}_Z[\ch{D}] = (1-p + p_X)\ch{I} + (p_Y + p_Z)\ch{Z}
\end{equation}
Thus, the fidelity of the channel increases by $p_X$. {This approach is particularly relevant for} architectures where the noise is biased \cite{aliferis2008fault, webster2015reducing}, i.e, where one of the components among $p_X, p_Y, p_Z$ is much greater than the other two. So, if the noise on the system qubits is biased in such a way that $p_X \gg p_Z, p_Y$, then, 
\begin{equation*}
    \Tilde{\ch{D}} \sim \ch{I}
\end{equation*}
and we have an (almost) perfect T-gate.
Similarly, if the dominant error is the $Y$ error, i.e, $p_Y \gg p_X, p_Z$, then we could replace the feedback operation of $CX$ after the measurement with $CY$, as below,}

\begin{center}
\begin{tikzpicture}
\begin{yquant}
qubit {$\ket{0}$} q;
qubit {} r;
h q;

box {$Z$} r|q; 
box {T} r; 
[shape=yquant-rectangle, rounded corners=.3em, fill=red!20]box {$\ch{D}$} r; 

box {$Z$} r|q ;  
h q;
measure q;
box {$Y$} r|q;
discard q;

hspace {0.5cm} r;

\end{yquant}
\end{tikzpicture}
\end{center} 

This will result in the channel, 
\begin{equation}
    \Tilde{\ch{D}} = (1-p + p_Y)\ch{I} + (p_X + p_Z)\ch{Z} \sim \ch{I}.
\end{equation}


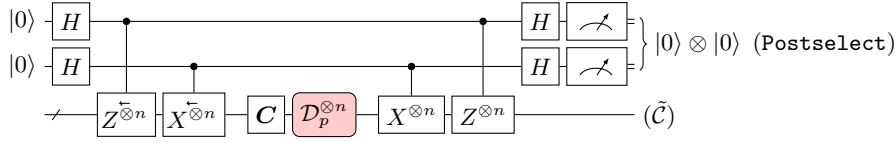
\begin{figure*}[t]
    \begin{center}
    \begin{tikzpicture}
    \begin{yquant}
    qubit {$\ket{0}$} q1;
    qubit {$\ket{0}$} q2;
    qubit {} r;
 
    h q1, q2;

    slash {n} r;
    box {$\backvec{Z^{\otimes n}}$} r|q1; 
    box {$\backvec{X^{\otimes n}}$} r|q2; 
    
    hspace {0.2cm} r;
    box {$\bs{C}$} r; 
    [shape=yquant-rectangle, rounded corners=.3em, fill=red!20]box {$\mathcal{D}_p^{\otimes n}$} r; 
    hspace {0.2cm} r;

    box {$X^{\otimes n}$} r|q2; 
    box {$Z^{\otimes n}$} r|q1;
    
    h q1, q2;
    measure q1;
    measure q2;
    
    hspace {0.5cm} r;
    output {$\ket{0}\otimes \ket{0}\; ({\tt Postselect})$} (q1, q2);
    output {$(\Tilde{\ch{C}})$} r; 
    \end{yquant}
    \end{tikzpicture}
    \end{center}
    \caption{An ancilla-efficient Pauli filter for purifying a Clifford Circuit $\bs{C}$ (affected by an $n$-qubit local depolarising channel). $\backvec{P} =  \bs{C}^{\dagger}P\bs{C}$ is the backward propagation (in time) of the Pauli $P$ through the Clifford $\bs{C}$.}
    \label{fig:ancilla-efficient-filter}
\end{figure*}

A depolarising channel $\ch{D}_p$ with $p_X = p_Y = p_Z = \frac{1}{3}p$ is converted into the {infinitely biased channel as}
\begin{equation}
    \Tilde{\ch{D}}(\rho) = (1-2p/3 )\cdot \rho + 2p/3\cdot Z\rho Z,
\end{equation}
{which is therefore amenable to techniques developed for biased noise models \cite{tuckett2018ultrahigh,bonilla2021xzzx,xu2023tailored}.}

{Beyond biasing the error channel, the fidelity is increased by $p/3$ as the incoherent $X$ and $Y$ components of the noise channel are removed}. {If we purify $n$ T Gates in parallel using this scheme (with $n$ ancillas), the residual noise-channel will be $\Tilde{\ch{D}}^{\otimes n}$ with a fidelity of $(1-2p/3)^n$,   which is a $1/3$ reduction in the decay rate compared to the raw channel. This is still exponentially small in $n$, a signature of the partial nature of the purification. 
This scheme can be adapted to partially purify the 3-qubit CCZ gate, which is shown in Appendix \ref{app:partial-ccz}. Subject to depolarising noise model on each qubit described above, this increases the fidelity of a CCZ gate from $(1-p)^3$ to $(1-2p/3)^3$. 
}

{The approach taken for non-Clifford gates in this subsection shares a similar spirit to quantum error correction enhanced quantum metrology - it is nearly impossible to protect against errors along the same axis as the signal (which may produce arbitrarily small rotation angles), but if the error and the signal occur in orthogonal directions, then error correction can mitigate the effects of noise in a similar vein in quantum metrology~\cite{shettell2021practical,zhou2018achieving}.

\change{To circumvent this, we could modify the T gate itself. In some physical devices, the dominant noise is the $Z$ error. 
Instead, we can realise the T gate by $H - R_x(\pi/4) - H$ gate with $ R_x(\pi/4) $ being a rotation gate along the $x$ direction.  If the Hadamard gate can be implemented fault-tolerantly, then by using similar techniques outlined above, we can completely remove the $Z$ errors.
Specifically, by sandwiching the $R_x(\pi/4)$ gate between two \textsc{cnot} gates, the $Z$ error can be detected. 
Another way to remove the $Z$ error is by concatenating with a standard repetition code. Note that now the error is just a dephasing error along the $Z$ direction, which can be corrected by a classical error correction code, like a repetition code.
}

}

\subsection{ Conjecture of efficient filtration} 

Using $2n$ additional qubits to {cancel errors for an} an $n$-qubit channel might seem expensive in some cases. If we want to {deterministically} remove all the noise components of a general $n$-qubit Pauli Channel, unfortunately, $2n$ filtering qubits are {necessary}. This is because, in order to {remove noise from} such a channel we need $4^n$ distinct syndromes to distinguish the $4^n$ Pauli Channel components (so that we can perform the correction for the corresponding component). To do this, we need at least $\log_2(4^n) = 2n$ qubits. 

However, for the local depolarising noise model considered above, all the noise components {do not} have equal strength. The strength of a (Pauli) noise-component with weight $w$, is 
$ 
    p^w(1-p)^{n-w}.
$
Assuming that the {depolarising strength} $p$ is small, components with very high weights will be less likely. Therefore, we may be content with removing components with weight up to some threshold $k$. The number of $n$-qubit Paulis with weight $\leq k$ is, 
$$
    \sum_{i=0}^k \binom{n}{i}3^i \sim O\left(3^{k}\binom{n}{k}\right).
$$
The number of ancillary qubits that are needed to correct channel components up to order $k$ {is thus lower bounded by }
 \begin{equation}
 \label{eq:ancilla_num}
    \#(\text{ancilla}) \sim \Omega\left(\log_2\left(\frac{3en}{k}\right)^k\right) \sim \Omega(k\log_2(n/k)),
\end{equation}
where the inequality $   \binom{n}{k} < \left(\frac{\text{e}n}{k}\right)^k
 $ is used. It remains to be seen whether this number of {ancillary} qubits is {sufficient} for stochastic Pauli noise correction.
 

\section{Ancilla-efficient Pauli filter}
\label{sec:good-enough}


In near-term applications, instead of correcting a channel, {probabilistically} filtering noise components using fewer resources might be desirable. In this section, we give a filter configuration that can remove weight-1 Paulis in any $n$-qubit Clifford Circuit using just 2 extra qubits. This is useful when weight-1 Paulis are the dominant noise component (which is the case in the regime $n< 1/p$), and the ancilla overhead is very small. 

Consider the following filter $\sch{F}_{AE}$ specified by:
\begin{equation}\label{eqn:ae-definition1}
    \begin{aligned}
         V &= H^{\otimes 2}\\
    \textsc{select}_Q &= \op{0}\otimes I^{\otimes n} + \op{1}\otimes Z^{\otimes n} \\
    &+ \op{2}\otimes X^{\otimes n} + \op{3}\otimes Y^{\otimes n}\\
    \textsc{select}_P &= \textsc{select}_{Q^{\star}} = \bs{C}\left( \textsc{Select}_{Q}\right) \bs{C}^{\dagger}\\
    \end{aligned}
\end{equation}
This filter maps a noise component $N$ to 
\begin{equation}\label{eqn:ae-pauli-map}
    N \xrightarrow{\sch{F}_{AE}} I^{\otimes n}NI^{\otimes n} + Z^{\otimes n}NZ^{\otimes n} + X^{\otimes n}NX^{\otimes n} + Y^{\otimes n}NY^{\otimes n}
\end{equation}
If the noise component $N$ is a weight-1 Pauli $S_i$, $S\in \{X, Y, Z\}$, which acts non-trivially only on the $i$-th qubit, we have, 
\begin{equation}
    S_i \xrightarrow{\sch{F}_{AE}} I_iS_iI_i + Z_iS_iZ_i + X_iS_iX_i + Y_iS_iY_i \quad \forall i
\end{equation}
This sum is always 0, since each weight-1 $S_i$ commutes with 2 of these terms and anti-commutes with the other two. Therefore, all the weight-1 Pauli's are filtered out. 
This results in an improvement in the fidelity of the channel.

\begin{figure*}[ht]
    \centering
    \includegraphics[width=0.8\textwidth]{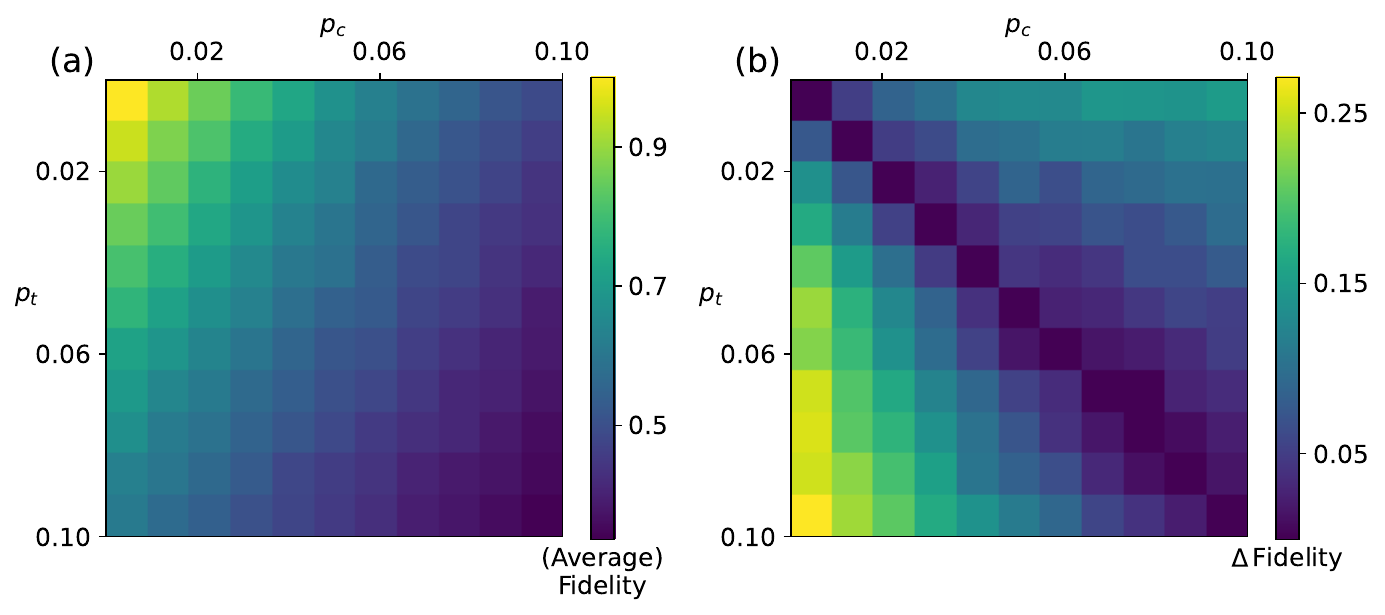}
    \caption{Numerical results for noisy implementation of the channel correction filter (\autoref{fig:channel-correction}) to purify local depolarising noise on 4 qubits. (a) Average Fidelity of the corrected channel for various combinations of $(p_c, p_t)$. (b) Fidelity gain, defined as $\Delta F = F(p_c, p_t) - F(p^*, p^*)$ with $ p^*= \max(p_c, p_t)$, showing the relative gain in fidelity for the error-rates on control qubit vs. target qubit, compared to the case where the error-rates on both the system and the ancilla are the same. As can be seen, the filter performance is more sensitive to noise on the ancilla (i.e, control qubits).}
    \label{fig:graphs-channel-correction}
\end{figure*}
 
\begin{figure*}[ht]
    \centering
    \includegraphics[width=\textwidth]{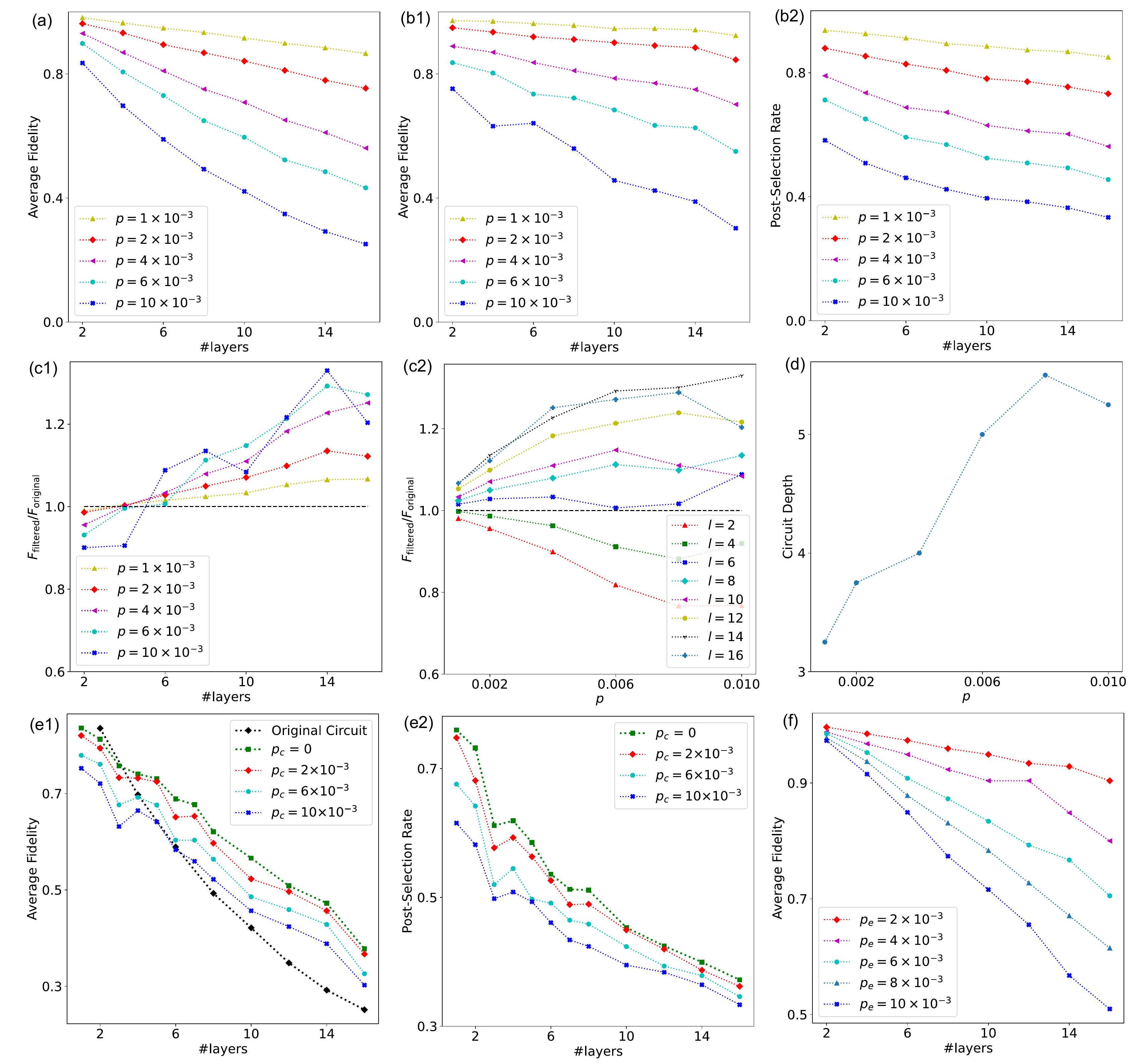}
    \caption{Performance of the ancilla-efficient Pauli filter (\autoref{fig:ancilla-efficient-filter}) for filtering layered brickwork circuits of 12 qubits {composed of \textsc{cnot} gates only}. For the plots (a)-(d), we use a common noise model for the filtering qubits and system qubits, and, uniform error-rates on the control and target qubits of 2-qubit gates, i.e, we set $p_c({\rm filter}) = p_t({\rm filter}) = p_c({\rm circuit}) = p_t({\rm circuit}) = p$. (a) Average Fidelity vs Circuit Depth for the original (unfiltered) circuit. (b1) Average Fidelity of the Filtered Circuit, (b2) Post-selection Rate, or, Success Probability of the ancilla-efficient filtering protocol. (c1), (c2) show the dependence of the Gain in Fidelity ($F_{\rm filtered}/F_{\rm original}$) vs the number of layers and error-rate respectively. (d) Crossover Depth (depth at which fidelity of the filtered circuit surpasses the fidelity of the original circuit) vs error-rate. Plots (e1),(e2) show the performance of the filter, if the filtering qubits are {cleaner} than the system qubits. They show dependence of the fidelity on $p_c({\rm filter})$, where we set $p_t({\rm filter}) = p_c({\rm circuit}) = p_t({\rm circuit}) = 0.01$. Finally plot (f) shows the performance of a noiseless filter, i.e, $p_c({\rm filter}) = p_t({\rm filter}) = 0$ and $p_c({\rm circuit}) = p_t({\rm circuit}) = p_e$.}
    \label{fig:numerics_ae_filter}
\end{figure*}

Now, we consider the fidelity improvement when considering local depolarising noise.
{The (average) fidelity of a noisy channel $\tilde{\ch{V}}$ compared to the ideal channel $\ch{V}$ is defined as 
\begin{equation}
    F_{\rm avg}(\tilde{\ch{V}}; \ch{V}) = \int \dd{\psi} \Tr[\tilde{\ch{V}}(\psi) \ch{V}(\psi)] 
\end{equation}
where the integral is over all pure states $\psi = \dyad{\psi}$ in the state-space. For a local depolarising error channel of the form of \autoref{eq:depolarising-main}, for which the ideal channel is the identity, the average fidelity reduces to  
$ 
    F_{\rm avg}(\ch{D}_p; \ch{I}) = 1 - \frac{2}{3}p
$, where $p$ is the probability of an error occurring \cite{horodecki1999general}. 
The average {infidelity} then becomes 
$ 
    \epsilon_{\rm avg} = 1 - F_{\rm avg} = \frac{2}{3}p = \Theta(p)$. 
In the following analysis, we ignore the unimportant scaling factor of $2/3$, and define the infidelity of a stochastic Pauli error channel as
\begin{equation}
\label{eq:infidelity}
    \epsilon = p_{\rm error} := 1 - p_{\ch{I}}
\end{equation}
where $p_{\ch{I}}$ is the probability of the identity component of the channel. Using this metric, the removal of weight-1 Paulis by the ancilla-efficient filter reduces the infidelity of the channel according to the following theorem. 
}

\begin{theorem}[{Quadratic reduction in the
infidelity with ancilla-efficient Pauli filters}]
Consider the ancilla-efficient Pauli filter $\sch{F}_{AE}$ defined by the unitaries in \autoref{eqn:ae-definition1} 
which can purify an $n$-qubit Clifford Circuit $\bs{C}$ using 2 ancillary qubits (\autoref{fig:ancilla-efficient-filter}). If the noise model on the system qubits is such that, the Clifford $\bs{C}$ is followed by an $n$-qubit local depolarising channel $\ch{D}_p^{\otimes n}$ of strength $p$, where  $D_p$ is defined in \autoref{eq:depolarising-main} and an input infidelity $ 
    \epsilon_{\tt in} := 1 - (1-p)^n
$ defined in \autoref{eq:infidelity}. {Then, the ancilla-efficient Pauli filter succeeds with a probability $p_s$ that is upper bounded by 
\begin{equation}
    p_s \leq 1 - p^2/\epsilon_{\tt in}
\end{equation}
}
Conditioned on its success, the infidelity of the purified channel is bounded by 
\begin{equation}
    \epsilon_{\tt out} \leq 2\epsilon_{\tt in}^2 = O(\epsilon_{\tt in}^2)
\end{equation} 
\end{theorem}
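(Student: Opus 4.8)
The plan is to reduce the whole theorem to a pair of elementary inequalities in the two scalars $H=(1-p)^n$ and $G=(1-\tfrac43 p)^n$, after first determining exactly which Pauli errors survive the filter. I would work in the stochastic–Pauli picture: after propagating $\bs C$ through the controlled Paulis as in \autoref{sec:cliffor-purification}, the noise seen by $\sch F_{AE}$ is the diagonal channel in which each qubit is independently left alone with probability $1-p$ and hit by $X$, $Y$ or $Z$ with probability $p/3$ each. Post-selecting on the all-zero ancilla outcome, the super-Kraus map sends a Pauli $N$ to $c_N N$ with $c_N=\tfrac14\big(1+(-1)^{a+b}+(-1)^{b+c}+(-1)^{a+c}\big)$, where $a,b,c$ count the $X$-, $Y$- and $Z$-type factors of $N$ and the three signs record the commutation of $N$ with $Z^{\otimes n}$, $X^{\otimes n}$, $Y^{\otimes n}$. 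A short case check on the parities of $(a,b,c)$ shows $c_N=1$ iff $a\equiv b\equiv c\ (\mathrm{mod}\ 2)$ and $c_N=0$ otherwise. Thus the identity survives, every weight-$1$ Pauli is annihilated (recovering \autoref{eqn:ae-pauli-map}), and every surviving non-identity Pauli has weight $\ge 2$. Since the output is again a stochastic Pauli channel whose identity weight is $(1-p)^n/p_s$, the output infidelity is $\epsilon_{\tt out}=1-(1-p)^n/p_s$.

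Because $c_N$ is exactly the indicator of $a\equiv b\equiv c$, summing $p_N c_N$ over the product error distribution factorises through the three characters, each single-qubit factor being $(1-p)-p/3-p/3+p/3=1-\tfrac43 p$, so that $\mathbb{E}[(-1)^{a+b}]=\mathbb{E}[(-1)^{b+c}]=\mathbb{E}[(-1)^{a+c}]=(1-\tfrac43 p)^n=G$. Hence $p_s=\tfrac14(1+3G)$ and $\epsilon_{\tt out}=1-4H/(1+3G)$ with $H=(1-p)^n=1-\epsilon_{\tt in}$. For the success-probability bound I would write $1-p_s=\tfrac34(1-G)$ and use the elementary facts $1-G\ge 1-g=\tfrac43 p$ and $\epsilon_{\tt in}=1-H\ge 1-h=p$, valid since $g=1-\tfrac43p$ and $h=1-p$ lie in $[0,1]$ for $p\le 3/4$ and $t^n\le t$ there. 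These give $(1-p_s)\epsilon_{\tt in}=\tfrac34(1-G)(1-H)\ge\tfrac34\cdot\tfrac43 p\cdot p=p^2$, i.e. $p_s\le 1-p^2/\epsilon_{\tt in}$.

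The infidelity bound is the delicate part. Clearing the positive denominator $1+3G$, the inequality $\epsilon_{\tt out}\le 2\epsilon_{\tt in}^2$ is exactly equivalent to $F(G,H):=1+3G+2H^2-12GH+6GH^2\ge 0$. The coefficient of $G$ here is $3(1-4H+2H^2)$, which within $[0,1]$ is negative precisely for $H>1-1/\sqrt2$; for $H\le 1-1/\sqrt2$ one has $F\ge F|_{G=0}=1+2H^2>0$ since $G\ge 0$. For $H>1-1/\sqrt2$, $F$ is decreasing in $G$, so I would bound $G$ above by the $n$-independent relaxation $G\le H^{4/3}$, which follows from $1-\tfrac43 p\le(1-p)^{4/3}$ (the function $(1-p)^{4/3}-(1-\tfrac43p)$ vanishes at $p=0$ and has nonnegative derivative). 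Substituting $G=H^{4/3}$ reduces matters to the single-variable inequality $\psi(H):=1+3H^{4/3}+2H^2-12H^{7/3}+6H^{10/3}\ge 0$ on $[0,1]$; the substitution $H=s^3$ turns this into $\Psi(s)=6s^{10}-12s^7+2s^6+3s^4+1\ge 0$ on $[0,1]$, and since $\Psi(1)=\Psi'(1)=0$ one factors $\Psi(s)=(1-s)^2Q(s)$ and checks that the degree-$8$ cofactor $Q$ is nonnegative on $[0,1]$.

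I expect the main obstacle to be exactly this last step. The quadratic cancellation responsible for the $O(\epsilon_{\tt in}^2)$ scaling only materialises once $G$ is controlled to the correct order: the naive bound $G\le H$ is too weak and actually makes $F$ negative for small $\epsilon_{\tt in}$, so securing the sharper relaxation $G\le H^{4/3}$ is essential. After that, the work is to confirm that the residual polynomial is genuinely nonnegative on all of $[0,1]$ rather than merely positive to leading order near the tight point $\epsilon_{\tt in}=0$; everything upstream (the survivor characterisation, the closed form for $p_s$, and the success-probability bound) is routine by comparison.
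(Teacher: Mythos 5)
Your proof is correct, and it takes a genuinely different route from the paper's. The paper (Appendix C) argues by coarse counting: for the infidelity bound it credits the filter only with removing the $3n$ weight-one Paulis, giving $F_o \ge q^n/(1-npq^{n-1})$ with $q=1-p$, and then eliminates $n$ in favour of $\epsilon_{\tt in}$ via the logarithm inequalities $\epsilon\le-\ln(1-\epsilon)<\epsilon/(1-\epsilon)$; for the success probability it lower-bounds the number of removed weight-$w$ Paulis by $\binom{n}{w}3^w/(1+w(w-1))$ through the parity-tuple map $S_w[o,e,e]\to S_w[o,o,o]$ and sums over $w$. You instead compute everything in closed form: your survivor criterion $a\equiv b\equiv c\ (\mathrm{mod}\ 2)$ is exactly the complement of the paper's removal condition (two of $w,x,y,z$ odd), and because the survivor coefficient is a sum of three characters that factorise over the product noise distribution, the expressions $p_s=\tfrac14(1+3G)$ and $\epsilon_{\tt out}=1-4H/(1+3G)$ with $H=(1-p)^n$ and $G=(1-\tfrac43 p)^n$ are exact---strictly sharper than anything the paper establishes. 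I verified the one computation you left implicit: dividing $\Psi(s)=6s^{10}-12s^{7}+2s^{6}+3s^{4}+1$ by $(1-s)^2$ gives
\begin{equation*}
    Q(s)=6s^{8}+12s^{7}+18s^{6}+12s^{5}+8s^{4}+4s^{3}+3s^{2}+2s+1,
\end{equation*}
whose coefficients are all positive, so $\Psi\ge 0$ on $[0,1]$ is immediate; likewise $g\le h^{4/3}$ (hence $G\le H^{4/3}$ for $g\ge 0$) and the sign analysis of the $G$-coefficient $3(1-4H+2H^2)$ are correct, and your remark that the naive relaxation $G\le H$ fails near $\epsilon_{\tt in}=0$ is accurate---the first-order cancellation in $1+3G-4H=O(n^2p^2)$ is genuinely where the quadratic scaling lives. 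Two trade-offs are worth recording. First, your success-probability step needs $p\le 3/4$ (so that $0\le G\le g$), which you flag; this is harmless because the stated bound in fact fails for $p$ near $1$ (at $p=1$, $n=2$ one has $p_s=1/3$ while $1-p^2/\epsilon_{\tt in}=0$), so the theorem is implicitly a small-$p$ statement, and your chain $(1-p_s)\,\epsilon_{\tt in}\ge \tfrac34(1-G)(1-H)\ge p^2$ is actually cleaner than the paper's own conversion, which drops a factor of $(1-\epsilon)^2$ in the unfavourable direction. Your infidelity bound needs no restriction in effect: in the case $H\le 1-1/\sqrt2$ the claim is vacuous since $2\epsilon_{\tt in}^2\ge 1$, while the nontrivial case $H>1-1/\sqrt2$ automatically forces $p<1/\sqrt2<3/4$, so $G\ge 0$ and $G\le H^{4/3}$ both hold. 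Second, what the paper's cruder route buys is reusability: the weight-one-only accounting transfers verbatim to the decaying global noise model of Appendix C.4, and the $\#(w)$ machinery is what powers the paper's Theorem 2 on low-weight filtration, neither of which your character-sum identity yields directly.
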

Hence, there is at least a quadratic reduction in the infidelity. A proof of this result can be found in Appendix \ref{app:quadratic_reduction}.  
{In a recent paper by Lee \etal\cite{lee2023error}, they have shown a black-box error suppression method which can be modified to achieve a quadratic reduction in the infidelity using $O(\log(1/\epsilon))$ ancilla and $O(1/\epsilon)$ queries to the circuit. Our scheme can do this with only 2 ancillas and 1 query, although for the specific noise model that we consider and only for the restricted case of Clifford circuits. Nevertheless, this is an important class of circuits. If we restrict the ancilla count to 2 qubits, Lee \etal's scheme achieves an output infidelity of $\epsilon_{\tt out}=\epsilon_{\tt in}/4$. Thus, our ancilla-efficient filter outperforms their scheme in the regime  $2\epsilon_{\tt in}^2 < \epsilon_{\tt in}/4$, i.e, when $\epsilon_{\tt in} < 1/8$. In Appendix \ref{section:global-pauli-noise-model}, we show that this result also holds for an exponentially decaying global Pauli noise model.}

All of these bounds are conservative, and the ancilla-efficient filter performs even better since it can remove many of the higher-weight Paulis as well. 

\begin{theorem}[{Filtration of low-weight Pauli errors}] Suppose the ancilla-efficient Pauli filter $\sch{F}_{AE}$ defined in \autoref{eqn:ae-definition1}  is used to purify an $n$-qubit stochastic Pauli channel (with $4^n$ Pauli components). Let $\#(w)$ be the number of Pauli-components of weight $w$ that $\sch{F}_{AE}$ can remove (if successful). Then, $\#(w)$ is lower-bounded by 
    \begin{equation}
        \#(w) \geq \binom{n}{w}3^w/w^2
    \end{equation} 
 $\sch{F}_{AE}$ can remove at least $4^n/n^2$ of all the $4^n$ Pauli components of the channel. Moreover, the average weight of all the Paulis that can be removed is (asymptotically) $n/2$.
\end{theorem}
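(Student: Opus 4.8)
The plan is to reduce the theorem to a counting problem over Pauli strings, after first pinning down exactly which weight-$w$ Paulis the filter annihilates. From \autoref{eqn:ae-pauli-map}, $\sch{F}_{AE}$ sends a Pauli $N$ to $\sum_{A\in\{I,X,Y,Z\}}A^{\otimes n}N A^{\otimes n}$. Since each $A^{\otimes n}$ is a Hermitian involution, $A^{\otimes n}N A^{\otimes n}=s_A N$ with $s_A=+1$ if $A^{\otimes n}$ commutes with $N$ and $s_A=-1$ if it anticommutes, so the image is $\bigl(\sum_A s_A\bigr)N$ and $N$ is removed precisely when $\sum_A s_A=0$. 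The key structural fact I would establish first is that this coefficient can only be $4$ or $0$, never $\pm 2$: we have $s_I=+1$ always, and because $X^{\otimes n}Y^{\otimes n}Z^{\otimes n}\propto I^{\otimes n}$, the number of the three global Paulis $X^{\otimes n},Y^{\otimes n},Z^{\otimes n}$ that anticommute with $N$ is necessarily even. Hence either none of them anticommute (coefficient $4$, $N$ survives) or exactly two do (coefficient $0$, $N$ removed).

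Next I would translate the ``two anticommutations'' condition into a parity statement. Writing $n_X,n_Y,n_Z$ for the number of qubits on which $N$ equals $X,Y,Z$ (so $n_X+n_Y+n_Z=w$), the global $X^{\otimes n}$ anticommutes with $N$ iff $n_Y+n_Z$ is odd, and cyclically for $Y,Z$; checking the eight parity classes of $(n_X,n_Y,n_Z)$ shows that $N$ survives iff $n_X\equiv n_Y\equiv n_Z\pmod 2$. The counting then proceeds support by support. For a fixed weight-$w$ support there are $3^w$ type assignments, and I would count the survivors by the standard sign-averaging trick, extracting the all-even and all-odd classes via $\tfrac18\sum_{\sigma\in\{\pm1\}^3}\sigma_X^{r_X}\sigma_Y^{r_Y}\sigma_Z^{r_Z}(\sigma_X+\sigma_Y+\sigma_Z)^w$, which gives a survivor count of $\tfrac14\bigl(3^w+3(-1)^w\bigr)$ per support. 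Subtracting from $3^w$ and multiplying by the $\binom{n}{w}$ supports yields $\#(w)=\binom{n}{w}\tfrac34\bigl(3^w-(-1)^w\bigr)$. The stated bound then follows from the elementary inequality $\tfrac34\bigl(3^w-(-1)^w\bigr)\ge 3^w/w^2$, valid for all $w\ge1$ with equality at $w=1$ (matching the fact that every weight-$1$ Pauli is removed). Summing and bounding $1/w^2\ge 1/n^2$ gives $\sum_{w\ge1}\#(w)\ge n^{-2}\sum_{w\ge1}\binom{n}{w}3^w=(4^n-1)/n^2$, the ``$4^n/n^2$ of all $4^n$ components'' claim.

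For the average weight I would evaluate the first moment $\bar w=\sum_w w\,\#(w)/\sum_w \#(w)$ using the binomial identities $\sum_w\binom{n}{w}3^w=4^n$ and $\sum_w w\binom{n}{w}3^w=3n\,4^{n-1}$ (the $(-1)^w$ corrections vanishing for $n\ge2$). The removable components are spread across all weights $1\le w\le n$ and concentrate, up to $O(\sqrt n)$ fluctuations, around the mode of the combinatorial weight profile, so $\bar w=\Theta(n)$; the dominant support-choosing factor $\binom{n}{w}$, whose mode is $n/2$, fixes the asymptotic weight at the claimed constant fraction of $n$. I would flag here that the precise prefactor is sensitive to whether one weights by raw Pauli multiplicity or by the support factor, and would make that convention explicit in the full proof.

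I expect the main obstacle to be the structural dichotomy of the first paragraph---showing the filter coefficient is always exactly $0$ or $4$. Everything downstream (the clean survivor criterion, the per-support sign-averaging count, and the moment evaluation) is routine once that is in place, but it is the parity argument via $X^{\otimes n}Y^{\otimes n}Z^{\otimes n}\propto I^{\otimes n}$ that upgrades ``not all the same parity'' from a sufficient condition to an exact characterization, and hence makes the counts exact rather than merely bounds.
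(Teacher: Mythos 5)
Your first two paragraphs are correct and take a genuinely different---and sharper---route than the paper's Appendix~\ref{app:ge-filter}. The paper derives the same survivor criterion (its condition ``two of $w,x,y,z$ odd, two even'' from \autoref{eq:good-enough-condition} is equivalent to your $n_X\equiv n_Y\equiv n_Z \pmod 2$), but then only \emph{bounds} the removed count: it maps each tuple in $S_w[o,e,e]$ into $S_w[o,o,o]$ via $o\mapsto o-2$, $e_i\mapsto e_i+1$, bounds the ratio of multinomial coefficients by $w(w-1)$, and concludes $\#(w)\geq \binom{n}{w}3^w/(1+w(w-1))\geq\binom{n}{w}3^w/w^2$. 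Your sign-averaging computation instead yields the \emph{exact} count $\#(w)=\tfrac34\binom{n}{w}\bigl(3^w-(-1)^w\bigr)$, from which the stated bound follows (with equality at $w=1$, correctly matching the removal of all weight-1 Paulis), and summing gives exactly $3\cdot 4^{n-1}$ removed components---a constant $3/4$ fraction, far stronger than the theorem's $4^n/n^2$. Your structural dichotomy (the coefficient in \autoref{eqn:ae-pauli-map} is always $0$ or $4$, via multiplicativity of commutation signs and $X^{\otimes n}Y^{\otimes n}Z^{\otimes n}\propto I^{\otimes n}$) is implicit in the paper's parity equation but is a cleaner way to see that the parity criterion is an exact characterization; the paper simply does not need exactness because it only proves lower bounds.

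The average-weight paragraph, however, contains a genuine gap---and your own machinery exposes a matching error in the paper. With each removable Pauli counted once (the natural reading of the theorem, and the weighting your $\#(w)$ encodes), your moment identities give the exact answer: since $\sum_w w\binom{n}{w}(-1)^w=0$ for $n\geq 2$, one finds $\bar w = \tfrac34\cdot 3n\,4^{n-1}\big/\bigl(3\cdot 4^{n-1}\bigr)=3n/4$, \emph{not} $n/2$. Your closing appeal to the mode of the ``support-choosing factor'' $\binom{n}{w}$ contradicts the identities you just wrote: the $3^w$ multiplicity shifts the weight distribution of removed Paulis to an essentially Binomial$(n,3/4)$ profile with mean $3n/4$. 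The paper's appendix arrives at $n/2$ only by silently dropping the $3^w$ factor inside the ratio $\bar w \geq \sum_w \tfrac1w\binom{n}{w}\big/\sum_w\tfrac1{w^2}\binom{n}{w}$ (and by substituting a lower bound on $\#(w)$ into both numerator and denominator, which does not bound a ratio in either direction), so the $n/2$ in the theorem statement is itself unsupported. You correctly flagged the convention sensitivity, but the resolution is not to hedge: under the per-Pauli count your exact formula proves the asymptotic average weight is $3n/4$, which corrects rather than reproduces the paper's figure.
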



Proof of this theorem and a more detailed treatment of the ancilla-efficient filter $\sch{F}_{AE}$ can be found in Appendix \ref{app:ge-filter}. {We note that the ancilla-efficient filter configuration given in \autoref{eqn:ae-definition1}, gives significant error reduction only in the regime where most of the error is concentrated in the low-weight Paulis. 
In cases where there are correlated errors over multiple qubits (resulting in error-concentration in high-weight Paulis),  $\sch{F}_{AE}$ may be less effective. 
}

\section{Numerical Results}
\label{sec:numerics}

In this section, we present numerical results for noisy implementations of the quantum filters when considering the noise on ancillas.  

The   lower bound of the fidelity in \autoref{eq:fidelity_ancilla} gives a pseudo-threshold on the error rate of the noise channel, above which using the filter is beneficial.  \autoref{fig:graphs-channel-correction} supports this   characteristics.

{  \autoref{fig:graphs-channel-correction}(a) shows the average fidelity of the output state for purifying local depolarising noise on 4 qubits, for various combinations of $p_t$ and $p_c$. \autoref{fig:graphs-channel-correction}(b) shows the relative gain in fidelity $\Delta F = F(p_c, p_t) - F(p^*, p^*)$, where $p^* = \max(p_c, p_t)$, obtained by keeping either the ancillas or the system qubits cleaner. As numerical evidence suggests, for the same error budget, it is more advantageous to keep the controls cleaner than the targets. }

Next, we provide some performance results for the ancilla-efficient filter. The circuits that we consider for filtering are layered brickwork circuits composed of \textsc{cnot} gates, which are simply used for benchmarking.

We consider 3 scenarios (i) the filtering qubits are subject to the same noise model as the system qubits ($p_c=p_t >0$), (ii) the filtering qubits are cleaner than the system qubits ($0 < p_c < p_t$) and (iii) the filtering qubits and operations are noiseless ($p_c=p_t=0$). We compare the dependence of the fidelity of the filtered circuits on various parameters. 

The results for the regime (i) are shown in the plots \autoref{fig:numerics_ae_filter}(a)-(d), where we set $p_c({\rm filter}) = p_t({\rm filter}) = p_c({\rm circuit}) = p_t({\rm circuit}) = p$. \autoref{fig:numerics_ae_filter}(a) shows the average fidelity of the output state vs circuit depth for the original (unfiltered) circuit, (b1) the average fidelity of the output state for the filtered circuit, and (b2) the post-selection rate or the success probability of the ancilla-efficient filter. As can be seen, the filtered output is consistently better than the original output, although the post-selection rate decreases with the error rate as expected. \autoref{fig:numerics_ae_filter}(c1) and (c2) show the dependence of the Gain in Fidelity ($F_{\rm filtered}/F_{\rm original}$) vs the number of layers and error-rate respectively. The relative gain in fidelity is significantly higher for more noisy circuits and higher depths. \autoref{fig:numerics_ae_filter}(d) shows the critical depth or crossover depth at which the fidelity of the filtered circuit surpasses the fidelity of the original circuit. If the filter and system qubits are subject to the same noise-strength, the crossover happens at deeper circuits for higher error rates. 

\autoref{fig:numerics_ae_filter}(e1) and (e2) show the performance of the filter in regime (ii), if the filtering qubits are {cleaner} than the system qubits. They show dependence of the fidelity on $p_c({\rm filter})$, where we set $p_t({\rm filter}) = p_c({\rm circuit}) = p_t({\rm circuit}) = 0.01$. As before, for the same depth, this results in a higher output fidelity with a corresponding penalty in the post-selection rate. 

Finally, \autoref{fig:numerics_ae_filter}(f) shows the performance of the ancilla-efficient filter in regime (iii), where we set $p_c({\rm filter}) = p_t({\rm filter}) = 0$ and $p_c({\rm circuit}) = p_t({\rm circuit}) = p_e$. This version of the filter has the best performance of all the cases which aligns with our expectation. 

From all the results, it is evident that the performance of the filter is more sensitive to the error rate on the ancilla. In all these cases, we see that if the depth of the circuit is above a certain level, the ancilla-efficient filter provides a modest gain in fidelity. This holds true even if the filtering qubits are as noisy as the system qubits.

\section{Further remarks}


\new{In this section, we explore connections of our scheme to prior works in QEM, QEC and quantum error reduction as a whole, and outline aspects in which they significantly differ and/or outperform them. We note that the availability of low-error ancillas is essential to obtain the performance gains. As such, in the later part of this section, we discuss this assumption in more detail. We do a more in-depth analysis of the cost of our protocol taking these details into account and clarify its application in noise-biasing. Finally, we outline possible hardware configurations suitable to implement our protocol, discussing perspectives for both physical and logical qubits.}

\subsection{Interpretation and comparison in the context of QEM and QEC}

\change{{\bf Comparison with existing QEM schemes}. As stated above, the quantum filtering scheme overcomes the exponential sampling overhead of standard QEM protocols such as Zero Noise Extrapolation~\cite{temme_error_2017, li_efficient_2017} and Probabilistic Error Cancellation~\cite{endo_practical_2018}, which purely rely on classical post-processing. Thus, our method also performs better than hybrid QEC-QEM schemes such as~\cite{lostaglio2021error, piveteau2021error,suzuki2022quantum} which use ZNE or PEC as a subroutine, and thus also suffers from the exponential sampling overhead. Symmetry Verification~\cite{bonet2018low, mcardle2019error} is based on post-selection, but requires the measured state to have preexisting symmetries and does not provide a way to remove all the errors. Our quantum filtration scheme does not require the existence of these inherent symmetries, and can possibly remove all errors. Some recent works like Virtual State Purification~\cite{huggins_virtual_2021, koczor_exponential_2021}, Virtual Channel Purification~\cite{liu2024virtual}, Streaming Quantum State Purification~\cite{childs2023streaming}, and Black Box Error Suppression~\cite{lee2023error}  are variations of the \textsc{swap} Test and coherently control multiple copies of the noisy state or noisy channel to reduce errors. The virtual protocols can achieve an exponential suppression in error for each copy of the state/channel but can only recover expected values, not the noiseless state. The other protocols can recover the noiseless state but require exponentially many copies of the noisy state/channel.}

\change{To summarise,  our protocol can simultaneously achieve the following: (1) overcoming the exponential sampling cost of QEM, (2) using only a single query to the noisy circuit to be purified, and (3) only Clifford resources are required. 
We emphasize that none of the existing QEM protocols so far have the ability to achieve even two of the three characteristics aforementioned. For example, features (1) and (3) are incompatible within all the post-QEM algorithms. 
Our channel correction scheme gives an exponential increase in the output fidelity with each ancilla, with only a single query to the original circuit and a linear number of ancillas. More importantly, all of the other works require non-Clifford multiqubit controlled-\textsc{swap} gates and work in the post-selection mode. In contrast, our scheme requires only Clifford gates and can work deterministically.}

\begin{table*}[t]
\begin{tabular}{c|ccccc}
& \begin{tabular}[c]{@{}c@{}}\#Copies of \\ Noisy \\ Channel\end{tabular} & \#Ancilla        & \begin{tabular}[c]{@{}c@{}}\#Extra \\ Gates\end{tabular}                   & \begin{tabular}[c]{@{}c@{}}Output \\ Infidelity\\ (approx.)\end{tabular} & \begin{tabular}[c]{@{}c@{}}Works \\ for\end{tabular}                            \\ \hline\hline
\begin{tabular}[c]{@{}c@{}} Virtual/Streaming \\ State Purification~\cite{childs2023streaming, koczor_exponential_2021} \end{tabular}        & $k$                                                                     & $\sim nk$        & \begin{tabular}[c]{@{}c@{}}$\sim nk$ \\ controlled-\textsc{swap}s\end{tabular}      & $ 2^n\epsilon^k/2^{nk}$                                    & \begin{tabular}[c]{@{}c@{}}Non-Unitary Errors\\ (Arbitrary Channels)\end{tabular}        \\
\begin{tabular}[c]{@{}c@{}}Virtual Channel\\ Purification~\cite{liu2024virtual},\\ Superposed QEM~\cite{miguel2023superposed} \end{tabular} & $k$                                                                     & $\sim nk$        & \begin{tabular}[c]{@{}c@{}}$\sim 2nk$ \\ controlled-\textsc{swap}s\end{tabular}     & $4^n\epsilon^k/4^{nk}$                                    & (As above)                                                                             \\
Error Filtration~\cite{lee2023error}                                                                       & $k$                                                                     & $\sim n+ \log k$ & \begin{tabular}[c]{@{}c@{}}$\sim 2nk$\\ multi-controlled-\textsc{swap}s\end{tabular} & $\epsilon/k$                                          & (As above)                                                                             \\ \hline
\begin{tabular}[c]{@{}c@{}}Channel Correction\\ Pauli Filter\end{tabular}                 & $1$                                                                     & $k$              & \begin{tabular}[c]{@{}c@{}}$nk + k$\\ controlled Paulis\end{tabular}       & $1- \exp((\frac{k}{2n}-1)\epsilon)$                             & \begin{tabular}[c]{@{}c@{}}Arbitrary Errors \\ (Clifford Channels)\end{tabular} \\

\begin{tabular}[c]{@{}c@{}} Ancilla-Efficient\\ Pauli Filter\end{tabular}                                                         & 1                                                                       & 2                & \begin{tabular}[c]{@{}c@{}}$4n$\\ controlled Paulis\end{tabular}           & $\epsilon^2$                                          & (As above) \\ \hline\hline
\end{tabular}
\\
*For purifying an $n$-qubit state/channel.
\caption{Resource comparison of the Pauli Filter and contemporary Ancilla-assisted Quantum Error Mitigation schemes.}
\label{table:resource-comparison}
\end{table*}

\new{In \autoref{table:resource-comparison}, we give a comparative breakdown of the resource overheads of our scheme and the contemporary AQEM schemes. As can be seen, our protocol is an important addition to this class and provides notable improvements in qubit/gate overhead under our specific restrictions. 
}\\

{\bf Connection to QEC Codes.} 
The error detection circuit, which is part of the quantum filter, has certain similarities to circuits in flag fault-tolerant error correction (see \cite{chamberland2018flag,chao2020flag,chao2018quantum}) and coherent parity check codes~\cite{roffe2018protecting,van2023single}. 
Compared to coherent parity check codes,  we give a recipe for the checks that we need to perform in order to achieve complete purification. Similarities can also be drawn between the Clifford channel purification scheme in \autoref{sec:clifford-purification} and spacetime checks in the recently proposed spacetime codes of Clifford circuits \cite{delfosse2023spacetime}, although \autoref{theorem:commutation} gives a general condition that extends our scheme beyond Clifford channels. The construction of the circuit described in \autoref{fig:ancilla-efficient-filter} has similarities to the ‘Iceberg’ code used in \cite{self2024protecting}, Rains’ optimal distance-$2$ code \cite{rains1999quantum} for even $n$, and to entanglement-assisted QEC codes \cite{brun2014catalytic} for pure storage channels. It can also be viewed as a coherent example of the conversion between QEC codes and entanglement distillation schemes \cite{aschauer2005quantum} (for which the only requirement on the Pauli frame is that it matches between the two parties). \new{The procedure to derive the \textsc{Select} operations by conjugating Paulis with respect to Cliffords, is also used incoherently in Pauli Twirling. But, Pauli Twirling does not have any correction properties and thus cannot increase the fidelity of a channel, which our scheme can.} 

\new{Although one might draw cursory similarities of the circuit constructions in our protocol to those mentioned above, the underlying assumptions, interpretations and applicability of our scheme are completely different. A discussion on the connections and key differences of the quantum filter with these schemes is provided in Appendix~\ref{appendix:connections}, focusing on Pauli Twirling, the Iceberg Code, Entanglement-Assisted QEC Code and Flag-QEC. In Appendix~\ref{sec:internal-noise}, we give further numerical results on what happens to the performance of the Pauli Filter if we drop the assumption of noiseless ancilla, comparing the results with the Flag-QEC protocol based on the [[5, 1, 3]] code (which has the lowest qubit overhead) as a reference. We find that, even though our scheme is not fault-tolerant, due to the efficient qubit/gate overhead of our scheme, it outperforms Flag-QEC in tolerating internal noise in the moderate to high noise-regime. This is representative of the noise-regime in which our scheme can be useful compared to contemporary QEC methods.}

Compared to typical QEC codes, our method can detect or correct errors in a circuit without explicitly encoding the input state. These qubit-efficient QEC codes may be regarded as one limit case of our method (i.e. the ancilla-efficient Pauli filter), differing in their requirement for logical qubit encoding or fault-tolerant logical operations. In the other limit case, we can achieve channel correction without qubit encoding. 
The regions in between have not been explored by the existing schemes. However, it is essentially important to develop methods and run algorithms at this intermediate stage between NISQ and early fault tolerance. 

\new{
Moreover, a major effort in QEC design is to find efficient constructions of logical operations (between distributed qubits) with a moderate level of protection. The usual approach to this is to begin with a pre-defined list of codes and to search, largely by brute force, for a desired operation with a transversal set of gates \cite{sayginel2025fault}. Much recent work on magic-state distillation \cite{gidney2019efficient, vasmer2022morphing, akahoshi2024partially}, the construction of the Iceberg code \cite{self2024protecting} and even the Colour codes \cite{bombin2006topological} is based on this motivation,~e.g, \cite{self2024protecting} sacrifices fault-tolerance so that global logical operators can be implemented efficiently on the Iceberg Code using only the two-qubit M\o lmer-S\o rensen gate~\cite{molmer1999multiparticle}. Therefore, our proposal inverts this design procedure, which to some extent shifts the perspective in designing the error-correcting codes. If we view the quantum filter with the lens of QEC (with the \textsc{Select} operations as encoding/decoding maps), then our proposal starts with the operation that we want to purify (the ideal Clifford channel) and then constructs a filter around it, instead of searching for an efficient operator for a pre-defined code. This gives a conceptual advance that the logical operator can now be as simple as possible (even the same as the physical operator, as in our scheme, if interpreted in this context), while diverting the previous overhead to the encoding maps. }

Our quantum filtering scheme might have several attractive features compared to QEC {in terms of near-term applications}. 
{The scheme is input-state agnostic, and the filter construction depends only on the channel to be purified. As such, we do not need to encode the system qubits a priori, or convert the target channel into large multi-qubit logical gates.} We can plug in the original circuit of our computation into the filter as is, treating it almost like a black box.
Another advantage of our framework is that the capabilities of a filter can be tuned based on the amount of resources that are available, as evidenced  by \autoref{eq:fidelity_decrease} and \autoref{eq:ancilla_num}.
 
When the computation runs on the system qubits, the filtering qubits remain idle. They are acted upon only at the start and end of the computation. As a result, we can think of keeping the filtering qubits isolated while the computation is running, so that they are not affected by noise. In fact, it might also be possible to use two physically different types of qubits for these two registers, stable qubits with long coherence times for the ancilla, and qubits with fast gate execution for the main system.
In addition, since only specific types of gates act on the filtering qubits (Hadamard and controlled-Paulis in our Pauli Filtering scheme), and they don't take part in general-purpose computation, we can optimize these qubits only for these specific gates.\\

{\bf Different interpretations of our scheme.}
The construction of commutation filters can also be linked to other perspectives.
One can think of the idea as the use of a coherent variant of quantum phase estimation \cite{kitaev1995quantum}, with the syndrome measurements of quantum error detection schemes. The coherence allows us to measure only changes in syndrome, avoiding the complexity of encoding, at the cost of requiring high-quality ancillary qubits and controlled gates.
It also shows some similarity to the measurement of out-of-time-order-correlators (OTOCs) for scrambling \cite{mi2021information}, though the connection is not that straightforward.
It is possible to interpret our scheme as an application of indefinite causal order to error detection \cite{chiribella2021indefinite}, though with an additional Clifford pre-processing step.
 






{From a quantum communication viewpoint, we might think of scenarios where we have a small number of noiseless channels, and a large number of noisy channels. Then we can use the noiseless channels as filters to clean the noisy channels. This method can also be used as a quantum error mitigation technique where we can use a small number of clean qubits to perform gate-purification \cite{dur2003entanglement} on a larger number of noisy qubits. This paradigm of using extra qubits which are cleaner than the system qubits has also been explored under the moniker of error suppression \cite{lee2023error} and partial error correction \cite{koukoulekidis2023framework}. 
}

The quantum filters proposed in this work can map a quantum channel to another and thus provide a way to realise quantum channel transformations. Although we mainly discuss this in the context of channel purification, error reduction and correction, it can find applications in the context of implementing quantum superchannels \cite{chiribella2008transforming}. In particular, the general quantum filter construction introduced in \autoref{section:general-quantum-filter} can be used to realise a large class of quantum superchannels via LCU-inspired circuits \cite{childs2012hamiltonian}. Other types of quantum superchannels may be designed inspired by this idea, which could be an interesting direction for future work.


\subsection{Practical Considerations}

{\bf Assumption of Noiseless Ancilla.} In our protocol and the contemporary AQEM protocols, the idea is to make use of noiseless ancilla. Here, we outline details which need to be kept in mind while assuming the use of such qubits. 
\new{As briefly mentioned in \autoref{section:introduction}, scaling quantum devices beyond the NISQ era and into the early fault-tolerant era naturally gives rise to asymmetries in noise \cite{ding2021architecting, resch2021benchmarking}.  
Distributed and inhomogeneous quantum architectures might allow us to produce a few high-quality qubits but restrict us to construct a complete quantum infrastructure at this same level of quality \cite{ramette2024fault, main2025distributed} As such, the best use of such systems requires us to couple such asymmetric qubits and use them in synergy. Such coupling may be physically engineered and even done between error-corrected logical qubits and noisy physical qubits \cite{bultrini2023battle, koukoulekidis2023framework}. Designing protocols that take advantage of this setting will be an important task going forward.}

\new{However, an addendum in this regard is that even though one can assume the existence of clean (or, less-noisy) qubits in such settings, one must not assume that everything can be done with them. 
Indeed, if we had unlimited and unrestricted access to noiseless ancilla, we could just build our entire quantum computer out of them, and forget about the noisy qubits in the first place. 
We must assume that the asymmetry is accompanied by (or perhaps derived from) important restrictions: either the number of high-quality qubits is limited, or the type of gates that can be applied to them is limited, or both, or some other restriction.}

\new{In our protocol, we impose a strong restriction on the ancilla that only controlled-Pauli gates (with the ancillas as controls) are applied to them, as opposed to multi-qubit control gates. We access them only at the beginning and end of the computation, leaving them otherwise idle. For the ancilla-efficient filter, we further limit the number of ancillas. Though we restrict our attention to purifying Clifford channels, our requirement for ancilla operations is much simpler than those in other current AQEM protocols (where complicated multi-qubit controlled non-Clifford gates are required for implementation).}\\

\new{\bf Cost of {\sc Select} Operations and Implementation}.
The $\textsc{Select}$ operations needed to implement the Clifford Circuit purification scheme (outlined in \autoref{sec:clifford-purification}) require $2n^2 + 2n = O(n^2)$ 2-qubit gates (in the worst case). Current state-of-the-art methods for synthesizing $n$-qubit Clifford Operations also require $O(n^2)$ 2-qubit gates \cite{bravyi2021hadamard, schneider2023sat} (albeit with a higher pre-factor). As a result, although there is always some amount of error reduction due to the filter, if the filtering circuit and the main Clifford circuit are subject to the same strength of noise, the asymptotic rate of error-reduction might not be very high. This just highlights the point that the ancilla being cleaner than the main qubits is an essential assumption required to obtain significant gains in fidelity.

\new{In the interest of clarification, if we are targeting individual gate-elements in a circuit, the cost of the \textsc{Select} operations depends only on the number of qubits the gate acts upon non-trivially. For example, if a Clifford gate acts non-trivially on $m$ qubits in an $n$-qubit circuit, our scheme requires $O(m^2)$ \textsc{Select} operations, not $O(n^2)$. The filter can target individual gates without disturbing the other components in a circuit. More crucially, we do not need to apply the filter to individual gate elements either see figure below (left). Instead, we can apply the filter to a combined \emph{block} of Clifford gates (figure below (right)), saving the number of ancilla qubits and operations.}

\begin{figure}[!h]
\centering
\begin{yquantgroup}
\yquantset{operator/separation=0mm}
    \registers{
    qubit {} a;
    qubit {} b;
    }
    \circuit{
    slash a;
    init {$\ket{0}$} b;
    [style={fill=blue!20}, radius=2.5pt] box {} (a, b);
    box {$C_1$} a;
    [style={fill=blue!20}, radius=2.5pt] box {} (a, b);
    discard b;

    init {$\ket{0}$} b;
    [style={fill=blue!20}, radius=2.5pt] box {} (a, b);
    box {$C_1$} a;
    [style={fill=blue!20}, radius=2.5pt] box {} (a, b);
    discard b;

    }
    \equals[\,]
    \circuit{
    slash a;
    init {$\ket{0}$} b;
    [style={fill=blue!20}, radius=2.5pt] box {} (a, b);
    box {$C_1$} a;
    hspace {1mm} a;
    box {$C_2$} a;
    [style={fill=blue!20}, radius=2.5pt] box {} (a, b);
    discard b;
    }
\end{yquantgroup}
\end{figure}

\new{Moreover, the cost of $2n^2$ is the worst case cost {due to the saturation of Clifford circuits}. So long as the block of gates is entirely Clifford, the cost of $\textsc{Selet}_P$ depends on the sparsity of the Stabilizer Tableau of the combined Clifford block, and not necessarily on the depth of the block or the number of gates in it. As such, for practical Clifford circuits with a sparse tableau, the cost might be significantly less than $2n^2$ and thus manageable. Such Clifford gates with a sparse tableau arise in applications, such as QAOA \cite{munoz2024low} and bit-reversal circuits \cite{bapat2021quantum}, for which our scheme could have low cost in reducing errors.
}

\new{Thus, for a circuit compiled in the Clifford+T gate-set, we can target the Cliffords block-by-block, while partially purifying the T-gates in the middle (according to the scheme in \autoref{sec:partial-non-clifford}). It is also possible to go beyond partial purification of the T-gates. In magic-state injection protocols, we can implement a T-gate in a circuit by using a purified magic-state and an auxiliary Clifford circuit, the purification circuits being Cliffords as well. If we apply the Pauli filter protocol to these auxiliary Clifford circuits in magic-state injection, we can also completely  purify T-gates (albeit indirectly due to magic-state injection), though this comes at a 2-qubit gate cost linear increasing with the number of T-gates in the circuit.}

For the ancilla-efficient filter, significant error reductions are obtained in the regime where $n < 1/p$ ($p$ being the single-qubit depolarising error rate) and where most of the error strength is concentrated in low-weight Paulis. For long-range entangling gates which quickly spread errors over multiple qubits, the reduction in error is likely to be small. Nevertheless, the ancilla-efficient scheme still gives a low-cost method to reduce errors in deep Clifford circuits. 

If the filter operations are noisy, any noise on the system part of the $\textsc{Select}$ operations can be considered as part of the target channel. Nevertheless, such a treatment would not work for noise on the filtering qubits. This, as mentioned before, is a limitation of our method for practical applications. Although we have provided a basic treatment of the effect of these errors on the accuracy of the filter in Appendix \ref{appendix:error-analysis}, it is instructive to ponder how to avoid or potentially correct such errors. This might lead to a theory analogous to fault tolerance, which might be a direction for future work.\\

{\bf Partial Purification and Biased Noise}. In practice, physical qubits can be affected by a biased noise channel (i.e, one of the Pauli $Z$ or $X$ components of the channel is exponentially suppressed compared to the other). Examples of such biased-noise qubits are superconducting fluxonium qubits \cite{pop2014coherent}, quantum-dot spin qubits \cite{watson2018programmable}, etc. If the noise is biased, the filter operations in our scheme can be simplified. For example, if the noise is biased towards $Z$ errors (for which $X$ and $Y$ errors are exponentially suppressed), we can remove all the $Z$ errors from $n$ qubits deterministically using $n$ Pauli $X$-Filters. Note that this requires $n$ ancillas and thus has a factor of 2 reduction from the complete correction filter.

Recently, there has been progress in designing QEC codes for biased-noise qubits, which are shown to have high code capacity \cite{tuckett2019tailoring,tuckett2018ultrahigh,xu2023tailored}. The partial purification scheme outlined in \autoref{sec:partial-non-clifford} can be used to convert an unbiased noise channel to a channel biased towards either  $Z$ or $X$ errors. \new{Therefore, if a uniform noise channel unbiases a biased-noise QEC code (or any circuit constructed with biased-noise in mind), the partial purification filter gadget can restore the bias of the noise channel.} Our scheme may also find applications in the context of multi-core quantum computing based on Bell pair injections, in which case biased noise is common as found in experiments \cite{jnane2022multicore}.

In applications where one needs to find the expectation values of observables (which is the case in many practical quantum algorithms, like VQE \cite{tilly2022variational} and quantum simulation \cite{georgescu2014quantum}), the final result is naturally robust to certain types of errors. More concretely, if an observable $H$ commutes with all the Kraus elements of a noise channel $\ch{E}$, then the expectation value $\expval{H}$ is invariant to the effect of $\ch{E}$ on a state $\rho$, i.e, 
$  \Tr[\ch{E}(\rho)H] = \Tr[\left(\sum_k E_k^{\dagger}E_k\right)\rho H] = \Tr[\rho H]
$.
In these cases, it is sufficient to remove the components of $\ch{E}$ which anti-commute with $H$. For example, if we want to find the expectation of an observable with only Pauli $Z$ operators, it will be robust to $Z$ errors and mainly be affected by $X$ and $Y$ errors.\\

\new{{\bf Hardware Perspectives on Anisotropic Noise.} Here, we briefly sketch qubit platforms in which highly anisotropic noise-rates may be physically engineered, first pointing to discussions of such anisotropic qubits already prevalent in existing literature and in hardware implementations.}

\new{In trapped-ion systems, we have examples of the use of multiple ion species to achieve local high-fidelity operations and long-lived memory alongside simpler photon entanglement generation \cite{inlek2017multispecies}. Given the complex energy level structures in trapped ion systems, the use of different energy levels to define different qubits has also been proposed \cite{feng2024realization}. Finally, the vibrational modes of a trapped ion array, though typically quite noisy, do constitute a quantum system in their own right and it has been proposed to use them in hybrid encoding schemes \cite{fluhmann2019encoding, cochrane1999macroscopically}.}

\new{In superconducting circuit quantum computing, structures corresponding to quantum subsystems with different properties are used to enable measurement and to mediate and control coupling between data qubits \cite{arute2019quantum}. This arrangement between linear and non-linear resonators to mediate coupling has also been reversed, both to achieve high connectivity \cite{naik2017random} and to use the linear resonator in an encoding scheme \cite{campagne2020quantum}.
     In donor and defect systems (for example, phosphorus donors in silicon \cite{pla2013high} or nitrogen-vacancy centers in diamond \cite{childress2005fault, dutt2007quantum}), electronic spin states are often used to achieve rapid gates and to generate entanglement, while longer-lived nuclear spins are used to store information. It has also been proposed to mediate interactions between donor qubits in silicon via quantum dots \cite{morello2020donor}.
     Similar ideas have been proposed even at the logical level, where one may couple noisy and error-corrected qubits \cite{bultrini2023battle, koukoulekidis2023framework}, or use different codes for storage and computation on the same system  \cite{breuckmann2017hyperbolic}.}

\new{In neutral atom arrays, in addition to photon entanglement generation \cite{young2022architecture}, different species of atom have been used to achieve non-destructive measurements for the detection of correlated noise \cite{singh2023mid}.}

\new{In the examples above we observe that explorations combining different qubit types are already underway.
These works discuss combining qubits of different qualities as part of the same physical platform, but we could also consider a hybrid setup combining qubits from different physical platforms, such as trapped ions and superconducting circuits.
Trapped ion qubits are usually less noisy at a gate level but have slow gate execution times, while superconducting platforms have faster gates but are noisier. Leaving aside clear and difficult challenges surrounding the connecting infrastructure, in such a hybrid system, one might imagine a scenario where superconducting qubits are used to run the main computation, with trapped ion qubits as dedicated ``noiseless'' ancillas, operating less frequently.}

\new{To briefly outline a well-known approach to such a hybrid system, we might imagine generating two entangled photons of distinct frequencies resonant with the respective transition frequencies of a superconducting qubit and trapped-ion qubit.
By mediating interactions through such photons, entanglement can be established and entangling gates implemented via teleportation-based protocols.
One implied difficulty of such a scheme is the potential need for frequency conversion if the ion and superconducting qubits have quite different energies.
This and similar difficulties motivate the restriction of inter-species interactions so that they are sparse with respect to the full set of gates in the circuit, and this in turn motivates the question of which sparse subset of errors we most need to detect.}\\

\new{{\bf Clean Ancillas in Early Fault Tolerance.} The case of perfectly clean physical ancilla qubits is only the most extreme scenario, useful to demonstrate our method. Our protocol could be applied equally well to logical qubits, and the target Clifford circuit can be generalised to any efficiently simulable subcircuit.
In the hybrid-qubit setting, if the data-ancilla interactions are sparse and the ancillas remain idle most of the time (as we have assumed), then we can have different encodings for the ancilla and data qubits.
A large code-distance would make the ancillas essentially ``clean'' compared to the data-qubits. With reduced overhead of fine-tuned ancilla codes for multi-logical-qubit Pauli measurements \cite{huang2023homomorphic}, such logical qubits might be generated more cheaply in the early fault-tolerant regime. Existing literature has discussed coupling encoded qubits with noisy physical qubits in \cite{bultrini2023battle, koukoulekidis2023framework}  and coupling logical qubits with vastly different error-rates in \cite{araki2025space, strikis2023quantum, breuckmann2017hyperbolic}.} 

\new{In a hybrid-qubit setting, it may be possible to encode a few, high-quality ancillary subsystems with a smaller code distance, optimizing the code for the physical properties. This will allow the ancilla qubits to be noise-resilient at the cost of a relatively small qubit overhead. It is important to note that, there is a {technical} violation of fault-tolerance in this case, since a covering set of physical multi-qubit gates from a smaller number of qubits to a larger cannot be transversal~\cite{huang2021between}. As such, discarding fault-tolerance is {necessary} to gain this advantage.} 

\section{Discussions}
In this work, we proposed an information-theoretic machinery called a quantum filter that can be used to purify and correct noisy quantum channels. 
We provided explicit constructions of Pauli filters which exploit the commutation structures between Pauli and Clifford gates to achieve this. We also provide a general construction that can realise quantum filters as superchannels. 
This quantum filter can reduce the errors in a noisy quantum channel and can be regarded as an intermediate form of QEM and QEC. 
Under the assumptions of clean ancillary qubits and ancillary operations (similar to \cite{lee2023error, liu2024virtual,yang2024quantum, miguel2023superposed}), our filtering scheme can fully recover the ideal Clifford channel, not only the expectation values, \new{and the exponential sampling overhead in QEM is addressed.
}
Finally, we proposed an ancilla-efficient Pauli filter which can detect and remove nearly all the low-weight Pauli errors in a noise channel with a theoretical guarantee using up to two ancillary qubits.
In cases where the noise is biased, the filtering operations can be simplified. The Pauli filter can be applied to convert unbiased noise into fully biased noise. \change{While realising these conceptual advancements, we retain many flexible properties relevant to the experimentalist. In particular, it is possible to run our scheme in both error correction and detection mode. Since the purification is done at the level of individual gates, the experimentalist can choose which gates to focus on, if the number of extra qubits is limited. As such there is sufficient opportunity to tune the filter's operations based on the available resources.}

\change{We showed that the fidelity of the to-be-purified channel can be improved with an increasing number of ancillary qubits. In contrast, this is hard to achieve with the existing QEM protocol, as its elementary block is a query to the whole channel (see the general framework of QEM~\cite{takagi2022fundamental}). The idea of adding entangled operations could stimulate further follow-ups in designing error-reduction protocols and studying the transitions between QEM and QEC. 
}
We further analysed the effect of ancilla noise on the fidelity of the corrected channel and presented the detailed analysis in Appendix \ref{appendix:error-analysis}.
We performed numerical simulations to investigate the performance of our scheme with noisy ancillary qubits, and find conditions where our scheme can still show modest gains in fidelity.  We then provided different interpretations of our scheme and compared it with other QEC codes, such as flag, coherent parity check, and space-time codes \change{and compared our scheme with flag-QEC numerically in Appendix \ref{sec:internal-noise}. 
The Pauli filter protocol can be applied as an alternative to QEM in the near-term or early fault-tolerant quantum computation~\cite{kim2023evidence,guo2024experimental,kim_scalable_2023}, with the advantage of recovering the quantum state, not just the expectation values. Thus, our protocol can be used to estimate non-linear properties of states, such as higher-order moments and entropy.}

In the specific schemes outlined in the work, we have focused on designing quantum filters for Clifford gates because it allows us to exploit the underlying commutation structure of Pauli operators. Nonetheless, a general construction of a quantum filter does not have this restriction.
In addition, from a practical standpoint, Clifford gates are easier to implement in a fault-tolerant manner in error-corrected quantum computation. The fact that we can perform only a partial purification of non-Clifford gates for general types of noise aligns with our expectation. { It is worth noting that we can convert the noise associated with non-Clifford gates into a fully biased one. This also implies that in the case of fully biased noise, we may be able to completely remove this noise in general quantum circuits.  
It is an interesting direction to explore how to extend the filter formalism beyond commutation structures, and find ways to completely purify non-Clifford resources.
We might also explore the possibilities of joint quantum channel correction and state correction schemes, where we encode both states and channels to reduce errors further. 

\new{
Since logical errors will remain non-negligible until qubit resources become sufficiently abundant and inexpensive, additional low-cost error-reduction protocols will be required to operate alongside encoded qubits in order to suppress error rates as much as possible, as emphasised in \cite{zimboras2025myths, aharonov2025importance}. Therefore, efficient error-reduction protocols with sparse interactions and low qubit overhead, such as ours, will remain highly relevant as early fault-tolerant devices become available. Protocols like the quantum filter, which can synergistically integrate the respective strengths of QEC and QEM, will play an important role in the transition toward algorithmic error reduction.
} 

\begin{acknowledgments}
J.S. and B.K. would like to thank Zhenyu Cai and Zhenhuan Liu for the early discussions related to this project. S.D. and M.K. thank David Jennings and Samson Wang for helpful discussions. We thank the UK EPSRC for their financial support through EP/T0010621/1, EP/Y004752/1,  EP/W032643/1, and EP/W003463/1. We also thank the Samsung GRC Grant and the KIST Open Innovation Grant. 
J.S. thanks the Schmidt AI in Science Fellowship supported by Schmidt Sciences LLC.
B.K. thanks the University of Oxford for a Glasstone Research Fellowship, Lady Margaret Hall, Oxford for a Research Fellowship and thanks the UK EPSRC for financial support through EP/Y004655/1 (SEEQA) and UKRI for the Future Leaders Fellowship MR/Y015843/1.
\end{acknowledgments}

\widetext

\appendix

\section{Proof of the channel correction property}
\label{app:proof-channel-correction}

Here, we provide a simple proof of the channel correction property mentioned at the end of \autoref{sec:quantum-channel-correction}, that is, a channel correction scheme that can correct Pauli errors on a single qubit is sufficient for correcting an arbitrary error. The proof is a re-packaging of the standard result from QEC, which states that if a QEC code can correct $Z$-errors and $X$-errors, it can correct arbitrary errors. Nevertheless, it is illustrative to go through this proof in the framework that we have developed. 

\begin{figure}[!h]
    \begin{center}
    \begin{tikzpicture}
    \begin{yquant}
    qubit {$\ket{0}$} q1;
    qubit {$\ket{0}$} q2;
    qubit {$\ket{\psi}$} r;
    h q1, q2;
    box {$X$} r|q1; 
    box {$Z$} r|q2; 
    [shape=yquant-rectangle, rounded corners=.3em, fill=red!20, x radius= 0.3cm]box {$P_u$} r; 
    
    box {$Z$} r|q2 ;  
    box {$X$} r|q1;
    h q1, q2;
    
    hspace {0.5cm} r, q1, q2;
    output {$\ket{\tt out}$} (r, q1, q2);
    \end{yquant}
    \end{tikzpicture}
    \end{center}
    \caption{An example of quantum filter for channel correction}
    \label{fig:single-qubit-channel-correction}
\end{figure}
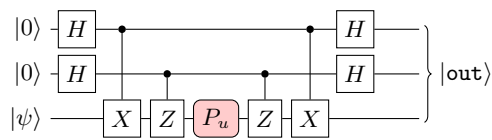

Consider the circuit in \autoref{fig:single-qubit-channel-correction}, where a Pauli error happens in the middle of the filter. For convenience, we define $(P_{00}, P_{01}, P_{10}, P_{11}) = (I, Z, X, Y)$. We have already shown in \autoref{sec:quantum-channel-correction} that, the output state $\ket{\tt out}$ of this circuit is - 
\begin{equation}\label{eq:1-qubit-full-filter}
    \ket{\psi} \to \ket{\tt out} = \ket{u} \otimes P_u\ket{\psi}
\end{equation}
where $u\in (\mathbb{Z}_2)^2$. Now, assume that an error $\ch{E}$ happens in the filter which is described by Kraus elements $\{E_k\}$, i.e, $\ch{E}(\rho) = \sum_k E_k\rho E_k$. Since the Pauli operators form a basis on the space of operators, we can write the $E_k$'s as 
\begin{equation}
    E_k = \sum_{u \in (\mathbb{Z}_2)^2} \alpha_{k, u} P_u 
\end{equation}
where the sum is over all the (single-qubit) Paulis. Hence, if $P_u$ is replaced by $E_k$ in \autoref{fig:single-qubit-channel-correction}, by linearity, the output-state will be 
\begin{equation}
    \ket{\psi} \xrightarrow{E_k} \ket{{\tt out}_k} = \sum_{u \in (\mathbb{Z}_2)^2} \alpha_{k, u} \ket{u}\otimes P_u \ket{\psi}
\end{equation}

Hence, the output state for the channel $\ch{E}$ plugged into the filter is 
\begin{align}
    \rho_{\tt out} &= \sum_k \dyad{{\tt out}_k}\\
    &= \sum_k \left(\sum_{u} \alpha_{k, u} \ket{u}\otimes P_u \ket{\psi}\right) \left(\sum_{v} \alpha_{k, v}^* \bra{v}\otimes P_v \bra{\psi}\right) \\
    &= \sum_k \sum_{u, v} \alpha_{k, u}\alpha_{k, v}^* \ketbra{u}{v} \otimes P_u (\dyad{\psi}) P_v
\end{align}

Now, in the correction phase, the ancillas are measured in the computational basis (which is equivalent to a syndrome measurement) and a corresponding Pauli is applied to the main qubit to correct for the error. As has been shown, if the measurement result on the two ancillas is $\ket{u_0}\otimes \ket{u_1} = \ket{u}$, then the corresponding correction Pauli is $P_u$. If the measurement projectors are $\Pi_u = \dyad{u}$, the correction channel is given by 
\begin{align}
    \ch{C}(\rho) = \sum_u (\mathit{\Pi}_u \otimes P_u) \rho (\mathit{\Pi}_u \otimes P_u)^{\dagger}
\end{align}
Therefore, the post-correction state is 
\begin{align}
    \ch{C}(\rho_{\tt out}) &= \sum_k \sum_{u, v, w} \alpha_{k, u}\alpha_{k, v}^* (\mathit{\Pi}_w\ketbra{u}{v}\mathit{\Pi}_w) \otimes P_w(P_u \dyad{\psi}P_v)P_w\\
    &= \left(\sum_k \sum_u \abs{\alpha_{k, u}}^2 \dyad{u}\right) \otimes \dyad{\psi}
\end{align}
Hence, the system qubit is returned to its original state and the proof is complete. The proof is easily extended for mixed input states. And for multiple qubits, one can just use a filter for each qubit separately, and repeat the procedure given above. 

\new{The proof relies on the fact that the Paulis form a basis on the Kraus-operator space. If the Kraus operators for a particular channel don't contain all the terms in the Pauli basis, then the filter construction can be fine-tuned. E.g, for the qubit dephasing channel (with strength $p$), we have,  
\begin{align*}
    E_0 = \sqrt{1-p}I, \quad  E_1 = \sqrt{p}\mathit{\Pi}_0 = \frac{\sqrt{p}}{2}(I+Z), \quad 
    E_2 = \sqrt{p}\mathit{\Pi}_1 = \frac{\sqrt{p}}{2}(I-Z)
\end{align*}
In this case, only $I$ and $Z$ appear in the expansion of the Kraus operators. As such, this can be fully corrected using only the Pauli $Z$-filter.}

\subsection{Proof of the channel correction property in terms of superchannels}
\label{app:channel-correction-superchannel}

Below, we provide an alternate proof of the channel correction property using the quantum superchannel formalism.
Consider the single-qubit channel correction filter shown in \autoref{fig:correction-superchannel}, highlighted as a superchannel. Instead of considering the correction cycle separately as in Appendix \ref{app:proof-channel-correction}, we consider the correction operations as part of the filter. 

\begin{figure}[!h]
    \centering
    \includegraphics[width=0.5\columnwidth]{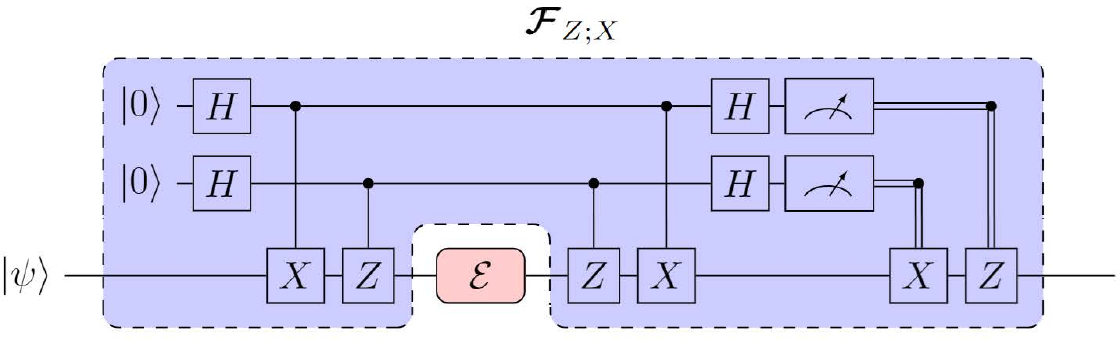}
    \caption{Single-qubit quantum channel correction filter}
    \label{fig:correction-superchannel}
\end{figure}

We use the same notation developed in \autoref{section:supermap}. The super-Kraus elements of the $Z$-correction and $X$-correction filters are 
\begin{alignat}{3}
    &\Bar{\sch{F}}_Z:\;&& \bs{F}_{Z0}[E] = I\cdot (IEI+ZEZ)/2,\; &&\bs{F}_{Z1}[E] = X\cdot (IEI-ZEZ)/2  \\
    &\sch{\Bar{F}}_X:\;&& \bs{F}_{X0}[E] = I\cdot (IEI+XEX)/2,\; &&\bs{F}_{X1}[E] = Z\cdot (IEI-XEX)/2 
\end{alignat}
The combined correction-filter is $\sch{F}_X\circ\sch{F}_Z$ whose super-Kraus elements are 
\begin{equation}
    \sch{F}_X\circ\sch{F}_Z : \bs{F}_{u}[E] = \bs{F}_{Xu_0}[\bs{F}_{Zu_1}[E]]
\end{equation}

One can check that, 
\begin{equation}
    \bs{F}_u[P_v] = \delta_{\bs{u}, \bs{v}}I
\end{equation}
for ${u}, {v}\in (\mathbb{Z}_2)^2$. Hence, $\bs{F}_u[E]\propto I $ for all the $u$ and arbitrary $E$. Thus, $\sch{F}[\ch{E}] = \ch{I}$ for arbitrary channels $\ch{E}$. Although we have shown this result for a single-qubit, it is straightforward to extend it to the case of multiple qubits. This proves that the channel correction filter can indeed correct arbitrary channels. 

\section{Partial Purification of {\it CCZ} Gates}
\label{app:partial-ccz}

\begin{figure*}[!h]

\begin{center}



















\includegraphics[width=\textwidth]{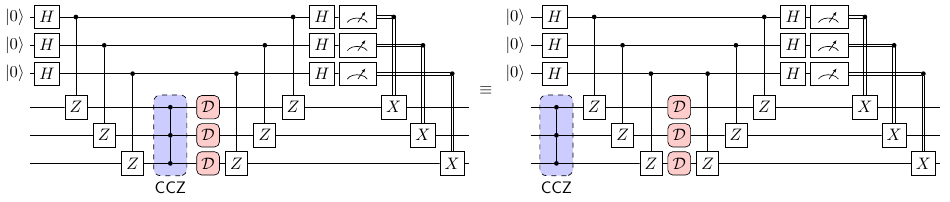}
\end{center}
\caption{Pauli Filter for partially purifying a CCZ gate. (Left) Noisy CCZ gate, modelled as an ideal CCZ gate followed by partial depolarising noise on all the qubits, plugged into a Pauli $Z$ filter on 3 qubits. (Right) Since CCZ commutes with $Z$'s, we can pull the ideal gate out of the filter. The residual depolarising noise sees the $Z$-filter, which filters out the $X$-component of the noise.
}
\label{fig:partial_filter_ccz}

\end{figure*}
Since the CCZ gate commutes with $Z\otimes Z\otimes Z$, we can adapt the scheme of \autoref{sec:partial-non-clifford} to partially purify them. Consider the circuit in \autoref{fig:partial_filter_ccz}. A noisy CCZ gate is modeled as an ideal CCZ gate followed by a local depolarising channel $\ch{D}^{\otimes 3}$, with $\ch{D}$ defined as in \autoref{eqn:general_dep_error} ($\mathcal{D} = (1-p) \ch{I} + p_X \ch{X} + p_Y\ch{Y} +  p_Z\ch{Z}$). Since the single-qubit Paulis $Z_1, Z_2, Z_3$ commute with the CCZ gate, we can pull the CCZ gate out of the filter. Then, following the analysis of \autoref{sec:partial-non-clifford}, the channel $\ch{D}^{\otimes 3}$ is transformed to the channel $\Tilde{\ch{D}}^{\otimes 3}$, with $\Tilde{\ch{D}}$ defined in \autoref{eqn:biased_dep_channel} ($\Tilde{\mathcal{D}} = (1-p_Y-p_Z)\cdot \ch{I} +  (p_Y+p_z)\cdot\ch{Z}$). For the case $p_X = p_Y = p_Z = p/3$, this results in an increase in fidelity of the CCZ gate from $(1-p)^3$ to $(1-2p/3)^3$.

\section{Analysis of the ancilla-efficient Pauli filter}
\label{app:ge-filter}
Here, we find out some properties of the ancilla-efficient Pauli filter $\sch{F}_g$ introduced in Section \ref{sec:good-enough}, particularly how many high-weight Paulis it can remove, its success probability for a local depolarising noise on $n$-qubits, and fidelity of the output channel if it succeeds. 

\subsection{Number of Paulis removed by the ancilla-efficient Pauli filter}

Suppose, a weight $w$ Pauli contains $(x, y, z)$ number of $X$'s, $Y$'s and $Z$'s, i.e, 
\begin{equation}
    w = x + y + z
\end{equation}
We will denote such a Pauli as $P_{x, y, z}$. By \autoref{eqn:ae-pauli-map}$, \sch{F}_{AE}$ can remove $P_{x, y, z}$, if and only if 
\begin{equation}
    I^{\otimes n}(P_{x, y, z})I^{\otimes n} + Z^{\otimes n}(P_{x, y, z})Z^{\otimes n} + X^{\otimes n}(P_{x, y, z})X^{\otimes n} + Y^{\otimes n}(P_{x, y, z})Y^{\otimes n} = 0
\end{equation}
Using the anti-commutation property of the Paulis, this reduces to 
\begin{align}
    &1 + (-1)^{x+y} + (-1)^{y+z} + (-1)^{(z+x)} = 0\\
    \Rightarrow & (-1)^w + (-1)^x + (-1)^y + (-1)^z = 0\label{eq:good-enough-condition}
\end{align}
 \autoref{eq:good-enough-condition} will be satisfied if 2 of the numbers $w, x, y, z$ are odd and the other 2 are even. 

First, let us consider the case where $w$ is odd. Then, among $x, y, z$ one must be odd, and the other two must be even. Let, $S_w[o, e, e]$ be the set of all such tuples $(x, y, z)$. Then, the number of such Paulis of weight-$w$ that satisfy  \autoref{eq:good-enough-condition} is 
\begin{equation}
    \#(w) = \binom{n}{w} \cdot \sum_{(x, y, z) \in S_w[o, e, e]} \binom{w}{x, y, z}
\end{equation}
Similarly, we may define a set $S_w[o, o, o]$ of tuples $(x, y, z)$ such that, they sum to $w$ and all three of them are odd. If $w$ is odd, any tuple $(x, y, z)$ of non-negative integers which sum to $w$ will belong to one of these two sets. Therefore, we have, 
\begin{equation}\label{eqn:multinomial_sum}
    \sum_{(x, y, z) \in S_w[o, e, e]} \binom{w}{x, y, z} + \sum_{(x, y, z) \in S_w[o, o, o]} \binom{w}{x, y, z} = \sum_{x, y, z} \binom{w}{x, y, z} = 3^w
\end{equation}
Finding an exact expression for $\#(w)$ might be difficult in general. Therefore, we will try to bound this quantity instead.  

Consider a tuple $(x, y, z) \in S_w[o, e, e]$. Let us call the odd element of this tuple $o$, and the even elements $e_1, e_2$. We define a map $f$ which takes $o$ to $(o-2)$ and $e_1, e_2$ to $e_1 + 1, e_2 + 1$. Notice that, all 3 of these numbers are odd. Therefore, $f$ maps a tuple of $S_w[o, e, e]$ to a tuple of $S_w[o, o, o]$ (with the possibility that $(o-2)$ might be negative). Then, if we define $\binom{w}{x, y, z}$ to be 0 if any one of $x, y, z$ is negative, we can write 
\begin{align}
    \sum_{(x, y, z) \in S_w[o, o, o]} \binom{w}{x, y, z} &= \sum_{(x, y, z) \in S_w[o, e, e]} \binom{w}{f[(x, y, z)]}\\
    &= \sum_{(x, y, z) \in S_w[o, e, e]} \left[\frac{o(o-1)}{(e_1+1)(e_2+1)}\right]_{(x, y, z)}\binom{w}{x, y, z}\\
    &\leq \left(\max_{(x, y, z) \in S_w[o, e, e]} \frac{o(o-1)}{(e_1+1)(e_2+1)}\right) \sum_{(x, y, z) \in S_w[o, e, e]} \binom{w}{x, y, z}\\
    &= w(w-1) \sum_{(x, y, z) \in S_w[o, e, e]} \binom{w}{x, y, z}\label{eqn:ae-main-inequality}
\end{align}
where in the second line, $(o, e_1, e_2)_{(x, y, z)}$ are the odd and even elements of the tuple $(x, y, z)$ respectively. 

Using this, we have, 
\begin{equation}
    3^w = \sum_{(x, y, z) \in S_w[o, e, e]} \binom{w}{x, y, z} + \sum_{(x, y, z) \in S_w[o, o, o]} \binom{w}{x, y, z} \leq (w^2 - w + 1)\sum_{(x, y, z) \in S_w[o, e, e]} \binom{w}{x, y, z}
\end{equation}

Plugging this in  \autoref{eqn:multinomial_sum} gives us the bound, 
\begin{equation}
    \#(w) \geq \binom{n}{w} \frac{3^w}{1+w(w-1)}
\end{equation}
This argument can be translated verbatim for the case where $w$ is even. 

Therefore, the total number of components that are removed by $\sch{F}_{AE}$ satisfies 
\begin{equation}
    \sum_{w=1}^n \#(w) \geq \sum_{w=1}^n \binom{n}{w} \frac{3^w}{1+w(w-1)}
\end{equation}
As a conservative estimate, we may bound the terms $1/(1+w(w-1))$ by their maximum value $1/(1+n(n-1))$. This gives us the bound, 
\begin{align*}
    \sum_{w=1}^n \#(w) \geq \frac{1}{1+n(n-1)} \sum_{w=1}^n \binom{n}{w}3^w 
    = \frac{4^n}{n^2-n+1}  
    \sim \Omega(4^n/n^2)
\end{align*}
\paragraph{Average weight of removed Paulis} The average weight of the Paulis that are removed is bounded by 
\begin{align*}
    \Bar{w} = \frac{\sum_{w=1}^n w\cdot \#(w)}{\sum_{w=1}^n \#(w)} \geq \frac{\sum_{w=1}^n \frac{1}{w}\binom{n}{w}}{\sum_{w=1}^n \frac{1}{w^2}\binom{n}{w}}
\end{align*}
Asymptotically, the dominant term in these sums is $\binom{n}{n/2}$. Hence, in the large-$n$ limit, we have, 
\begin{align*}
    \Bar{w} \sim \frac{\frac{1}{n/2}\binom{n}{n/2}}{\frac{1}{(n/2)^2}\binom{n}{n/2}} = n/2
\end{align*}

\subsection{Fidelity of the purified channel}
Throughout this treatment, we will use the term fidelity to mean the entanglement fidelity of a channel, or the co-efficient of the identity element of its Pauli transfer matrix.

First, we find the probability that our protocol succeeds. The number of Paulis that are {not} removed by $\sch{F}_{AE}$ is 
\begin{equation}
    \#'(w) := \binom{n}{w}3^w - \#(w) \leq \binom{n}{w}3^w\left(1-\frac{1}{1+w(w-1)}\right) 
\end{equation}
Since $w \leq n$, some simple algebra gives us the bound, 
\begin{equation}
    \#'(w) \leq n\cdot\frac{n^2-1}{n^3+1} \binom{n}{w} 3^w
\end{equation}
The probability that a particular Pauli of weight $w$ occurs is $$(p/3)^wq^{n-w}$$
where $q=1-p$. 
Therefore, the probability of success is 
\begin{align}
    p_s &\leq q^n + n\cdot\frac{n^2-1}{n^3+1} \sum_{w=1}^n \binom{n}{w} 3^w (p/3)^w q^{n-w}\\
    &= q^n + n\cdot\frac{n^2-1}{n^3+1} (1-q^n)\\
    &= 1 - (1-q^n)\frac{1+n}{1+n^3}\label{eqn:success_probability_final}
\end{align}
Therefore, the fidelity of the purified output channel is 
\begin{align*}
    F_o = q^n/p_s
    \geq \frac{q^n}{1 - (1-q^n)\frac{1+n}{1+n^3}}
    \geq q^n \left[1 + (1-q^n)\frac{1+n}{1+n^3}\right]
    \geq q^n \left[1 + (1-q^n)\frac{1}{n^2}\right]
\end{align*}
In the limit that $n$ is large (and $q^n$, the input fidelity, is vanishingly small), the rate of increase is approximately, 
\begin{equation}
    F_o/F_i \simeq 1 + \frac{1}{n^2}
\end{equation}

\subsection{Proof of quadratic reduction in infidelity}
\label{app:quadratic_reduction}

Let, the infidelity of the input channel be $\epsilon$, i.e, 
\begin{equation}
    F_i = q^n := 1 - \epsilon
\end{equation}
Then, the number of {channel} qubits in terms of the infidelity is 
\begin{equation}\label{eqn:def_n}
    n = \frac{\ln{(1-\epsilon)}}{\ln{q}} = \frac{\ln(1-\epsilon)}{\ln(1-p)}
\end{equation}
Using the relation, 
\begin{equation}
    \epsilon \leq -\ln(1-\epsilon) < \frac{\epsilon}{1-\epsilon} \qquad \epsilon < 1
\end{equation}
we have, 
\begin{equation}
    \frac{\epsilon/p}{1-p} \leq n \leq \frac{\epsilon/p}{1-\epsilon}
\end{equation}

Now, the output fidelity of the channel is bounded by 
\begin{align}\label{eqn:conservative_infidelity}
    F_o \geq \frac{q^n}{1-npq^{n-1}}
\end{align}
where we have considered removing only the weight-1 Paulis. Substituting $q^n = 1-\epsilon$ and $n\geq \epsilon/pq$, we get , 
\begin{align}
    F_o &\geq \frac{1-\epsilon}{1- \epsilon(1-\epsilon)/q^2}&&\\
    &\geq \frac{1-\epsilon}{1- \epsilon(1-\epsilon)}  &[\because q^2 < 1]&\\
    &\geq (1-\epsilon)(1+ \epsilon(1-\epsilon)) &[\because \epsilon(1-\epsilon) < 1]&\\
    &\geq 1 - 2\epsilon^2
\end{align}
Hence, up to leading order, the output infidelity is $O(\epsilon^2)$, which is a {\bf quadratic} reduction in the infidelity.

In fact,  \autoref{eqn:conservative_infidelity} is a conservative estimate of the output fidelity. A better asymptotic (i.e, large $n$) estimate can be found by using  \autoref{eqn:success_probability_final}. In terms of the infidelity, we have, 
\begin{align*}
    p_s  \leq 1 - (1-q^n)\frac{1}{n^2}
     \leq 1 - \epsilon\cdot \frac{p^2}{\epsilon^2}
     \leq 1 - p^2/\epsilon
\end{align*}
Using this bound, the output fidelity becomes, 
\begin{align}\label{eq:good-infidelity-bound}
    F_o &\geq \frac{1-\epsilon}{1-p^2/\epsilon}
\end{align}
which gives a better estimate in the regime, 
\begin{equation}\label{eqn:ge_condition}
    p^2/\epsilon > \epsilon(1-\epsilon) 
\end{equation}
Note that
\begin{equation}
    \epsilon^2(1-\epsilon) \leq (1-\epsilon) = q^n = e^{n \ln{q}} \leq e^{-np}
\end{equation}
Hence, a sufficient condition for \autoref{eq:good-infidelity-bound} to give a better bound on the output fidelity is: 
\begin{equation}
    p^2 > e^{-np} \Rightarrow n > 2(1/p)\ln(1/p)
\end{equation}

\subsection{A Global Pauli Noise Model}
\label{section:global-pauli-noise-model}
In this section, we consider the performance of the ancilla-efficient filter for an exceptionally decaying global Pauli noise-model on $n$-qubits. Suppose, the probability that a weight-$w$ Pauli error occurs is 
\begin{equation}
    p(w) \propto \epsilon^w
\end{equation}
for some $\epsilon < 1$. The constant of proportionality is $F = (1-\epsilon)/(1-\epsilon^n)$ which is found from normalization. This is also equal to the fidelity of the channel, which is the probability of no errors occuring ($p(0)$). Hence, the infidelity of the channel is, 
\begin{equation}
    1 - F = 1 - \frac{1-\epsilon}{1 - \epsilon^n} = \epsilon - O(\epsilon^n)
\end{equation}
which, up to leading order is $\epsilon$. 
The probability measure on the weight-1 Paulis is $F\epsilon$. Hence, if we consider removing these using the ancilla-efficient filter, the output fidelity of the channel is 
\begin{align}
    F_{\rm out} &\geq \frac{F}{1-F\epsilon}
    = \frac{1-\epsilon}{1-\epsilon^n - (1-\epsilon)\epsilon}
     \geq \frac{1-\epsilon}{1-(1-\epsilon)\epsilon} \geq 1-2\epsilon^2
\end{align}
Hence, the output infidelity is again $O(\epsilon^2)$. We see that, the quadratic reduction in infidelity also holds up in the case of this model of decaying global noise. 

\section{Analysis of ancilla noise}\label{appendix:error-analysis}

So far, we have considered the scenario that the filter operations are noiseless. However, it would be instructive to have an estimate of the performance of the filter if the filter operations introduce errors themselves. In order to do this, we consider the single-qubit Pauli correction filter, shown in \autoref{fig:channel-correction-modified}, with measurements moved at the end of the circuit for the ease of the analysis that is about to follow.  
\begin{figure}[!h]
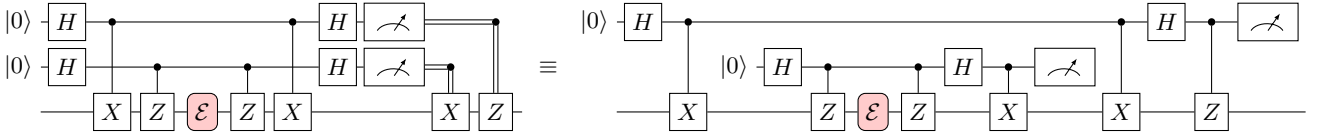

    \begin{yquantgroup}
        \registers{
        qubit {} q1; 
        qubit {} q2; 
        qubit {} r; 
        }
        \circuit{
        init {$\ket{0}$} q1;
        init {$\ket{0}$} q2;
        h q1, q2;
        box {$X$} r|q1; 
        box {$Z$} r|q2; 
        [shape=yquant-rectangle, rounded corners=.3em, fill=red!20]box {$\mathcal{E}$} r; 
        
        box {$Z$} r|q2 ;  
        box {$X$} r|q1;
        h q1, q2;

        measure q1;
        measure q2;
        
        box {$X$} r|q2;
        box {$Z$} r|q1;
        
        discard q1; 
        discard q2;
        }
        \equals[$\equiv$]
        \circuit{
        discard q2; 
        init {$\ket{0}$} q1;
        
        h q1;
        box {$X$} r|q1; 

        init {$\ket{0}$} q2;
        h q2; 
        box {$Z$} r|q2; 
        [shape=yquant-rectangle, rounded corners=.3em, fill=red!20]box {$\mathcal{E}$} r; 
        
        box {$Z$} r|q2 ;
        h q2; 
        box {$X$} r|q2; 
        measure q2;
        discard q2; 
        
        box {$X$} r|q1;
        h q1;
        box {$Z$} r|q1;
        measure q1;
        
        discard q1; 
        }
    \end{yquantgroup}
    \caption{single-qubit Pauli correction filter (with measurements pushed to the end).}
    \label{fig:channel-correction-modified}
\end{figure}

We assume noisy 2-qubit $CZ$ and $CX$ gates, with local depolarising noise on the control and target qubits after each application of the gate. 
\begin{equation}
    (\widetilde{CZ}/\widetilde{CX}) = (\ch{D}_{p_c}\otimes \ch{D}_{p_t}) \circ (CZ/CX)
\end{equation}

\begin{figure}[!h]
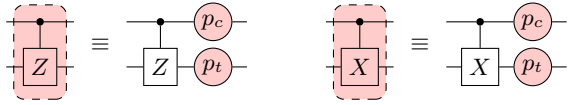

    \begin{center}
    \begin{yquantgroup}
    \registers{
    qubit {} q1;
    qubit {} q2;
    } 
    \circuit{
        [this subcircuit box style={draw, dashed, rounded corners,
        fill=red!20, inner ysep=5pt, inner xsep=0pt},
        register/default name=]
        subcircuit {
        qubit {} q[2]; 
        box {$Z$} q[1]|q[0]; 
        } (q1, q2);
    }
\equals[$\equiv$]
\circuit{ 
\yquantset{operator/separation=2mm}
        box {$Z$} q2|q1 ; 
        [shape=yquant-circle, radius = 2.5mm, fill=red!20]box {$p_c$} q1; 
        [shape=yquant-circle, radius = 2.5mm, fill=red!20]box {$p_t$} q2; 
    }
\equals[$\qquad$]
\circuit{    
        [this subcircuit box style={draw, dashed, rounded corners,
        fill=red!20, inner ysep=5pt, inner xsep=0pt},
        register/default name=]
        subcircuit {
        qubit {} q[2]; 
        box {$X$} q[1]|q[0]; 
        } (q1, q2);
    }
\equals[$\equiv$]
\circuit{ 
\yquantset{operator/separation=2mm}
        box {$X$} q2|q1 ; 
        [shape=yquant-circle, radius = 2.5mm, fill=red!20]box {$p_c$} q1; 
        [shape=yquant-circle, radius = 2.5mm, fill=red!20]box {$p_t$} q2; 
    }
\end{yquantgroup}

    \end{center}
    \caption{Noisy $CZ$ and $CX$ gates. Each noisy 2-qubit gate is modelled as an ideal gate, followed by local depolarising error on the control and target qubits (with probability $p_c$ and $p_t$ respectively).}
    \label{fig:noisy-gates}
\end{figure}

To find the effect of these errors, we use the (Pauli) error propagation rules for the $CZ$ and $CX$ gates outlined in \autoref{fig:pauli-propagation-rules}. 

\begin{figure*}[!h]
    \includegraphics[width=0.85\columnwidth]{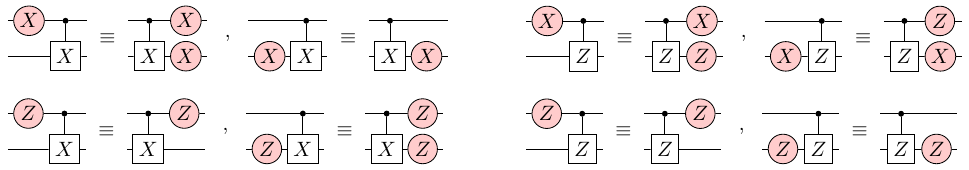}
    \caption{Pauli error propagation rules for the $CX$ (left) and $CZ$ (right) gates.}
    \label{fig:pauli-propagation-rules}
\end{figure*}

Consider the filtering unitary $F_Z$, outlined in \autoref{fig:erroneous-circuit}, where we have marked locations of possible (Pauli) faults. Then, we denote by $F_Z[U|P_A]$ by the unitary that results if we replace the location $A$ by a Pauli $P_A$ (which can be either $I, X, Y, Z$). We denote $F_Z[I_A]$ simply by $F_Z$, which is the unitary without any faults. Using the propagation rules, we can propagate this Pauli error $P_A$ forward in time through all the gates that come after it. This results in an expression for what errors remain on the final state after the circuit is executed with a fault at location $A$. For example, one can check that, 
\begin{equation}
    \bs{F_Z}[U|X_A] = Z_1Z_2\cdot \bs{F_Z}[U]
\end{equation}
Therefore, as the error $X_A$ propagates through the circuit, it results in $Z$ errors on both the qubits (on top of the final state output from the filter). 
In this way, we might find out how $Z$ and $X$ faults at the sites $A, C, D$ propagate to the end of the filter. We can skip the calculations for the site $C$, since this may be considered as part of the error channel $\mathcal{E}$ that we are trying to correct. The result is shown in Table \ref{table:error-propagation}.

\begin{figure}
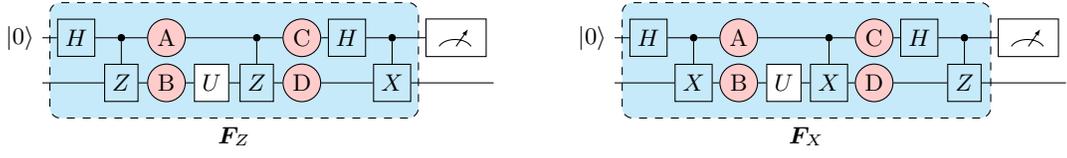

    \centering
    \begin{yquantgroup}
        \registers{
        qubit {} q; 
        qubit {} r; 
        }
        \circuit{
        init {$\ket{0}$} q; 
        
        [this subcircuit box style={draw, dashed, rounded corners,
        fill=cyan!20, inner ysep=6pt, inner xsep=0pt, "$\bs{F}_Z$" below},
        register/default name=]
        subcircuit {qubit {} q; 
        qubit {} r; 
        h q; 
        box {$Z$} r|q; 
        [shape=yquant-circle, radius = 2.5mm, fill=red!20]box {A} q; 
        [shape=yquant-circle, radius = 2.5mm, fill=red!20]box {B} r; 
        [fill=white]box {$U$} r; 
        box {$Z$} r|q; 
        [shape=yquant-circle, radius = 2.5mm, fill=red!20]box {C} q; 
        [shape=yquant-circle, radius = 2.5mm, fill=red!20]box {D} r; 
        h q; 
        box {$X$} r|q; 
        } (q, r);
        measure q;
        discard q; 
        }
        \equals[$\qquad$]
        \circuit{
        init {$\ket{0}$} q; 
        
        [this subcircuit box style={draw, dashed, rounded corners,
        fill=cyan!20, inner ysep=6pt, inner xsep=0pt, "$\bs{F}_X$" below},
        register/default name=]
        subcircuit {qubit {} q; 
        qubit {} r; 
        h q; 
        box {$X$} r|q; 
        [shape=yquant-circle, radius = 2.5mm, fill=red!20]box {A} q; 
        [shape=yquant-circle, radius = 2.5mm, fill=red!20]box {B} r; 
        [fill=white]box {$U$} r; 
        box {$X$} r|q; 
        [shape=yquant-circle, radius = 2.5mm, fill=red!20]box {C} q; 
        [shape=yquant-circle, radius = 2.5mm, fill=red!20]box {D} r; 
        h q; 
        box {$Z$} r|q; 
        } (q, r);
        measure q;
        discard q; 
        }
    \end{yquantgroup}
    
    \caption{Filter unitary for the Pauli-$Z$ (left) and Pauli-$X$ (right)  Correction Filters with possible (Pauli) faults at locations A, B, C, D.}
    \label{fig:erroneous-circuit}
\end{figure}

\begin{table}[]
\begin{tabular}{cc|c|c|c|c|c|c}
\multicolumn{2}{c|}{Error}                          & $X_A$ & $Z_A$ & $X_C$ & $Z_C$ & $X_D$ & $Z_D$ \\ \hline\hline
\multicolumn{1}{c|}{\multirow{2}{*}{$F_Z$}} & $q_1$ & $Z_1$ & $X_1$ & $Z_1$ & $X_1$ & $I_1$ & $Z_1$ \\ 
\multicolumn{1}{c|}{}                       & $q_2$ & $Z_2$ & $X_2$ & $I_2$ & $X_2$ & $X_2$ & $Z_2$ \\ \hline
\multicolumn{1}{c|}{\multirow{2}{*}{$F_X$}} & $q_1$ & $Z_1$ & $X_1$ & $Z_1$ & $X_1$ & $Z_1$ & $I_1$ \\ 
\multicolumn{1}{c|}{}                       & $q_2$ & $X_2$ & $Z_2$ & $I_2$ & $Z_2$ & $X_2$ & $Z_2$ \\ \hline
\end{tabular}
\caption{Propagation of Pauli faults through the filter circuit.}
\label{table:error-propagation}
\end{table}

Then, if we assume depolarising noise at all the fault sites and the error channel, the faulty filter is equivalent to (\autoref{fig:error-analysis}) 
\begin{equation}
    \Tilde{\sch{F}}[\ch{D}_{p_e}] = \ch{D}_{p_0}\circ\ch{D}^{(X)}_{p_X}\circ \ch{D}^{(Z)}_{p_Z}
\end{equation}
where $\ch{D}, \ch{D}^{(X)}, \ch{D}^{(Z)}$ are partially depolarising, bit-flip and phase-flip channels respectively. The error probabilities of these channels are 
\begin{align}
    p_0 &= 1 - (1-p_c)(1-p_t) \approx p_c+p_t\\
    p_X &= 1 - (1-2p_c/3)^2(1-2p_t/3)^2 \approx 2(p_c+p_t)/3\\
    p_Z &= 2p_c/3 
\end{align}
where the approximations hold for $p_c, p_t \ll 1$. Then, the fidelity of the output from the faulty filter is 
\begin{align}
    \Tilde{F} &= (1-p_0)(1-p_X)(1-p_Z) + p_0(p_X + p_Z + p_Zp_X)/3\\
    & \geq (1-p_c)(1-p_t)\cdot(1-2p_c)^2(1-2p_t)^2\cdot (1-2p_c/3) := F_{\rm critical}\label{eqn:sensititvity}
\end{align}
\autoref{eqn:sensititvity} shows that the lower bound on the output fidelity is more sensitive to $p_c$ (as it has an extra factor $(1-2p_c/3)$ of dependence on $p_c$, other than which the terms involving $p_c$ and $p_t$ are symmetrical). As a result, this provides analytical evidence for the intuition that the fidelity of the output channel is more sensitive to the error rate on the ancilla. In addition, \autoref{eqn:sensititvity} also provides an upper bound $F_{\rm critical}$ on the fidelity of the error channel $\ch{E}$, below which, using the filter provides an advantage over using the unfiltered circuit. 


\begin{figure*}[!h]
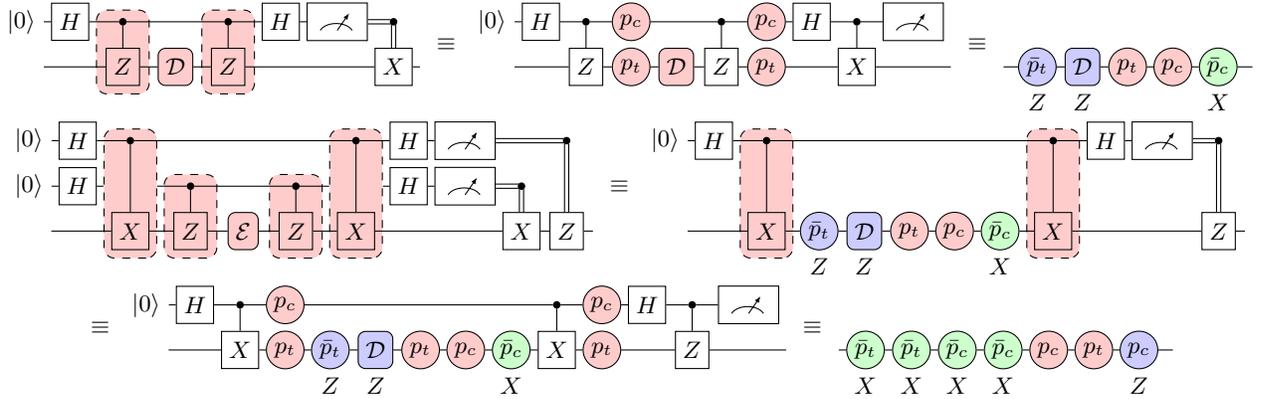

\begin{yquantgroup}
    \registers{
    qubit {} q2;
    qubit {} r;
    }
    \circuit{
    init {$\ket{0}$} q2; 
    h q2;
    [this subcircuit box style={draw, dashed, rounded corners,
        fill=red!20, inner ysep=3pt, inner xsep=0pt},
        register/default name=]
        subcircuit {
        qubit {} q[2]; 
        box {$Z$} q[1]|q[0]; 
        } (q2, r);
    [shape=yquant-rectangle, rounded corners=.3em, fill=red!20]box {$\mathcal{D}$} r; 
    [this subcircuit box style={draw, dashed, rounded corners,
        fill=red!20, inner ysep=3pt, inner xsep=0pt},
        register/default name=]
        subcircuit {
        qubit {} q[2]; 
        box {$Z$} q[1]|q[0]; 
        } (q2, r);
    h q2;
    measure q2;
    box {$X$} r|q2;
    discard q2;
    }
    \equals[$\equiv$]
    \circuit{
    init {$\ket{0}$} q2;
    h q2;
    
    box {$Z$} r|q2; 
    [shape=yquant-circle, radius = 2.5mm, fill=red!20]box {$p_{c}$} q2; 
    [shape=yquant-circle, radius = 2.5mm, fill=red!20]box {$p_{t}$} r;
    
    [shape=yquant-rectangle, rounded corners=.3em, fill=red!20]box {$\mathcal{D}$} r; 
    
    box {$Z$} r|q2 ;
    [shape=yquant-circle, radius = 2.5mm, fill=red!20]box {$p_{c}$} q2; 
    [shape=yquant-circle, radius = 2.5mm, fill=red!20]box {$p_{t}$} r;
    
    h q2;
    
    box {$X$} r|q2;
    
    measure q2;
    discard q2;
    }
    \equals[$\equiv$]
    \circuit{
    discard q2; 
    hspace {1mm} r;
    [shape=yquant-circle, radius = 2.5mm, fill=blue!20, "$Z$" below]box {$\bar{p}_{t}$} r; 
    [shape=yquant-rectangle, rounded corners=.3em, fill=blue!20, "$Z$" below]box {$\mathcal{D}$} r; 
    [shape=yquant-circle, radius = 2.5mm, fill=red!20]box {${p}_{t}$} r; 
    [shape=yquant-circle, radius = 2.5mm, fill=red!20]box {${p}_{c}$} r; 
    [shape=yquant-circle, radius = 2.5mm, fill=green!20, "$X$" below]box {$\bar{p}_{c}$} r; 
    hspace {1mm} r;
    }
\end{yquantgroup}\\
\begin{yquantgroup}
    \registers{
    qubit {} q1;
    qubit {} q2;
    qubit {} r;
    }
    \circuit{
    init {$\ket{0}$} q1; 
    init {$\ket{0}$} q2; 
    h q1, q2; 
    [this subcircuit box style={draw, dashed, rounded corners,
        fill=red!20, inner ysep=3pt, inner xsep=0pt},
        register/default name=]
        subcircuit {
        qubit {} q[2]; 
        box {$X$} q[1]|q[0]; 
        } (q1, r);
        
    [this subcircuit box style={draw, dashed, rounded corners,
        fill=red!20, inner ysep=3pt, inner xsep=0pt},
        register/default name=]
        subcircuit {
        qubit {} q[2]; 
        box {$Z$} q[1]|q[0]; 
        } (q2, r);
    [shape=yquant-rectangle, rounded corners=.3em, fill=red!20]box {$\mathcal{E}$} r; 

    [this subcircuit box style={draw, dashed, rounded corners,
        fill=red!20, inner ysep=3pt, inner xsep=0pt},
        register/default name=]
        subcircuit {
        qubit {} q[2]; 
        box {$Z$} q[1]|q[0]; 
        } (q2, r);
    
    [this subcircuit box style={draw, dashed, rounded corners,
        fill=red!20, inner ysep=3pt, inner xsep=0pt},
        register/default name=]
        subcircuit {
        qubit {} q[2]; 
        box {$X$} q[1]|q[0]; 
        } (q1, r);

    h q1, q2;

    measure q1;
    measure q2;
    
    box {$X$} r|q2;
    box {$Z$} r|q1;

    discard q1; 
    discard q2;
    }
    \equals[$\equiv$]
    \circuit{
    discard q2;
    init {$\ket{0}$} q1; 
    
    h q1;

    [this subcircuit box style={draw, dashed, rounded corners,
        fill=red!20, inner ysep=3pt, inner xsep=0pt},
        register/default name=]
        subcircuit {
        qubit {} q[2]; 
        box {$X$} q[1]|q[0]; 
        } (q1, r);
    
    
    [shape=yquant-circle, radius = 2.5mm, fill=blue!20, "$Z$" below]box {$\bar{p}_{t}$} r; 
    [shape=yquant-rectangle, rounded corners=.3em, fill=blue!20, "$Z$" below]box {$\mathcal{D}$} r; 
    [shape=yquant-circle, radius = 2.5mm, fill=red!20]box {${p}_{t}$} r; 
    [shape=yquant-circle, radius = 2.5mm, fill=red!20]box {${p}_{c}$} r; 
    [shape=yquant-circle, radius = 2.5mm, fill=green!20, "$X$" below]box {$\bar{p}_{c}$} r; 

    [this subcircuit box style={draw, dashed, rounded corners,
        fill=red!20, inner ysep=3pt, inner xsep=0pt},
        register/default name=]
        subcircuit {
        qubit {} q[2]; 
        box {$X$} q[1]|q[0]; 
        } (q1, r);
    

    h q1; 
    measure q1;
    box {$Z$} r|q1; 
    discard q1;
    }
\end{yquantgroup}\\
\begin{yquantgroup}
    \registers{
    qubit {} q1;
    qubit {} r;
    }
    \equals[$\equiv$]
    \circuit{
    init {$\ket{0}$} q1; 
    h q1; 
    box {$X$} r|q1; 
    [shape=yquant-circle, radius = 2.5mm, fill=red!20]box {$p_{c}$} q1; 
    [shape=yquant-circle, radius = 2.5mm, fill=red!20]box {$p_{t}$} r;
    
    [shape=yquant-circle, radius = 2.5mm, fill=blue!20, "$Z$" below]box {$\bar{p}_{t}$} r; 
    [shape=yquant-rectangle, rounded corners=.3em, fill=blue!20, "$Z$" below]box {$\mathcal{D}$} r; 
    [shape=yquant-circle, radius = 2.5mm, fill=red!20]box {${p}_{t}$} r; 
    [shape=yquant-circle, radius = 2.5mm, fill=red!20]box {${p}_{c}$} r; 
    [shape=yquant-circle, radius = 2.5mm, fill=green!20, "$X$" below]box {$\bar{p}_{c}$} r; 

    box {$X$} r|q1; 
    [shape=yquant-circle, radius = 2.5mm, fill=red!20]box {$p_{c}$} q1; 
    [shape=yquant-circle, radius = 2.5mm, fill=red!20]box {$p_{t}$} r;

    h q1; 
    box {$Z$} r|q1; 
    measure q1; 
    discard q1; 
    }
    \equals[$\equiv$]
    \circuit{
    discard q1; 

    [shape=yquant-circle, radius = 2.5mm, fill=green!20, "$X$" below]box {$\bar{p}_{t}$} r; 
    [shape=yquant-circle, radius = 2.5mm, fill=green!20, "$X$" below]box {$\bar{p}_{t}$} r; 
    [shape=yquant-circle, radius = 2.5mm, fill=green!20, "$X$" below]box {$\bar{p}_{c}$} r; 
    [shape=yquant-circle, radius = 2.5mm, fill=green!20, "$X$" below]box {$\bar{p}_{c}$} r;
    [shape=yquant-circle, radius = 2.5mm, fill=red!20]box {$p_{c}$} r;
    [shape=yquant-circle, radius = 2.5mm, fill=red!20]box {$p_{t}$} r;
    [shape=yquant-circle, radius = 2.5mm, fill=blue!20, "$Z$" below]box {$p_{c}$} r;
    hspace {1mm} r;
    }
\end{yquantgroup}

\caption{Error propagation through the channel correction filter. Red, blue and green circles represent partially depolarising, phase-flip and bit-flip channels respectively, with corresponding error probabilities written inside the circles. $\Bar{p}_c = 2p_c/3, \Bar{p}_t = 2p_t/3$.}
\label{fig:error-analysis}
\end{figure*}

\section{Connections and differences to related schemes}
\label{appendix:connections}
In the discussions, we touched upon how one could draw connections between our scheme and other error-correction/reduction procedures. Here, we expand on these themes, and detail how our scheme significantly differs from them despite the similarities in sub-circuit-structures. We focus on Pauli Twirling, the Iceberg Code and Entanglement-Assisted QEC (EAQEC) codes, and the Flag-QEC scheme. 

\subsection{Pauli Twirling}
{Consider a noisy $n$-qubit unitary channel $\ch{U}_{\ch{E}} = \ch{E} \circ \ch{U}$, with the ideal channel being $\ch{U}(\rho) = U\rho U^{\dagger}$. Pauli-Twirling this channel might be written as the following map, 
\begin{equation}
    \sch{T}[\ch{U}_{\ch{E}}] = \frac{1}{4^n}\sum_P \ch{P}'\circ \ch{U}_{\ch{E}} \circ \ch{P} =\sch{T}[\ch{E}] \circ \ch{U}
\end{equation}
where, the sum is over all $n$-qubit Paulis $P$. Here, $\ch{P}(\rho) = P\rho P, \ch{P}'(\rho) = P'\rho P'$, the operator $P'$ being the conjugate of $P$ with respect to $U$, i.e,  
$$P' = UPU^{\dagger}$$
Thus, in order to perform Pauli-Twirling efficiently, we focus on unitaries for which the $P'$ are efficiently computable and implementable. The canonical example is to focus on Clifford channels, for which the $P'$ are Paulis as well. This may indicate a similarity to the Clifford channel purification scheme outlined in \autoref{sec:clifford-purification}, where, the $\textsc{Select}$ operations are constructed out of conjugate Paulis with respect to a Clifford as well.} However, this is the only structural similarity, and there are two main points of divergence:
\begin{itemize}
    \item
    In our filtering scheme, measurement results allow us to increase the channel fidelity by post-selection or correction. In a QEC setting, Pauli twirling is known to be equivalent to simply discarding the results of measuring a set of stabiliser operators \cite{cai2019constructing}, and does not increase the channel fidelity~\cite{nielsen2002simple}\footnote{Pauli twirling collapses the Pauli transfer matrix of the process so that it is diagonal. Considering only a correctable subset of errors, when we have an encoded state in QEC that is able to distinguish such Pauli errors this is identical to the average of the results we would obtain if we measured all the stabilisers of the state.}.    

    To be more concrete, if a single-qubit channel $\ch{E}$ has process matrix $\chi_{_{P, Q}}$, 
    \begin{equation}
        \ch{E}(\rho) = \sum_{P, Q}\chi_{_{P, Q}} P\rho Q
    \end{equation}
    where $P, Q$ are Paulis, then, if we apply the $Z-$Filter on $\ch{E}$, and post-select on the ancilla being $\ket{0}$, the resulting channel is, 
    \[\sch{F}_Z[\ch{E}](\rho) \Bigg|_{\ket{0}} \propto \chi_{_{I, I}}I\rho I + \chi_{_{Z, I}}Z\rho I + \chi_{_{I, Z}}I\rho Z + \chi_{_{Z, Z}}Z\rho Z\]
    Here diagonal elements as well as off-diagonal elements have been filtered (in this example, particularly, the $\chi_{_{X, X}}$ and $\chi_{_{Y, Y}}$ components). This brings about an increase in the fidelity of the channel (i.e. $\chi_{_{I, I}}$ \cite{wood2011tensor}), which is not possible in the Pauli-Twirling framework.
    
    \item
    The second and larger point of divergence is in the question of averaging, and in which information needs to be detectable. In the Pauli twirling case, since the goal is to remove all off-diagonal components in the Pauli transfer matrix for the noise channel, it is necessary to increase the size of the set from which the twirling operators are randomly selected as the number of qubits grows, and the size of this set increases exponentially (it is equivalent to averaging over the many possible outcomes of a set of stabiliser measurements, where the number of such measurements increases with the qubit number).
    Such an increase leads to a focus on few-qubit Clifford gates, so that the presumed independence between non-interacting qubits can do much of the work of removing the off-diagonal elements.
    We go in exactly the opposite direction: since we retain measurement information, we are able to increase the width of the Clifford circuit, and this (in conjunction with bounds on circuit depth for one- and two-qubit decompositions of general Clifford gates) helps to justify the relatively large noise assumed in the body of the circuit relative to the encoding and decoding portions.
\end{itemize}

\subsection{Iceberg Code and Entanglement-Assited QEC Codes}
The Iceberg code is a ${[[2n+2, 2n, 2]]}$ quantum error detection code defined by the stabilisers
${{X}^{\otimes 2n}}$ and ${{Z}^{\otimes 2n}}$,
which is the same distance-2 code defined by Rains to achieve the quantum Singleton bound in \cite{rains2002quantum}. These stabilisers are the same multi-qubit Pauli operators depicted in \autoref{fig:ancilla-efficient-filter}, and the latter portion of the circuit in this figure is exactly the stabiliser measurement circuit depicted in Figure~1d of \cite{self2024protecting}, but not accounting for the order of the individual two-qubit gates.

We stress however that the Iceberg code is not fault-tolerant, even though particular circuit structures may make it more robust at small scales and even though fault-tolerant approaches to stabiliser measurement may be applied. The Iceberg code is a distance-2 error detection code with a stabiliser weight that increases with the number of qubits.
Such a code cannot have a fault-tolerant QEC threshold, because increasing the code size does not increase the code distance, and the probability of an undetectable 2-qubit error therefore increases combinatorially with the code size.\\

There are two key differences between our \autoref{fig:ancilla-efficient-filter} and this distance-2 quantum error detection code:
\begin{itemize}
    \item Firstly, we adapt the encoding-portion of the circuit to any Clifford operation, whereas in the quantum error detection setting the logical operators are restricted to commute with the stabilisers.
    \item Secondly, while the Iceberg code requires an even number of physical qubits to ensure that the stabilisers commute, we do not require such a restriction in our data register
\end{itemize}.

Explaining this relaxation of the even-qubit restriction for the Iceberg code leads us to our second comparison, catalytic EAQEC.\\

Catalytic EAQEC, as described for instance in \cite{brun2014catalytic}, extends the use of QEC codes for entanglement distillation by allowing the introduction of a small number of perfect entangled pairs. This is justified on the grounds that a QEC code may encode more than one logical qubit, and so the rate can be enhanced beyond a direct use of the perfect pair for teleportation.
The role played by the entangled pair is to augment the parity check matrix, so that a QEC code can be constructed from any classical code over ${\mathbb{F}_{4}}$, without the standard additional requirement that this classical code be self-orthogonal.
Consider the parity-check matrix for the Iceberg code,
\begin{align}
    \begin{bmatrix}
        1 & 1 & 1 & \ldots & 1 \\
        \omega & \omega & \omega & \ldots & \omega
    \end{bmatrix}
    ,
\end{align}
with the standard map from ${\mathbb{F}_{4}}$ to the Pauli group under multiplication,
\begin{align}
    0 \rightarrow \hat{I},\; 1 \rightarrow \hat{X},\; \omega \rightarrow \hat{Z},\; \omega^{2} = \omega+1 \rightarrow \hat{Y}.
\end{align}
With an even number of bits this code is self-orthogonal, since the coefficients of ${1}$ and ${\omega}$ add modulo 2.
In the odd case it is not self-orthogonal, but augmenting the parity-check matrix with a perfect entangled pair we have
\begin{align}
    \begin{bmatrix}
        1 & 1 & 1 & \ldots & 1 & 1 \\
        \omega & \omega & \omega & \ldots & \omega & \omega
    \end{bmatrix}
    ,
\end{align}
where the final qubit avoids the noisy channel, and the addition of the final column ensures that all syndrome measurements commute.
Though the efficiency advantage is not apparent in this simple distance-2 code, our interest here is in the structural connection with our circuit in Figure~7 of the manuscript.\\

Suppose that we're dealing with a simple identity channel, so that in \autoref{fig:ancilla-efficient-filter}, ${Z^{\overleftarrow{\otimes} n} = Z^{\otimes n}}$ and
${X^{\overleftarrow{\otimes} n} = X^{\otimes n}}$.
Before the first controlled operations, a measurement of either ${Z^{\otimes n}}$ or ${X^{\otimes n}}$ could give any result.
The ancillary qubits both begin in ${+1}$ eigenstates of ${X}$, so that the parity check matrix at the input would be
\begin{align}
    \begin{bmatrix}
        0 & 0 & 0 & \ldots & 1 & 0 \\
        0 & 0 & 0 & \ldots & 0 & 1
    \end{bmatrix}
    .
\end{align}
Using the well-known Heisenberg-picture mapping for the action of the \textsc{cnot} gate on multi-qubit Pauli operators (where the first qubit is the control),
\begin{align*}
    X\otimes I &\rightarrow X \otimes X \quad & Z\otimes I &\rightarrow Z \otimes I \\
    I\otimes X &\rightarrow I \otimes X \quad & I\otimes Z &\rightarrow Z \otimes Z \\
\end{align*}
the first controlled-${Z^{\otimes n}}$ gate transforms the check-matrix into
\begin{align}
    \begin{bmatrix}
        \omega & \omega & \omega & \ldots & 1 & 0 \\
        0 & 0 & 0 & \ldots & 0 & 1
    \end{bmatrix}
    ,
\end{align}
and the second controlled-${X^{\otimes n}}$ then leaves us ultimately with the check-matrix
\begin{align}
    \begin{bmatrix}
        \omega & \omega & \omega & \ldots & 1 & (n\cdot\omega) \\
        1 & 1 & 1 & \ldots & 0 & 1
    \end{bmatrix}
    .
\end{align}
If the number of qubits, ${n}$, is even, then ${n\cdot\omega \cong 0}$ (taken modulo 2) and we interpret the ancillary qubits as measuring individually the stabilisers of the Iceberg code. If ${n}$ is odd, then we interpret the second ancilla in the role of a perfect qubit as in EAQEC, and the first ancilla as providing a non-destructive way to measure the first stabiliser for this `augmented Iceberg' code.\\

The paper \cite{brun2014catalytic} on catalytic EAQEC did provide a discussion on adapting the methods there for communication to the QEC, storage scenario. There are three key differences that we would like to highlight to discourage the notion that our proposal is a simple restatement or variation of the ideas in that work.

\begin{itemize}
    \item First, the method of code construction differs significantly in that this prior work builds efficient QEC codes for the noisy `identity' channel from small classical codes, while no consideration is there given to the construction of logical operations. In our proposal, the `logical' operations are defined first through the statement of the ideal Clifford channel, and the code is then derived from them.

    \item Second, the errors that can be corrected are not limited by the distance of a pre-determined classical code, and may instead be found by commuting single-qubit errors through the chosen Clifford channel, or by working backward from the errors in any measurements that might be subsequently performed, protecting specific observables.

    \item Third, as we have described above, the construction interpolates between the use of ancillary qubits as in the EAQEC case and the use of ancillary qubits solely for syndrome extraction. This allows us to define the encoding operations without reference to the data register, rather than having to consider separately for example the Iceberg and augmented-Iceberg codes for even and odd ${n}$.
\end{itemize}

To summarise, our proposal draws together a wide range of ideas that connect twirling, Flag-QEC, EAQEC, and efficient low-distance codes. There are a number of other areas in which we could draw tangential connections. The potential, given the assumptions we have made, to construct hybrid encoding procedures that link otherwise disparate subtopics in quantum error correction and mitigation, is perhaps the element of our proposal that we are most excited to explore going forward. Two other potential connections, for instance, might be a connection to operator-algebra QEC (OAQEC) \cite{dauphinais2024stabilizer} by working backwards to construct encoding circuits from errors affecting only specific observables, or a connection to error suppression via indefinite causal order using a quantum switch \cite{chiribella2021indefinite}.

\subsection{Flag-QEC and Fault-Tolerance}
\label{appendix:flag-qec}
{Flag-QEC is a fault-tolerant quantum error correction procedure designed for syndrome measurements in stabilizer codes~\cite{chao2018quantum, chamberland2018flag, chao2020flag}. It uses a small number of additional ``flag-qubits" (compared to other FTQEC schemes) to protect the syndrome qubit during the measurement procedure, so that any correlated errors (or, hook errors) occuring during this process might be detected and corrected. It does this by using a two-qubit single-basis error detection code. It differs from other common approaches in that it is very selective about encoding to detect error only at those points in the circuit where the error will be most damaging, and in that it decodes to determine the syndrome measurement information, rather than using additional ancillas to extract the information directly from the encoded state.} 

{At a first glance, the circuit structure used in this procedure (e.g, Fig.~1(c) of \cite{chao2018quantum}) shows similarities to the Pauli-$X$ filter discussed in \autoref{section:successive-filtration} of this paper. As such, a brief reading of our scheme might suggest that, Flag-QEC has all the benefits of our scheme while also providing the advantage of fault-tolerance, i.e, protection against errors generated within the scheme. We explain below that this is not the case.} \\

One key difference between our proposal and Flag-QEC, is that we assume low-error ancillas and encoding/decoding operations, precluding fault-tolerance. This assumption goes hand-in-hand however with another key difference: We construct circuits to detect arbitrary Pauli error combinations, so long as the gates in the ideal circuit are drawn from the Clifford group. This is not possible in Flag QEC, because Flag QEC relies on the existence of single-qubit $Z$-basis logical operators in the repetition code (which exist because the code does not protect against errors in this basis). This is necessary because {the flag qubit does not (and cannot, if fault-tolerance is to be preserved) interact with the data register}; there is no detectable difference between a change in relative phase induced by the data register, and one induced erroneously by a separate noisy environment.\\

In the stabilizer measurement scenario in Flag-QEC, undetected $Z$-basis errors can be tolerated because measurements can be repeated. In the more general circuit scenario this is not the case (we do not know a-priori of any repeatable measurement that would not collapse the wavefunction of the data register). As a result, it is not possible simultaneously to detect arbitrary Pauli errors and to prevent the harmful propagation of single-qubit errors in the ancillary register to multi-qubit errors in the data register. Fault-tolerance not being possible because we do not have sufficient information about the input state~\footnote{Fault-tolerant circuits exploit both symmetries in the encoded state and the non-local encoding of information. Both assumptions are absent in our scenario, so any single-qubit error propagating from an ancilla to a data qubit could be significant.}, it is necessary to bring in other assumptions about inhomogeneous error rates or circuit depth to justify an encoding operation.\\

The fault tolerance of Flag QEC depends explicitly on knowledge of the stabilizer QEC code used for the data register. As a corollary of this, in noting that Flag QEC protects against internal errors, we must acknowledge three caveats:
\begin{enumerate}
    \item fault-tolerance and robustness against backaction is a property of the QEC code rather than the ancillary state,
    \item its efficiency (in terms of the number of required flag qubits) stems solely from the fact that it protects only against single-qubit X-basis errors on the syndrome qubit, rather than errors across all data qubits,
    \item Flag QEC relies on the existence of single-qubit Z-basis logical operators, and consequently measurement information cannot be made robust.
\end{enumerate}
This stands in sharp contrast to our protocol.
With the assumption that errors in the ancilla are small relative to the bulk circuit, there is no fundamental limitation on the choice of encoding to protect the information extracted by the ancillas.

Abandoning fault-tolerance, in this setting, does not mean abandoning a reduction in the error rate (particularly when resources are scarce, as has been the motivation for a trend away from fault-tolerant circuit implementations in the recent literature, in error mitigation but also for surface code quantum computing as described e.g. in \cite{akahoshi2024partially}). This we have argued in Section~V of the manuscript. It must however be possible to argue, as for virtual distillation in error mitigation, and as for entanglement distribution in quantum communication, that the additional overheads contribute only a small amount of noise relative to the initial noisy quantum channel.

\section{Robustness to internal noise}
\label{sec:internal-noise}

Even though our Pauli Filter scheme cannot achieve fault-tolerance due to the assumption of noiseless (or low-error ancillas), it can ensure robustness to internal errors in noise-regimes in which fault-tolerant schemes such as Flag-QEC perform worse. This section provides numerical evidence for this robustness to internal noise. We use the Flag-QEC scheme as a point of reference, and compare both these protocols by numerical simulations to probe regimes where they are advantageous. We show evidence that, even if errors are generated within our protocol, there exists a practical noise-regime in which our scheme exhibits superior error-reduction compared to Flag-QEC. Due to the simplicity of the circuit structure, noise generated within our scheme can remain low and not drastically impact the output error-rate.\\ 

Fault-tolerant implementations are characterized by the existence of a physical noise-threshold, below which expansion of the code by recursive encoding or concatenation, can arbitrarily decrease the logical error-rate. To gain the advantage of a fault-tolerant construction, one therefore needs sub-threshold physical error-rates and enough resources to expand the encoding.
If the error-rates are above the threshold, fault-tolerant schemes become impractical.
It is this noise-regime we want to stress, and within which we find a numerical advantage.
Thus, although our construction is not fault-tolerant, it can lower error-rates by a notable margin on very noisy devices with limited resources.\\

\paragraph{Simulation Setup}
Here, we provide details of the setup for noisy simulations comparing the Flag-QEC protocol and the Pauli-Filter protocol.
In the interest of fairness, we choose conditions that would result in the least overhead for Flag-QEC.
On the left in Figure \ref{fig:noisy_simulation_1qubit}, we encode 1 qubit in the [[5, 1, 3]] code (needing 4 ancillary qubits), which then undergoes an isotropic noise channel across all the qubits, followed by the Flag-QEC protocol to correct the errors, as outlined in \cite{chao2018quantum} (this requires 2 further ancillas).
Since the [[5, 1, 3]] code is the smallest 1-qubit error-correction code, this gives the smallest qubit overhead (6 extra qubits) and gate overhead to implement Flag-QEC.
On the right, we implement the Pauli-Filter protocol on 1 qubit, as outlined in Fig. \ref{fig:noisy_simulation_1qubit} of our paper. We also extend this simulation to 2 encoded qubits (as outlined in Fig. \ref{fig:noisy_simulation_2qubit}), to compare the scaling behaviour of the two protocols. 
\begin{figure}[!h]
    \resizebox{\columnwidth}{!}{
\begin{yquantgroup}
    \registers{
    qubit {} q; 
    qubit {} code; 
    qubit {} flag;
    }
    \circuit{
    init {$\ket{\psi}$} q;
    discard code, flag;
    
    hspace {0.5 cm} q;
    
    init {$\ket{0}^{\otimes 4}$} code;
    slash code;
    
    box {Encode in\\ {[[5, 1, 3]]}} (q, code);
    hspace {0.25 cm} q, code, flag;

    init {$\ket{0}^{\otimes 2}$} flag;
    slash flag;
    
    [this subcircuit box style={draw, dashed, rounded corners, fill=red!20, inner ysep=6pt, inner xsep=0pt}, "noise" above]
        subcircuit {qubit {} q; 
        hspace {1.5 cm} q;
        } q;

    [this subcircuit box style={draw, dashed, rounded corners, fill=red!20, inner ysep=6pt, inner xsep=0pt}, ]
        subcircuit {qubit {} q; 
        hspace {1.5 cm} q;
        } code;
        
    hspace {0.25 cm} q, code, flag;
    
    
    box {Flag\\QEC} (q, code, flag);
    discard flag;
    output {Output} (q, code);
    }
    \equals[]
    \circuit{
    init {$\ket{\psi}$} q;
    discard code, flag;
    hspace {0.5 cm} q;
    init {$\ket{0}$} code;
    init {$\ket{0}$} flag;
    h code, flag; 
    box {$X$} q | flag;
    box {$Z$} q | code;
    hspace {0.25 cm} q, code, flag;
    
    [this subcircuit box style={draw, dashed, rounded corners, fill=red!20, inner ysep=6pt, inner xsep=0pt},
    "noise" above]
    subcircuit {qubit {} q; 
    hspace {1.5 cm} q;
    } q;

    [this subcircuit box style={draw, dashed, rounded corners,
    fill=red!20, inner ysep=6pt, inner xsep=0pt}]
    subcircuit {qubit {} q; 
    hspace {1.5 cm} q;
    } code, flag;
    
    hspace {0.25 cm} q, code;
    box {$Z$} q | code;  
    box {$X$} q | flag;
    h code, flag; 
    
    measure code, flag;
    box {$X$} q| code;
    box {$Z$} q| flag;
    discard code, flag;
    hspace {0.5 cm} q;
    }
\end{yquantgroup}
}
\caption{Noisy simulation of Flag-QEC (left) and Pauli-Filter (right), for 1 encoded qubit.}
\label{fig:noisy_simulation_1qubit}
\end{figure}

\begin{figure}[!h]
    \resizebox{\columnwidth}{!}{
\begin{yquantgroup}
    \registers{
    qubit {} flag1; 
    qubit {} code1; 
    qubit {} q1;
    qubit {} q2;
    qubit {} code2; 
    qubit {} flag2;
    }
    \circuit{
    init {$\ket{\Psi}$} (q1, q2);
    discard code1, code2, flag1, flag2;
    
    hspace {0.5 cm} q1, q2;
    
    init {$\ket{0}^{\otimes 4}$} code1, code2;
    slash code1, code2;
    
    box {Encode in\\ {[[5, 1, 3]]}} (q1, code1);
    box {Encode in\\ {[[5, 1, 3]]}} (q2, code2);
    
    hspace {0.25 cm} q1, q2, code1, code2, flag1, flag2;

    init {$\ket{0}^{\otimes 2}$} flag1, flag2;
    slash flag1, flag2;

    [this subcircuit box style={draw, dashed, rounded corners, fill=red!20, inner ysep=6pt, inner xsep=0pt}, ]
        subcircuit {qubit {} q; 
        hspace {1.5 cm} q;
        } q1, q2, code1, code2;
        
    hspace {0.25 cm} q1, q2, code1, code2, flag1, flag2;
    
    box {Flag\\QEC} (q1, code1, flag1), (q2, code2, flag2);
    discard flag1, flag2;
    output {Output} (q1, code1, q2, code2);
    }
    \equals[]
    \circuit{
    init {$\ket{\Psi}$} (q1, q2);
    discard code1, code2, flag1, flag2;
    hspace {0.5 cm} q1, q2;
    
    init {$\ket{0}$} code1, code2, flag1, flag2;
    h code1, code2, flag1, flag2; 
    box {$X$} q1 | flag1;
    box {$Z$} q1 | code1;
    box {$X$} q2 | flag2;
    box {$Z$} q2 | code2;
    hspace {0.25 cm} q1, code1, flag1, q2, code2, flag2;

    [this subcircuit box style={draw, dashed, rounded corners,
    fill=red!20, inner ysep=6pt, inner xsep=0pt}]
    subcircuit {qubit {} q; 
    hspace {1.5 cm} q;
    } q1, q2, flag1, flag2, code1, code2;
    
    hspace {0.25 cm} q1, q2;
    box {$Z$} q1 | code1;  
    box {$X$} q1 | flag1;
    box {$Z$} q2 | code2;  
    box {$X$} q2 | flag2;
    h code1, code2, flag1, flag2; 
    
    measure code1, flag1, code2, flag2;
    box {$X$} q1| code1;
    box {$Z$} q1| flag1;
    box {$X$} q2| code2;
    box {$Z$} q2| flag2;
    discard code1, code2, flag1, flag2;
    hspace {0.5 cm} q1;
    }
\end{yquantgroup}
}
\caption{Noisy simulation of Flag-QEC (left) and Pauli-Filter (right), for 2 encoded qubits.}
\label{fig:noisy_simulation_2qubit}
\end{figure}
In order to analyze the performance drop due to internal noise, we subject all qubits in both schemes to depolarizing noise-channels of equal strengths $p_{\rm channel}$, i.e, \textbf{we drop the assumption of noiseless ancillas in the Pauli-Filter in these simulations}.
Additionally, \textbf{we also drop the assumption of perfect gates} in the simulations for both these schemes. All the two-qubit gates in the Encoding and Flag-QEC circuits on the left and in the Filter operations on the right are assumed to be noisy: each two-qubit gate is followed by a local depolarizing channel of strength $p_{\rm gate}$ on both the control and the target qubit.
This setup has the most detrimental effect on our Pauli-Filter construction, since we discard all the previous assumptions we had made. In both scenarios, the correction operations are noiseless. 

We perform simulations of these two scenarios for $p_{\rm channel}$ ranging from 0 to 0.1 and $p_{\rm gate}$ ranging from 0 to 0.01. Finally, we report the fidelity of the output state to the ideal noiseless state, averaged over 10,000 runs. A plot of the fidelities for all combinations of ($p_{\rm channel}, p_{\rm gate}$) is provided in Fig.~\ref{fig:flag_vs_filter_line}. \\

\begin{figure}[h!]
\centering
\includegraphics[width=1\columnwidth]{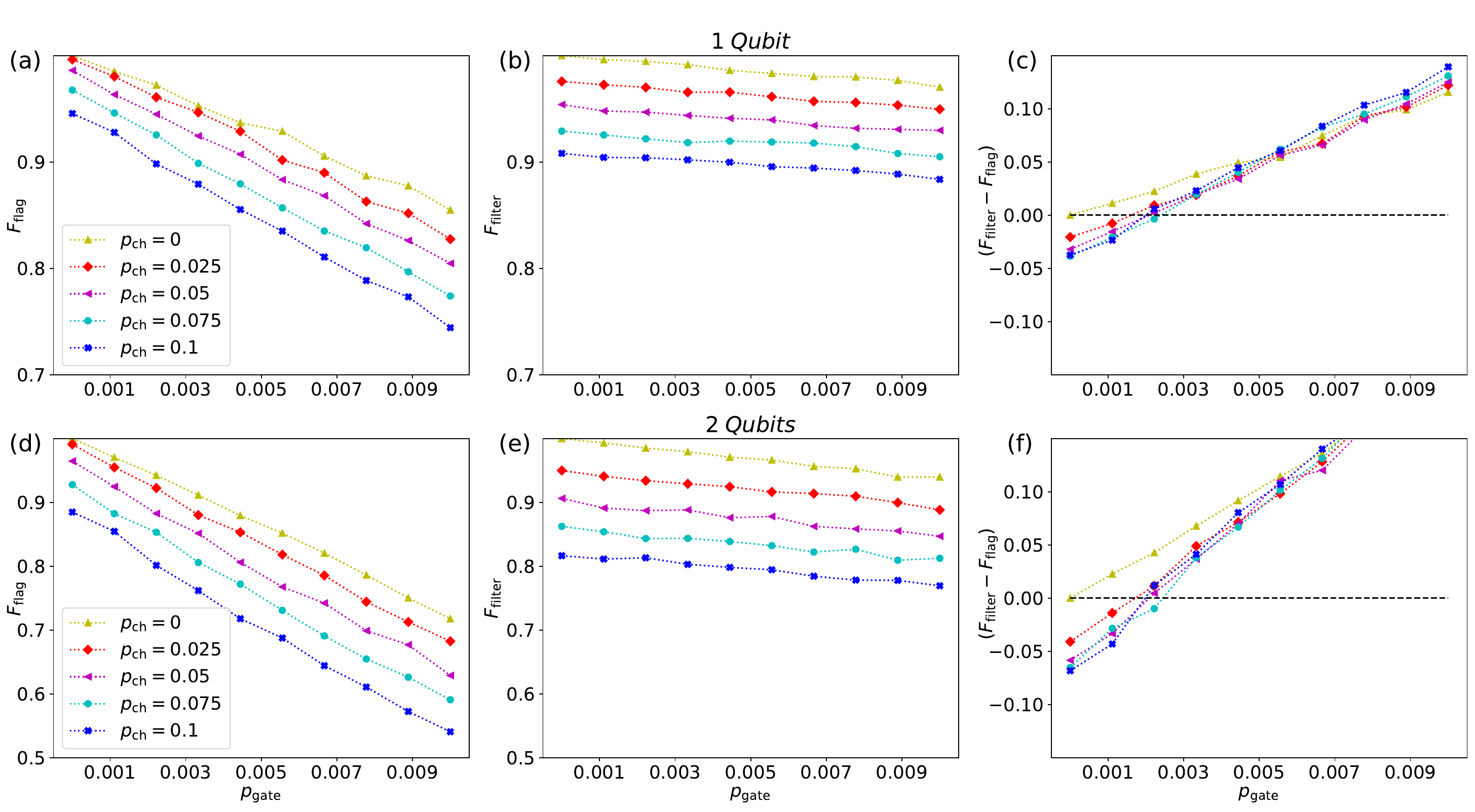}
\caption{ 
 Average fidelity of the output state for the Flag-QEC scheme applied to the [[5, 1, 3]] code (left), the Pauli Filter scheme (middle), and the difference in these fidelities (right), for 1 encoded logical qubit (top), and 2 encoded logical qubits (bottom).  
}
\label{fig:flag_vs_filter_line}
\end{figure}

\paragraph{Result Analysis} From the first two plots in Fig. \ref{fig:flag_vs_filter_line}, one can see that the performance of the Pauli-Filter is more sensitive to $p_{\rm channel}$, whereas, that of Flag-QEC is more sensitive to $p_{\rm channel}/p_{\rm gate}$. The third plot gives the difference in performance of the two schemes, and, is the most illuminating. One notice that Flag-QEC performs better if $p_{\rm gate}$ is much lower than $p_{\rm channel}$, while the Pauli-Filter performs better if they are comparable. For example, for the 1-qubit case, if $p_{\rm channel} = 0.1, p_{\rm gate} = 0.001$, the average fidelity of the output state of Flag-QEC is higher than that of the Pauli-Filter by a margin of $\approx 0.04$. On the other hand, for the same channel error-rate of $p_{\rm channel} = 0.1$, if the gate error-rate is significantly higher, e.g, $p_{\rm gate} = 0.01$, the Pauli Filter outperforms Flag-QEC by a margin of $\approx 0.14$. The break-even point seems to be around $p_{\rm gate} = 0.002$, above which the Pauli-Filter performs better. This trend is even sharper for the 2-qubit case (Fig. \ref{fig:flag_vs_filter_line} bottom). In this case, the output fidelity of Flag-QEC falls off even more rapidly compared to the Pauli-filter, for $p_{\rm gate}$ is greater than the break-even value of 0.002. \\

The results shown in Fig.~\ref{fig:flag_vs_filter_line} are expected: even in this favorable case the gate overhead of Flag-QEC is much higher than that of the Pauli-Filter, and so for moderate gate-noise far from its noise-threshold, the performance of Flag-QEC declines sharply.
For larger codes, more gates and more qubits will be required to perform Flag-QEC, which will result in an even sharper decrease in performance with increasing gate-noise.
Thus, in this far-from-threshold low-resource regime precluding fault-tolerance, a situation typical in current hardware, we expect the Pauli-Filter to have the advantage.

\color{black}



\newcommand{\CDC}[1]{Proceedings of the #1 IEEE Conference on Decision and
  Control}

\end{document}